\definecolor{White}{rgb}{1,1,1} %
\definecolor{Black}{rgb}{0,0,0} %
\definecolor{LightGray}{rgb}{.8,.8,.8} %
\colorlet{ChannelColor}{LightGray} %
\colorlet{ChannelTextColor}{Black} %
\colorlet{ReadoutColor}{White} %
\newcommand{\cE}{\mathcal E}
\newcommand{\cU}{\mathcal U}
\newcommand{\cC}{\mathcal C}
\newcommand{\cA}{\mathcal A}
\newcommand{\cP}{\mathcal P}
\newcommand{\cS}{\mathcal S}
\newcommand{\cH}{\mathcal H} %
\newcommand{\N}{\mathbb{N}}
\newcommand{\sA}{{\mathsf{A}}}
\newcommand{\sB}{{\mathsf{B}}}
\newcommand{\sC}{{\mathsf{C}}}
\newcommand{\sF}{{\mathsf{F}}}
\newcommand{\sM}{{\mathsf{M}}}
\newcommand{\sN}{{\mathsf{N}}}
\newcommand{\sO}{{\mathsf{O}}}
\newcommand{\sP}{{\mathsf{P}}}
\newcommand{\sR}{{\mathsf{R}}}
\newcommand{\sV}{{\mathsf{V}}}
\newcommand{\innerCode}{\cC_{inner}}
\newcommand{\outerCode}{\cC_{outer}}
\newcommand{\innerEnc}{\mathsf{Enc}_{inner}}
\newcommand{\outerEnc}{\mathsf{Enc}_{outer}}
\newcommand{\Enc}{\mathsf{Enc}}
\newcommand{\unary}{\mathsf{unary}}
\newcommand{\Toffoli}{\mathsf{Toffoli}}
\newcommand{\Magic}{\mathsf{Magic}}
\DeclareMathOperator{\Steane}{Steane}
\newcommand{\mcC}{\mathcal{C}}
\newcommand{\mcG}{\mathcal{G}}
\newcommand{\mcP}{\mathcal{P}}
\newcommand{\mcS}{\mathcal{S}}
\newcommand{\mcU}{\mathcal{U}}
\newcommand{\C}{\mathbb{C}}
\newcommand{\arr}{\rightarrow}
\newcommand{\tr}{\Tr}
\newcommand{\Sim}{\mathrm{Sim}}
\newcommand{\SimSnapshot}{\mathrm{SimSnapshot}}
\newcommand{\SimDensity}{\mathrm{SimDensity}}
\newcommand{\SimInterval}{\mathrm{SimInterval}}
\newcommand{\View}{\mathrm{View}}
\newcommand{\NEEXP}{\class{NEEXP}}
\newcommand{\SZKMIP}{\class{SZK\text{-}MIP}}
\newcommand{\PZKMIP}{\class{PZK\text{-}MIP}}
\renewcommand{\hat}[1]{\widehat{#1}} %
\renewcommand{\tilde}[1]{\widetilde{#1}} %
\newcommand{\Id}{\mathds{I}}
\newcommand{\comp}[1]{\overline{#1}}
\newcommand{\reg}[1]{{\textsf{#1}}}
\newcommand{\advR}{\adv{R}}
\newcommand{\advRna}{\adv{R}^{na}}
\newcommand{\SimR}{\Sim_{\advR}}
\newcommand{\SimRna}{\Sim_{\advR^{na}}}
\newcommand{\complex}{\mathbb{C}} %
\let\epsilon=\varepsilon %
\newcommand{\eps}{\epsilon} %
\newcommand{\microspace}{\mspace{.5mu}} %
\newcommand{\ket}[1]{\ensuremath{\lvert\microspace #1
    \microspace\rangle}} %
\newcommand{\bra}[1]{\ensuremath{\langle\microspace #1
    \microspace\rvert}} %
\newcommand{\ketbra}[2]{\lvert #1 \rangle \! \langle #2 \rvert} %
\newcommand{\kb}[1]{\ketbra{#1}{#1}} %
\newcommand{\Paren}[1]{\left(#1\right)}
\newcommand{\Bigbrac}[1]{\Big[#1\Big]}
\newcommand{\set}[1]{\{#1\}}
\newcommand{\comment}[1]{}
\newcommand{\class}[1]{\mathsf{#1}\xspace} %
\newcommand{\NEXP}{\class{NEXP}} %
\newcommand{\QMA}{\class{QMA}} %
\newcommand{\QMIP}{\class{QMIP}} %
\newcommand\QMIP*{\ensuremath{\class{QMIP}^*}} %
\newcommand{\MIP}{\class{MIP}} %
\newcommand\MIP*{\ensuremath{\class{MIP}^*}} %
\newcommand{\NTIME}{\class{NTIME}}
\newtheorem{theorem}{Theorem} %
\newtheorem{lemma}[theorem]{Lemma} %
\newtheorem{proposition}[theorem]{Proposition} %
\newtheorem{corollary}[theorem]{Corollary} %
\newtheorem{definition}[theorem]{Definition} %
\newtheorem{claim}[theorem]{Claim} %
\newcommand{\adv}[1]{\hat{#1}}
\DeclareMathOperator{\Tr}{Tr}
\DeclareMathOperator{\poly}{poly}
\title{Perfect zero knowledge for \\ quantum multiprover interactive proofs}
\author{Alex B. Grilo~\thanks{CWI and QuSoft, Amsterdam, The Netherlands.
alexg@cwi.nl} \and
William Slofstra~\thanks{IQC and Department of Pure Mathematics, University of Waterloo, Waterloo, Canada. weslofst@uwaterloo.ca} \and
Henry Yuen~\thanks{University of Toronto, Toronto, Canada. hyuen@cs.toronto.edu}}
\date{}
\begin{document}

\maketitle

\begin{abstract}
In this work we consider the interplay between multiprover interactive proofs,
  quantum entanglement, and zero knowledge proofs --- notions that are central
  pillars of complexity theory, quantum information and cryptography. In particular, we study the relationship between the complexity class $\MIP^*$, the set of languages decidable by multiprover interactive proofs with quantumly entangled provers, and the class $\PZKMIP^*$, which is the set of languages decidable by $\MIP^*$ protocols that furthermore possess the \emph{perfect zero knowledge} property. 

Our main result is that the two classes are equal, i.e., $\MIP^* = \PZKMIP^*$. This result provides a quantum analogue of the celebrated result of Ben-Or, Goldwasser, Kilian, and Wigderson (STOC 1988) who show that $\MIP = \PZKMIP$ (in other words, all classical multiprover interactive protocols can be made zero knowledge). We prove our result by showing that every $\MIP^*$ protocol can be efficiently transformed into an equivalent zero knowledge $\MIP^*$ protocol in a manner that preserves the completeness-soundness gap. Combining our transformation with previous results by Slofstra (Forum of Mathematics, Pi 2019) and Fitzsimons, Ji, Vidick and Yuen (STOC 2019), we obtain the corollary that all co-recursively enumerable languages (which include undecidable problems as well as all decidable problems) have zero knowledge $\MIP^*$ protocols with vanishing promise gap.

\end{abstract}

\newpage

\section{Introduction}

Multiprover interactive proofs (MIPs) are a model of
computation where a probabilistic polynomial time verifier interacts with several
all-powerful --- but non-communicating --- provers to check the validity of a statement (for example, whether a quantified boolean formula is satisfiable). If the statement is true,
then there is a strategy for the provers to convince the verifier of this fact.
Otherwise, for all prover strategies, the verifier
rejects with high probability. This gives rise to the complexity class $\MIP$,
which is the set of all languages that can be decided by MIPs. This model of computation was first introduced by Ben-Or,
Goldwasser, Kilian and Wigderson~\cite{BenOrGKW88}. A foundational result in complexity theory due to Babai, Fortnow, and Lund shows
that multiprover interactive proofs are surprisingly powerful: $\MIP$ is
actually equal to the class of problems solvable in non-deterministic
exponential time, i.e., $\MIP = \NEXP$~\cite{BabaiFL91}.

Research in quantum complexity theory has led to the study of \emph{quantum}
MIPs. In one of the most commonly considered models, the verifier interacts with
provers that are \emph{quantumly entangled}. Even though the provers still
cannot communicate with each other, they can utilize correlations arising from
local measurements on entangled quantum states. Such correlations cannot be
explained classically, and the study of the counter-intuitive nature of these
correlations dates back to the famous 1935 paper of Einstein, Podolsky and
Rosen~\cite{EPR35} and the seminal work of Bell in 1964~\cite{Bell64}. Over the
past twenty years, MIPs with entangled provers have provided a fruitful
computational lens through which the power of such correlations can be studied. The set of languages decidable by such interactive proofs is denoted by $\MIP^*$, where the asterisk denotes the use of entanglement.

Finally, another type of interactive proof system are zero knowledge proofs. These were introduced by Goldwasser, Micali and Rackoff~\cite{GoldwasserMR89} and have played a crucial role in the development of theoretical cryptography. In this model, if the claimed statement is indeed true, the interaction between the verifier and prover  must be conducted in such a way that the verifier learns \emph{nothing else aside from the validity of the statement}.
This is formalized by requiring the existence of an efficient {\em simulator}
whose output is indistinguishable from the distribution of the messages in a real execution of the protocol. It was shown by~\cite{BenOrGKW88} that any (classical) MIP protocol can be transformed into an equivalent perfect zero knowledge\footnote{The term {\em perfect}
refers to the property that the interaction in a real protocol can be simulated without any error.} MIP protocol. In other words, the complexity classes $\MIP$ (and thus $\NEXP$) and $\PZKMIP$ are equal, where the latter consists of all languages decidable by perfect zero knowledge MIPs.

Informally stated, our main result is a quantum analogue of the result of Ben-Or, Goldwasser, Kilian, and Wigderson~\cite{BenOrGKW88}: we show that
\begin{center}
	\emph{Every MIP* protocol can be efficiently transformed into an equivalent zero knowledge MIP* protocol.}
\end{center}
Phrased in complexity-theoretic terms, we show that $\MIP^* = \PZKMIP^*$. This is a strengthening of the recent results of Chiesa, Forbes, Gur and Spooner, who show that $\NEXP = \MIP \subseteq \PZKMIP^*$~\cite{ChiesaFGS18} (which is, in turn, a strengthening of the the result of Ito and Vidick that $\NEXP \subseteq \MIP^*$~\cite{ItoV12}). %

Surprisingly, there are no upper bounds known on the power of quantum MIPs. The recent spectacular result of Natarajan and Wright shows that $\MIP^*$ contains the complexity class $\NEEXP$, which is the enormously powerful class of problems that can be solved in non-deterministic \emph{doubly exponential} time~\cite{natarajan2019neexp}. Since $\NEXP \neq \NEEXP$ via the non-deterministic time hierarchy theorem~\cite{cook1973hierarchy}, this unconditionally shows that quantum MIPs are strictly more powerful than classical MIPs. Furthermore, it is conceivable that $\MIP^*$ even contains \emph{undecidable} languages. In~\cite{slofstra2016tsirelson,slofstra2019set}, Slofstra proved that determining
whether a given MIP* protocol admits a prover strategy that wins with certainty
is an undecidable problem. In~\cite{FitzsimonsJVY18}, Fitzsimons, Ji, Vidick and Yuen showed that the class $\MIP^*_{1,1-\eps(n)}$, the set of languages decidable by MIPs protocols with \emph{promise gap} $\eps(n)$ that can depend on the input size, contains $\NTIME[2^{\poly (1/\eps(n))}]$, the class of problems that are solvable in non-deterministic time $2^{\poly  (1/\eps(n))}$. In contrast, the complexity of $\MIP$ (even with a shrinking promise gap) is always equal to $\NEXP$.

Thus, our result implies that all languages in $\NEEXP$ -- and any larger complexity classes discovered to be contained within $\MIP^*$ -- have perfect zero knowledge interactive proofs with entangled provers.
In fact, we prove a stronger statement:
every $\MIP^*$ protocol with promise gap $\eps$ also has an equivalent zero knowledge $\MIP^*$ protocol
with promise gap that is polynomially related to $\eps$. This, combined with the
results of~\cite{FitzsimonsJVY18} and~\cite{slofstra2019set}, implies that languages of arbitrarily large time complexity -- including some undecidable problems -- have zero knowledge proofs (albeit with vanishing promise gap). 

\subsection{Our results}
\label{sec:results}
We state our results in more detail. Let $\MIP^*_{c,s}[k,r]$ denote the set of languages $L$ that admit $k$-prover, $r$-round MIP* protocols with completeness $c$, and soundness $s$. In other words, there exists a probabilistic polynomial-time verifier $V$ that interacts with $k$ entangled provers over $r$ rounds so that if $x \in L$, then there exists a prover strategy that causes $V(x)$ to accept with probability at least $c$; \footnote{Technically speaking, the completeness condition actually corresponds to a \emph{sequence} of prover strategies with success probability approaching $c$; we discuss this subtlety in Section~\ref{sec:prelim-zk}.} otherwise all prover strategies cause $V(x)$ to accept with probability strictly less than $s$. The class $\PZKMIP^*_{c,s}[k,r]$ are the languages that have $\MIP^*[k,r]$ protocols where the interaction between the verifier can be simulated exactly and efficiently, without the aid of any provers. We provide formal definitions of
these complexity classes in Section~\ref{sec:complexity}.

In what follows, let $n$ denote the input size. The parameters $k,r,s$ of a protocol are also allowed to depend on the input size. In this paper, unless stated otherwise, we assume that completeness parameter $c$ in a protocol is equal to $1$.

\begin{restatable}{theorem}{mainthm}
  \label{thm:main}
For all $0 \leq s \leq 1$, for all polynomially bounded functions $k, r$, 
\[
	\MIP^*_{1,s}[k,r] \subseteq \PZKMIP^*_{1,s'}[k + 4,1]
\]
where $s' = 1 - (1-s)^\alpha$ for some universal constant $\alpha > 0$.
\end{restatable}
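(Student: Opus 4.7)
The plan is to give a black-box transformation of any $\MIP^*_{1,s}[k,r]$ protocol into a one-round protocol with $k+4$ provers that is perfect zero knowledge, losing only a polynomial factor in the soundness gap. I would organize the construction into three stages: (i) reducing the input protocol to a convenient one-round normal form, (ii) building a quantum-secure multi-prover commitment scheme out of the four new provers, and (iii) combining them through a commit-challenge-reveal structure whose transcript is efficiently simulable without any witness. Because $\MIP^*$ contains languages far beyond $\NEXP$, the transformation must be black-box in the protocol rather than going through any PCP-like encoding, which is what distinguishes this from the approach of Chiesa--Forbes--Gur--Spooner.

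First, I would reduce to a one-round protocol in which the verifier sends each original prover a classical question and receives a short classical answer, and its decision is a predicate on the joint transcript; standard oracularization and answer-truncation techniques do this with at most polynomial loss in $1-s$. Second, I would equip the verifier with a commitment scheme using the four extra provers, where two provers act as commitment holders and two act as consistency witnesses. The honest strategy uses pre-shared entanglement (e.g., EPR pairs) so that (a) the commitment phase leaks no classical information about the committed value, giving perfect hiding; (b) a rigidity subtest, like a Magic Square or $\mathrm{CHSH}$-based self-test, forces cheating provers into a near-unique entangled strategy, giving binding against arbitrary entanglement; and (c) the simulator, which controls both the verifier's randomness and the internal state of the commitment provers in the ideal world, can \emph{equivocate} openings to any value, which is what enables \emph{perfect} (rather than merely statistical) simulation.

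The combined ZK protocol is then a single round: with some probability the verifier runs the rigidity test on the four extra provers; otherwise the original provers route their intended answers through the commitment provers, and the verifier asks to open only those commitments needed to evaluate the original decision predicate. Completeness holds because honest provers can simulate the original strategy through the commitment; soundness follows from combining the soundness of the original protocol with the binding property of the commitment, yielding a polynomial relation between the two gaps; perfect zero knowledge follows because the simulator commits to dummy values and then equivocates to any accepting transcript, and hiding ensures the simulated view is distributed identically to the real view. The main obstacle is constructing a commitment that simultaneously has entanglement-resistant binding and an equivocation property strong enough for perfect zero knowledge; a secondary hurdle is ensuring that the quantitative rigidity bounds translate the commitment error into a polynomial (rather than, say, sub-polynomial) degradation of the soundness gap, uniformly in $s$.
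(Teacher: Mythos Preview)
Your commit-challenge-reveal approach has a fundamental gap in the zero-knowledge step. You write that the simulator ``commits to dummy values and then equivocates to any accepting transcript,'' but equivocation only lets the simulator change what it opens---it does not tell the simulator what to open \emph{to}. The opened values must be distributed exactly as the original provers' answers, and that distribution is determined by their entangled strategy; for languages in $\NEEXP$ (or the undecidable languages that $\MIP^*$ can reach) this distribution encodes a witness that no polynomial-time simulator can sample. You correctly note that one must avoid going through a PCP-style encoding for this reason, but the commitment paradigm does not escape the problem: it merely shifts the hard step from ``compute a witness'' to ``sample the honest answer distribution,'' which is equally infeasible. This is exactly why the classical BGKW technique does not extend black-box to $\MIP^*$, and it is the obstruction the paper is built around. (A secondary issue: reducing $r$ rounds to one by ``standard oracularization'' with no additional provers and only polynomial gap loss is not a known operation for $\MIP^*$.)

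The paper takes a completely different route in which the simulator never needs to produce accepting content. It first rewrites the verifier $V$ as a fault-tolerant circuit $V_{enc}$ that computes on data encoded in a ``simulatable'' quantum error-correcting code (the concatenated Steane code), and then applies the protocol compression of Ji and Fitzsimons--Ji--Vidick--Yuen to $V_{enc}$, yielding a one-round energy test in which players are asked to perform constant-weight Pauli measurements on a shared history state of $V_{enc}$. The four extra players arise from a $[[4,1]]$ outer code used to distribute the verifier's portion of that history state. Zero knowledge holds because the inner code is chosen so that the reduced density matrix of any encoded state on a constant number of physical qubits---even mid-way through a logical gate---is \emph{independent of the logical data} and explicitly computable (Theorem~\ref{T:simulatable}); combined with a careful treatment of the clock register, the prover-reflection observables, and the output-decoding phase, this lets the simulator compute the exact answer distribution from the code structure alone. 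There is no commitment scheme and no equivocation; the honest answers are efficiently simulable by design, which is the key idea your proposal is missing.
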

The first corollary of Theorem~\ref{thm:main} concerns what we call \emph{fully quantum} MIPs, which are multiprover interactive proofs where the verifier can perform polynomial time quantum computations and exchange quantum messages with entangled quantum provers. The set of languages decidable by fully quantum MIPs is denoted by $\QMIP$, which clearly includes $\MIP^*$. Reichardt, Unger, and Vazirani~\cite{ReichardtUV13} showed that the reverse inclusion also holds by adding two additional provers; i.e., that $\QMIP[k] \subseteq \MIP^*[k+2]$. Combined with Theorem~\ref{thm:main} and the fact that we can assume that $\QMIP$ protocols have perfect completeness if we add an additional prover (see~\cite{vidick2016quantum}), this implies that
\begin{corollary}
  For all polynomially bounded functions $k, r$, we have 
  \[
  	\QMIP_{1,\frac{1}{2}}[k,r] \subseteq \PZKMIP_{1,\frac{1}{2}}^*[k+4,1].
\]
\end{corollary}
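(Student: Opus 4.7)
The plan is to simply chain together three existing transformations, as foreshadowed in the paragraph just before the corollary. Starting from an arbitrary language $L \in \QMIP_{1,1/2}[k,r]$, the first step is to appeal to the perfect-completeness amplification for fully-quantum MIPs from~\cite{vidick2016quantum}; since the class as written already has $c=1$, this step is either vacuous or costs a single extra prover depending on how strictly one reads the hypothesis. The second step is to invoke the Reichardt--Unger--Vazirani simulation $\QMIP[k] \subseteq \MIP^*[k+2]$~\cite{ReichardtUV13}, which converts the quantum-message protocol into a classical-message entangled protocol at the cost of two extra provers while preserving perfect completeness. The third step is to apply Theorem~\ref{thm:main} to the resulting $\MIP^*$ protocol, which collapses the interaction to a single round and endows it with the perfect zero-knowledge property at the cost of (at most) four additional provers.

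The one piece of bookkeeping that requires care is the soundness. Theorem~\ref{thm:main} transforms soundness $s$ into $s' = 1 - (1-s)^\alpha$, which is not automatically at most $1/2$ when $s = 1/2$. However, standard parallel repetition (or any other soundness amplification available for entangled-prover protocols) can be applied at the $\MIP^*$ stage to drive the intermediate soundness arbitrarily close to $0$ without increasing the prover count; in particular, we may arrange before invoking Theorem~\ref{thm:main} that $(1-s)^\alpha \ge 1/2$, which yields the claimed final soundness $1/2$. Completeness is easier to track since each of the three transformations preserves $c=1$.

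No new conceptual content is required beyond Theorem~\ref{thm:main} itself. The only minor obstacle is checking that the transformations compose cleanly: the output of the RUV reduction must be a verifier of the standard form to which Theorem~\ref{thm:main} applies, and the intermediate soundness-amplification step must not disturb any structural property assumed by the hypotheses of the main theorem. Because all of this is routine once Theorem~\ref{thm:main} is in hand, I expect the written proof to be at most a few lines of composition and parameter manipulation.
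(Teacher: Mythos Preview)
Your proposal is correct and matches the paper's own sketch (the paragraph immediately preceding the corollary): chain perfect completeness (vacuous here since $c=1$), the RUV reduction $\QMIP[k]\subseteq\MIP^*[k+2]$, and Theorem~\ref{thm:main}, with soundness amplification to restore the $\tfrac12$ gap. The only wrinkle---shared by the paper's sketch---is that the na\"ive prover count from this composition is $k+2+4=k+6$ rather than the stated $k+4$; obtaining $k+4$ exactly would require applying the zero-knowledge compression directly to the QMIP protocol circuit (which the paper's machinery does support, since the compression of~\cite{Ji17,FitzsimonsJVY18} and the fault-tolerant encoding of Section~\ref{sec:transformation} are defined for arbitrary protocol circuits), but neither argument spells this out.
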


The combination of the results in~\cite{FitzsimonsJVY18}
and~\cite{natarajan2019neexp} implies that
for every
hyper-exponential function $f$,\footnote{A hyper-exponential function $f(n)$ is of the form $\exp(\cdots \exp(\poly(n)) \cdots)$, where the number of iterated exponentials is $R(n)$ for some time-constructible function $R(n)$.} we have that
\[\NTIME[2^{2^{f(n)}}] \subseteq \MIP^*_{1,s}[4,1],\]
where $\NTIME[g(n)]$ denotes the set of languages that can be decided by
nondeterministic Turing machines running in time $g(n)$ and $s =
1-Cf(n)^{-c}$ for some universal constants $C$ and $c$, independent of $n$.\footnote{The original result in~\cite{FitzsimonsJVY18} states that for all hyper-exponential functions $f(n)$, $\NTIME[2^{f(n)}] \subseteq \MIP^*_{1,s}[15,1]$ for $s = 1 = C f(n)^{-c}$. Using a more efficient error correcting code as described in Section~\ref{sec:differences}, the number of provers can be reduced to $4$. The improvement from $\NTIME[2^{f(n)}]$ to $\NTIME[2^{2^{f(n)}}]$ is obtained by plugging in the $\NEEXP \subseteq \MIP^*$ result of Natarajan and Wright~\cite{natarajan2019neexp} as the ``base case'' of the iterated compression scheme, instead of the $\NEXP \subseteq \MIP^*$ result of Natarajan and Vidick~\cite{NatarajanV18a}.}
Combining this with \Cref{thm:main}, we obtain the following.

\begin{corollary}
\label{cor:iterated}
	There exist universal constants $C, c > 0$ such that for all hyper-exponential functions $f: \N \to \N$,
	\[
    \NTIME[2^{2^{f(n)}}] \subseteq \PZKMIP^*_{1,s}[6,1]
	\]
  where $s =  1 - C f(n)^{-c}$.
\end{corollary}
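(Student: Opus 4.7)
The plan is to obtain Corollary~\ref{cor:iterated} as a direct composition of Theorem~\ref{thm:main} with the $\MIP^*$ containment for hyper-exponential nondeterministic time that is described in the paragraph immediately preceding the corollary. That paragraph, by combining the iterated compression scheme of Fitzsimons, Ji, Vidick, and Yuen~\cite{FitzsimonsJVY18} with the $\NEEXP \subseteq \MIP^*$ base case of Natarajan and Wright~\cite{natarajan2019neexp}, already yields that for every hyper-exponential $f$ one has $\NTIME[2^{2^{f(n)}}] \subseteq \MIP^*_{1, s_0}[k_0, 1]$ with $s_0 = 1 - C_0\, f(n)^{-c_0}$ for universal constants $C_0, c_0 > 0$ and a small constant number of provers $k_0$.

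Given such a protocol, I would then feed it into the transformation of Theorem~\ref{thm:main}, obtaining a perfect zero knowledge $\MIP^*$ protocol with perfect completeness, one round, $k_0 + 4$ provers, and soundness
\[
s \;=\; 1 - (1-s_0)^{\alpha} \;=\; 1 - C_0^{\alpha}\, f(n)^{-\alpha c_0},
\]
where $\alpha > 0$ is the universal constant supplied by the theorem. Setting $C := C_0^{\alpha}$ and $c := \alpha c_0$ gives exactly the target soundness bound $s = 1 - C\, f(n)^{-c}$, and perfect completeness together with the single-round structure are preserved verbatim from Theorem~\ref{thm:main}.

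The only nontrivial bookkeeping step is matching the stated prover count: Theorem~\ref{thm:main} produces $k_0 + 4$ provers, so the corollary's bound of $6$ provers requires the base $\MIP^*$ protocol to use $k_0 \le 2$ provers. I would address this by either invoking the most aggressive prover-reduction variant of the compressed FJVY+NW pipeline (exploiting the oracularization/self-testing ideas that already underpin Reichardt–Unger–Vazirani and the compression results in~\cite{FitzsimonsJVY18}) to push the base case down to the required number of provers, or by tracking through the footnote's more efficient error-correcting code to save on the additional provers introduced in the zero knowledge transformation. This prover-counting is the main obstacle; all of the analytic content -- preservation of completeness, the polynomial degradation of the soundness gap, and the zero knowledge property -- is already packaged inside Theorem~\ref{thm:main}, so no further proof work is required beyond a careful accounting of constants.
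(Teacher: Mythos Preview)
Your approach is exactly what the paper does: the corollary is obtained by a one-line composition of the $\NTIME[2^{2^{f(n)}}] \subseteq \MIP^*_{1,s_0}[k_0,1]$ containment (from~\cite{FitzsimonsJVY18,natarajan2019neexp}) with Theorem~\ref{thm:main}, and your computation of the resulting soundness $s = 1 - (1-s_0)^{\alpha} = 1 - C f(n)^{-c}$ is correct.

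The prover-count issue you flag is real and is not cleanly resolved in the paper either. The footnote preceding the corollary explicitly states that the base $\MIP^*$ protocol uses $4$ provers (after swapping in the $4$-qubit outer code), and Theorem~\ref{thm:main} adds $4$, which would give $8$ rather than $6$; the same arithmetic applied to Corollary~\ref{cor:nonhalt} (starting from $\MIP^*_{1,1}[2,1]$) yields $6$ rather than the stated $4$. So the discrepancy you noticed is a numerical inconsistency in the paper's stated constants, not a gap in your argument. Your instinct to look for a sharper prover reduction in the base protocol or in the outer-code accounting is reasonable, but the paper itself does not spell out how the stated counts are achieved; treat the specific number $6$ as a claimed constant rather than something you need to independently derive.
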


Finally, it was also shown in~\cite{FitzsimonsJVY18,slofstra2019set} that the undecidable language $\mathrm{NONHALT}$, which consists of Turing machines that do not halt when run on the empty input tape, is contained in $\MIP^*_{1,1}[2,1]$. The ``$1,1$'' subscript
indicates that for negative instances (i.e., Turing machines that do halt), the
verifier rejects with positive probability. In more detail: there exists a polynomial time computable function that maps Turing machines $M$ to an $\MIP^*$ protocol $V_M$ such that if $M$ does not halt on the empty input tape, then there is a prover strategy for $V_M$ that is accepted with probability $1$; otherwise there exists a positive constant $\eps > 0$ (depending on $M$) such that for all prover strategies, the protocol $V_M$ rejects with probability $\eps$. 

Theorem~\ref{thm:main} implies there is a polynomial time computable mapping $V_M \mapsto V_M'$ such that $V_M'$ is a $\PZKMIP^*$ protocol that preserves completeness (if $V_M$ accepts with probability $1$, then so does $V_M'$) and soundness (if $V_M$ rejects with probability $\eps$ for all prover strategies, then $V_M'$ rejects with probability $\poly(\eps)$ for all prover strategies). Therefore, we can conclude the following:
\begin{corollary}
\label{cor:nonhalt}
	$\mathrm{NONHALT} \in \PZKMIP^*_{1,1}[4,1]$.
\end{corollary}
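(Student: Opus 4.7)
The plan is to apply Theorem~\ref{thm:main} directly to the $\MIP^*$ protocol for $\mathrm{NONHALT}$ described in the paragraph preceding the corollary. First I would recall the result of~\cite{FitzsimonsJVY18, slofstra2019set}: there is a polynomial-time computable map $M \mapsto V_M$ producing a two-prover, one-round $\MIP^*$ verifier such that, if $M$ does not halt on the empty input tape, there is a prover strategy accepted with probability $1$, while if $M$ halts, every prover strategy is rejected with probability at least some $\eps_M > 0$ (where $\eps_M$ may depend on $M$). This places $\mathrm{NONHALT} \in \MIP^*_{1,1}[2,1]$.

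Next I would feed $V_M$ into the efficient transformation guaranteed by Theorem~\ref{thm:main}. Since the transformation is polynomial time and uniform in the input verifier (it does not require knowledge of the promise gap), composing it with $M \mapsto V_M$ yields a polynomial-time computable map $M \mapsto V_M'$ whose output is a one-round, perfect zero knowledge $\MIP^*$ verifier on the appropriate number of provers, as prescribed by the theorem. Perfect completeness is preserved verbatim, because Theorem~\ref{thm:main} maps $c=1$ protocols to $c=1$ protocols. For soundness, the theorem sends $s = 1 - \eps_M$ to $s' = 1 - \eps_M^\alpha$ for the universal constant $\alpha > 0$; since $\eps_M > 0$ implies $\eps_M^\alpha > 0$, halting machines are still rejected with strictly positive probability by $V_M'$. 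This places $\mathrm{NONHALT}$ in $\PZKMIP^*_{1,1}$ with the claimed round and prover complexity.

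The one subtlety worth flagging is that the rejection probability $\eps_M$ in the $\mathrm{NONHALT}$ protocol is not a uniformly computable function of $M$ — one cannot effectively exhibit a lower bound for a given halting Turing machine. However, the definition of the class $\PZKMIP^*_{1,1}$ only requires the \emph{existence} of a positive soundness gap on no-instances, rather than a quantitative or uniform bound, so this causes no difficulty. There is no real obstacle: Theorem~\ref{thm:main} does all the heavy lifting (preserving perfect completeness, producing a polynomial soundness-gap blowup, and supplying the zero-knowledge simulator), and the corollary amounts to observing that these guarantees specialize correctly to the $\mathrm{NONHALT}$ reduction.
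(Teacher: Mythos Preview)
Your proposal is correct and follows exactly the paper's approach laid out in the paragraph preceding the corollary: start from $\mathrm{NONHALT} \in \MIP^*_{1,1}[2,1]$, apply the transformation of Theorem~\ref{thm:main} instance-by-instance, and observe that perfect completeness is preserved while any positive rejection probability $\eps_M$ is mapped to a positive rejection probability $\eps_M^\alpha$. Your remark that the non-uniformity of $\eps_M$ is harmless---since the definition of $\PZKMIP^*_{1,1}$ only requires the existence of a positive gap on no-instances---is a useful clarification the paper leaves implicit.
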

\noindent Corollary~\ref{cor:nonhalt} implies that all co-recursively enumerable languages (languages whose complement are recursively enumerable) have zero knowledge proofs (with vanishing gap).

\subsection{Proof overview}
\label{sec:overview}

The proof of Theorem~\ref{thm:main} draws upon a number of ideas and techniques that have been developed to study interactive protocols with entangled provers. At a high level, the proof proceeds as follows. Let $L$ be a language that is decided by some $k$-prover MIP* protocol with a verifier $V$. Assume for simplicity that on positive instances $x \in L$, there is a prover strategy that causes $V$ to accept with probability $1$, and otherwise rejects with high probability. Although $V$ is probabilistic polynomial time (PPT) Turing machine in a MIP* protocol, we can instead think of it as a quantum circuit involving a combination of verifier computations, and prover computations. 

First, we transform the verifier $V$ into an equivalent quantum circuit $V_{enc}$ where the computation is now performed on \emph{encoded data}. We do this using techniques from quantum fault-tolerance, where the data is protected using a quantum error correcting code, and physical operations are performed on the encoded data in order to effect logical operations on the underlying logical data. 

We then apply \emph{protocol compression} to $V_{enc}$ to obtain a new verifier $V_{ZK}$ for an equivalent protocol --- this will be our zero knowledge MIP* protocol. Protocol compression is a technique that was pioneered by Ji in~\cite{Ji17} (and further developed by Fitzsimons, Ji, Vidick and Yuen~\cite{FitzsimonsJVY18}) to show that $\NEXP$ has $1$-round $\MIP^*$ protocols where the communication is logarithmic length. Essentially, in the compressed protocol, the new verifier $V_{ZK}$ efficiently checks whether $V_{enc}$ would have accepted in the original protocol without actually having to run $V_{enc}$, by testing that the provers hold an entangled \emph{history state} of a successful interaction between $V_{enc}$ and some provers.

The reason this compressed protocol is zero knowledge is the following: the verifier $V_{ZK}$ asks the provers to report the outcomes of performing local measurements in order to verify that they hold an accepting history state. In the positive case (i.e., $x \in L$), there is an ``honest'' strategy where the provers share a history state $\ket{\Phi}$ of a successful interaction with $V_{enc}$. We argue that, because of the fault-tolerance properties of $V_{enc}$, individual local measurements on $\ket{\Phi}$ reveal no information about the details of the interaction. Put another way, the distribution of outcomes of honest provers' local measurements can be efficiently simulated, without the aid of any provers at all. Since we only require that this simulatability property holds with respect to honest provers, this establishes the zero knowledge property of the protocol run by $V_{ZK}$.

In the next few sections, we provide more details on the components of this transformation. We discuss things in reverse order: first, we give an overview of the protocol compression technique. Then, we discuss the fault tolerant encoding $V_{enc}$ of the original verifier $V$. Then we describe how applying protocol compression to $V_{enc}$ yields a zero knowledge protocol for $L$.

\subsubsection{Protocol compression}\label{sec:intro-compression}

The protocol compression technique of~\cite{Ji17,FitzsimonsJVY18} transforms any $k$-prover, $r$-round QMIP protocol where the verifier $V$ runs in time $N$ into a $k+O(1)$-prover, $1$-round MIP* protocol where the verifier $V'$ runs in time $\poly \log N$. In other words, the verifier has been compressed into an exponentially more efficient one; however, this comes with the price of having the promise gap shrink as well: if the promise gap of the original QMIP protocol is $\eps$, then the promise gap of the compressed protocol is $\poly(\eps/N)$. 

This compression is achieved as follows: in the protocol executed by the compressed verifier $V'$, the provers are tested to show that they possess an (encoding of) a \emph{history state} of the original protocol executed by $V$, describing an execution of the protocol in which the original verifier $V$ accepts. History states of some $T$-length computation generally look like the following:
\begin{equation*}
	\ket{\psi} = \frac{1}{\sqrt{T+1}} \sum_{t = 0}^T \ket{t} \otimes \ket{\psi_t}.
\end{equation*}
The first register holding the superposition over $\ket{t}$ is called the \emph{clock register}; the second register holding the superposition over $\ket{\psi_t}$ is called the \emph{snapshot register}. The $t$-th snapshot of the computation $\ket{\psi_t}$ is the global state of the protocol at time $t$:
\[
	\ket{\psi_t} = g_t g_{t-1} \cdots g_1 \ket{\psi_0}
\]
where the $g_i$'s are the gates used in the protocol, and $\ket{\psi_0}$ is the initial state of the protocol. Usually, each $g_i$ is a one- or two-qubit gate that is part of the verifier $V$'s computation. However, $g_i$ could also represent a \emph{prover gate}, which is the computation performed by one of the $k$ provers. Unlike gates in the verifier's computation, the prover gates are non-local, and there is no characterization of their structure. In general, they may have exponential circuit complexity, and may act on a Hilbert space that can be much larger than the space used by the verifier $V$.

This notion of history states for interactive protocols is a generalization of
the basic concept of history states for quantum circuits, which was introduced
by Kitaev to prove that the local Hamiltonians problem is
$\QMA$-complete~\cite{Kitaev02}. He showed that for every $\QMA$ verifier circuit $C$, there exists a local Hamiltonian $H(C)$ (called the Feynman-Kitaev Hamiltonian) such that all ground states of $H(C)$ are history states of the circuit $C$. To test whether a given state $\ket{\phi}$ is a history state of $C$, one can sample random terms from $H(C)$ and measure them to get an estimate of the energy of $\ket{\phi}$ with respect to $H(C)$. 

In slightly more detail, the local Hamiltonian $H(C)$ consists of terms that can be divided into four groups:
\begin{itemize}
	\item \emph{Input checking terms} $H_{in}$. These terms check that the initial snapshot $\ket{\psi_0}$, which represents the initial state of the $\QMA$ verifier, has all of its ancilla bits set to zero.
	
	\item \emph{Clock checking terms} $H_{clock}$. These terms check that the clock register is encoded in unary. The unary encoding is to ensure that the locality of $H(C)$ is a fixed constant independent of the computation.
	
	\item \emph{Propagation terms} $H_{prop}$. These terms check that the history state is a uniform superposition over snapshots $\ket{\psi_t}$, with $\ket{\psi_t} = g_t \ket{\psi_{t-1}}$.
	
	\item \emph{Output checking terms} $H_{out}$. These terms check that at time $t = T$, the decision bit of the $\QMA$ verifier is equal to $\ket{1}$ (i.e., the verifier accepted).
\end{itemize}

In~\cite{Ji17}, Ji showed that for every quantum interactive protocol $\Pi$, there is a \emph{generalized protocol Hamiltonian} $H(\Pi)$ whose ground states are all history states of $\Pi$. The Hamiltonian $H(\Pi)$ is essentially the Feynman-Kitaev Hamiltonian corresponding to the verifier $V$, except if at time $t$ in the protocol $\Pi$, prover $i$ is supposed to implement a unitary $g_t$ on their registers (which includes their private registers as well as some registers used to communicate with the verifier), then there will be a corresponding non-local propagation term
\begin{equation}
\label{eq:prop_term}
	\frac{1}{2} \Paren{ \ket{t-1} \otimes I - \ket{t} \otimes g_t}\Paren{ \bra{t-1} \otimes I - \bra{t} \otimes g_t^\dagger}.
\end{equation}
This term is non-local because of the prover gate $g_t$, which may act on a Hilbert space of unbounded size. Other than these prover propagation terms, the rest of $H(\Pi)$ corresponds to the local computations performed by the verifier $V$. 

Suppose that one had the ability to sample random terms of $H(\Pi)$ and efficiently measure a given state with the terms. Then, by performing an energy test on a state $\ket{\psi}$, one could efficiently determine whether the state was close to a history state that describes an accepting interaction in the protocol $\Pi$. This appears to be a difficult task for terms like~\eqref{eq:prop_term} when $g_t$ is a prover gate, since this requires performing a complex non-local measurement. Furthermore, the tester would not know what prover strategy to use.

Ji's insight in~\cite{Ji17} was that a tester could efficiently \emph{delegate} the energy measurements to entangled quantum provers. He constructs a protocol where the verifier $V'$ commands the provers to perform measurements corresponding to random terms of $H(\Pi)$ on their shared state. If the reported energy is low, then $V'$ is convinced that there must exist a history state of $\Pi$ that describes an accepting interaction (and in particular, the provers share this history state).  

In order to successfully command the provers, the verifier $V'$ relies on a
phenomena called \emph{non-local game rigidity} (also known as
\emph{self-testing}). Non-local games are one-round protocols between a
classical verifier and multiple entangled provers. This phenomena is best
explained using the famous CHSH game, which is a two-player game where the
optimal entangled strategy succeeds with probability $\omega^*(CHSH) =
\frac{1}{2} + \frac{1}{\sqrt{2}}$. The canonical, textbook strategy for CHSH is
simple: the two players share a maximally entangled pair of qubits, and measure
their respective qubits using the Pauli observables $\sigma_X$ and $\sigma_Z$, depending on their input. The rigidity property of the CHSH game implies that this canonical strategy is, in some sense, \emph{unique}: \emph{any} optimal entangled strategy for CHSH must be, up to a local basis change, identical to this canonical strategy. Thus we also say that the CHSH game is a \emph{self-test} for a maximally entangled pair of qubits and single-qubit Pauli measurements for the players.

There has been extensive research on rigidity of non-local
games~\cite{ReichardtUV13,CoudronN16,McKague2017,Coladangelo17,Chao2018,NatarajanV17,NatarajanV18,ColadangeloGJV19}, and many different self-tests have been developed. The non-local games used in the compression protocols of~\cite{Ji17,FitzsimonsJVY18} are variants of the CHSH game, where the canonical optimal strategy is roughly the following: the players share a maximally entangled state on $n$ qubits, and their measurements are tensor products of Pauli observables on a constant number of those $n$ qubits, such as
\[
	\sigma_X(i) \otimes \sigma_Z(j) \otimes \sigma_Z(k).
\]
which indicates $\sigma_X$ acting on the $i$'th qubit, and $\sigma_Z$ on the $j$'th
and $k$'th. This game also has the following robust self-testing guarantee: any
entangled strategy that succeeds with probability $1- \eps$ must be
$\poly(\eps,n)$-close to the canonical strategy. Here, $n$ is a growing parameter, whereas the weight of the Pauli observables (i.e. the number of factors that don't act as the identity) is at most some constant independent of $n$.

For the terms of $H(\Pi)$ that involve uncharacterized prover gates, the verifier $V'$ simply asks some provers to measure the observable corresponding to the prover gate. By carefully interleaving rigidity tests with the energy tests, the verifier $V'$ can ensure that the provers are performing the desired measurements for all other terms of $H(\Pi)$, and thus test if they have an accepting history state.

\subsubsection{Quantum error correction and fault tolerant verifiers}
\label{sec:qecc_intro}

In order to describe our fault tolerant encoding of verifiers, we first discuss quantum error correction and fault tolerant quantum computation. 

Quantum error correcting codes (QECCs) provide a way of encoding quantum information in a form that is resilient to noise. Specifically, a $[[n,k,d]]$ quantum code $\cC$ encodes all $k$-qubit states $\ket{\psi}$ into an $n$-qubit state $\Enc(\ket{\psi})$ such that for any quantum operation $\cE$ that acts on at most $(d-1)/2$ qubits, the original state $\ket{\psi}$ can be recovered from $\cE(\Enc(\ket{\psi}))$. The parameter $d$ is known as the \emph{distance} of the code $\cC$. %

QECCs are an important component of \emph{fault tolerant} quantum computation, which is a method for performing quantum computations in a way that is resilient to noise. In a fault tolerant quantum computation, the information $\ket{\psi}$ of a quantum computer is encoded into a state $\Enc(\ket{\psi})$ using some QECC $\cC$, and the computation operations are performed on the encoded data without ever fully decoding the state. 

For example, in many stabilizer QECCs, in order to compute $\Enc(g \ket{\psi})$ for some single-qubit Clifford gate $g$, it suffices to apply $g$ \emph{transversally}, i.e., apply $g$ on every physical qubit of $\Enc(\ket{\psi})$. Transversal operations are highly desirable in fault tolerant quantum computation because they spread errors in a controlled fashion.

Non-Clifford gates, however, do not admit a transversal encoding in most stabilizer QECCs. In order to implement logical non-Clifford gates, one can use {\em magic states}. These are states that encode the behaviour of some non-Clifford gate $g$ (such as a Toffoli gate, or a $\pi/8$ rotation), and are prepared and encoded before the computation begins. During the fault tolerant computation, the encoded magic states are used in \emph{gadgets} that effectively apply the non-Clifford $g$ to the encoded data. These gadgets only require measurements and transversal Clifford operations that are controlled on the classical measurement outcomes.

We now discuss the behaviour of the verifier $V_{enc}$. First, the encoded
verifier spends time manufacturing a collection of encoded ancilla states, as
well as encoded magic states of some non-Clifford gates (in our case, the
Toffoli gate), using some fixed quantum error correcting code $\cC$. We call
this the \textbf{Resource Generation Phase}. Then, the verifier $V_{enc}$
simulates the execution of $V$ on the encoded information from the Resource
Generation Phase. All Clifford operations of $V$ are performed transversally,
and non-Clifford operations of $V$ are performed with the help of the encoded
magic states. When interacting with the provers, the verifier $V_{enc}$ sends
its messages in encoded form as well -- the provers are capable of decoding and re-encoding messages using the code $\cC$. 

Finally, after the finishing the simulation of $V$, the verifier $V_{enc}$ executes an \textbf{Output Decoding Phase}: it performs a decoding procedure on the physical qubits corresponding to the output qubit of $V$.

It is clear that the protocol executed by $V_{enc}$ is equivalent to the protocol executed by $V$. The overhead introduced by this fault tolerant encoding is a constant factor increase in the length of the circuit (depending on the size of the code $\cC$). The fault tolerant properties of the computation of $V_{enc}$ will play a major role in our proof of zero knowledge.

\subsubsection{The zero knowledge protocol, and its analysis}
To distinguish between the parties of the ``inner'' protocol executed by $V_{enc}$ and the parties in the ``outer'' protocol executed by $V_{ZK}$, we say that $V_{enc}$ is a \emph{verifier} that interacts with a number of \emph{provers}. On the other hand, we say that $V_{ZK}$ is a \emph{referee} that interacts with a number of \emph{players}.

The zero knowledge protocol executed by $V_{ZK}$ consists of applying protocol compression to the fault tolerant verifier $V_{enc}$. The result is a MIP* protocol that checks whether the players possess a history state of an accepting interaction with $V_{enc}$ and some provers.

The formal definition of the zero knowledge property requires an efficient algorithm, called the simulator, that when given a yes instance (i.e., $x \in L$), produces an output that is identically distributed to the transcript produced by an interaction between the referee and players following a specified \emph{honest strategy}.
The interaction must be simulatable even when the referee doesn't follow the protocol.
A cheating referee could, for instance, sample questions differently than the honest referee, or interact with the
players in a different order. The only constraint we have is that the \emph{format} of the questions, from the perspective of an individual player, must look like something the honest referee \emph{could have} sent. In particular, if a cheating referee tries to interact with an individual player multiple times, the player would abort the protocol.

In the yes instance, the honest player strategy for $V_{ZK}$ consists of sharing a history state $\ket{\Phi}$ that describes the referee $V_{enc}$ interacting with some provers and accepting with probability $1$. When the players receive a question in $V_{ZK}$, they either measure some Pauli observable on a constant number of qubits of $\ket{\Phi}$, or measure the observable corresponding to a prover gate. The zero knowledge property of $V_{ZK}$ rests on the ability to efficiently sample the outcomes of measurements formed from \emph{any} combination of local Pauli observables and prover measurements that might be commanded by a cheating referee.

We first analyze \emph{non-adaptive} referees; that is, they sample the questions to all the players first. In the compressed protocol $V_{ZK}$, the honest referee asks the players to perform local measurements corresponding to a random term in the the Hamiltonian $H(\Pi)$. Thus, the support of the measurements commanded by a referee (even a cheating one) can only involve a constant number of qubits of $\ket{\Phi}$. Let $\adv{W}$ denote the tuple of questions sent to the players, and let $S_{\adv{W}}$ denote the registers of $\ket{\Phi}$ that are supposed to be measured. We argue that the reduced density matrix $\ket{\Phi}$ on the registers $S_{\adv{W}}$ can be computed explicitly in polynomial time.

This is where the fault tolerance properties of $V_{enc}$ come in. Since $V_{enc}$ is running a computation on encoded information, any local view of the state of $V_{enc}$ in the middle of its computation should not reveal any details about the actual information being processed. Intuitively, the purpose of a quantum error correcting code is to conceal information from an environment that is making local measurements. In the zero knowledge context, we can think of the cheating referee as the ``noisy environment'' to $V_{enc}$. Thus, the cheating referee should not be able to learn anything because it can only access local correlations, while all the ``juicy'' information about $V_{enc}$ is encoded in global correlations of $\ket{\Phi}$.

Although this is the high level idea behind our proof, there are several challenges that need to be overcome in order to make this argument work. First, the state of $V_{enc}$ is not always properly encoded in an error correcting code: it may be in the middle of some logical operations, so there is a risk that some information may be leaked. We argue that if the code used by $V_{enc}$ is \emph{simulatable} (see Section~\ref{sec:qecc}), then this cannot happen. We show that the concatenated Steane code is simulatable, by analyzing coherent implementations of logical operations that do not reveal any information.

The next challenge is that the referee is able to perform local measurements not only on intermediate states of $V_{enc}$ during its computation, but also \emph{superpositions} of them. This threatens to circumvent the concealing properties of the error correcting code, because of the following example: suppose that $\ket{\psi_0}$ and $\ket{\psi_1}$ are orthogonal $n$ qubit states such that the reduced density matrix of every small-sized subset of qubits of $\ket{\psi_0}$ or $\ket{\psi_1}$ looks maximally mixed. However, $\frac{1}{\sqrt{2}} \Paren{\ket{0} \ket{\psi_0} + \ket{1} \ket{\psi_1}}$ can be distinguished from $\frac{1}{\sqrt{2}} (\ket{0} + \ket{1}) \ket{\psi_0}$ via a local measurement (namely, an $\sigma_X$ measurement on the first qubit). One potential worry is that $\ket{\psi_0}$ and $\ket{\psi_1}$ might represent snapshots of the history state $\ket{\Phi}$ that are separated by many time steps, and therefore a simulator would have trouble simulating measurements on these superpositions, because it will not be able to determine what the inner product between $\ket{\psi_0}$ and $\ket{\psi_1}$ is in general. 

We argue that, because of the structure of the protocol and the honest strategy, the cheating referee can only measure a superpositions that involve only constantly many consecutive snapshots of $V_{enc}$. From this we deduce that reduced density matrices of the superpositions can be efficiently computed. 

Another challenge involves simulating the outcomes of measuring the prover gate, which may perform some arbitrarily complex computation. We carefully design the honest strategy for the compressed protocol so that measurement outcomes of the prover gate are always either constant, or an unbiased coin flip.

Finally, we argue that we can efficiently simulate the interaction of the protocol
even when the referee behaves adaptively. The simulator for the non-adaptive case actually computes the reduced density matrix of the honest players' state; we can perform post-selection on the density matrix at most a polynomial number of times in order to simulate the distribution of questions and answers between an adaptive referee and the provers.

\subsection{Related work}
In this section, we discuss some relevant work on quantum analogues of zero knowledge proofs.

In quantum information theory, zero knowledge proofs have been primarily studied in the context of \emph{single prover} quantum interactive proofs. This setting was first formalized by Watrous~\cite{watrous2002limits}, and has been an active area of research over the years. Various aspects of zero knowledge quantum interactive proofs have been studied, including honest verifier models~\cite{watrous2002limits,ChaillouxK08}, computational zero knowledge proof systems for $\QMA$~\cite{BroadbentJ0W16}, and more.

In the multiprover setting, Chiesa, Forbes, Gur and
Spooner~\cite{ChiesaFGS18} showed that all problems in $\NEXP$ (and thus $\MIP$) are in
$\PZKMIP^*[2,\poly(n)]$. Their approach is considerably different of ours. They
achieve their result by showing that model of interactive proofs called \emph{algebraic interactive PCPs} \footnote{An interactive PCP is a protocol where the verifier and a single prover first commit to an oracle, which the verifier can query a bounded number of times.  Then, the verifier and prover engage in an interactive proof. An
\emph{algebraic} interactive PCP is one where the committed oracle has a desired
algebraic structure.  We refer to~\cite{ChiesaFGS18} for an in-depth discussion of these models.} can be lifted to the entangled provers setting in a way that preserves zero knowledge, and then showing that languages in $\NEXP$ have zero knowledge algebraic interactive PCPs.

The results of~\cite{ChiesaFGS18} are, strictly speaking, incomparable to ours. We show that all languages in $\MIP^*$ have single-round $\PZKMIP^*$ protocols with four additional provers, whereas~\cite{ChiesaFGS18} show that $\MIP$ (which is a subset of $\MIP^*$) have $\PZKMIP^*$ protocols with \emph{two} provers and polynomially many rounds. Improving our result to only two provers seems to be quite a daunting challenge, as it is not even known how $\MIP^*[k]$ relates to $\MIP^*[k+1]$ -- it could potentially be the case that adding more entangled provers yields a strictly larger complexity class! 

Furthermore, the proof techniques of~\cite{ChiesaFGS18} are very different from ours: they heavily rely on algebraic PCP techniques, as well as the analysis of the low degree test against entangled provers~\cite{NatarajanV18a}. Our proof relies on techniques from fault tolerant quantum computing and the protocol compression procedure of~\cite{Ji17,FitzsimonsJVY18}, which in turn rely heavily on self-testing and history state Hamiltonians. 

Another qualitative difference between the zero knowledge protocol of~\cite{ChiesaFGS18} and ours is that the honest prover strategy for their protocol does not require any entanglement; the provers can behave classically. In our protocol, however, the provers are required to use entanglement; this is what enables the class $\MIP^*$ and $\PZKMIP^*$ to contain classes beyond $\NEXP$, such as $\NEEXP$ (and beyond).

Recently, Kinoshita~\cite{Kinoshita19} showed that a model of ``honest-verifier'' zero knowledge QMIP can be lifted to general zero knowledge QMIP protocols. He also shows that $\QMIP$ have interactive proofs with \emph{computational} zero knowledge proofs under a computational assumption.

Coudron and Slofstra prove a similar result to \cite{FitzsimonsJVY18}
for multiprover proofs with commuting operator strategies, showing that this
class also contains languages of arbitrarily large time complexity, if the
promise gap is allowed to be arbitrarily small \cite{CoudronS19}. Their
results (achieved via a completely different method from ours) also show that
there are two-prover zero knowledge proofs for languages of arbitrarily large time complexity,
albeit in the commuting operator model and with a quantitatively worse lower
bound than Corollary \ref{cor:iterated}.

Finally,  Cr{\'{e}}peau and Yang~\cite{CrepeauY18} refined the notion of zero knowledge,
requiring the simulator to be local, i.e., that there are non-communicating classical
simulators that simulate the (joint) output distribution of the provers.  We note that our result does not fulfill this modified definition, and we leave it as an open problem (dis)proving
that all $\MIP^*$ can be made zero knowledge in this setting.

\subsection*{Organization} 
The paper is organized as follows. We start with some
preliminaries in~\Cref{sec:prelim}. Then, in \Cref{sec:protocol}, we present our transformation on $\MIP^*$ protocols. In \Cref{sec:zk}, we prove the
zero knowledge property of the transformed protocol. In Section~\ref{sec:simulability}, we prove that the concatenated Steane code is simulatable.

\subsection*{Acknowledgments}
AG thanks Thomas Vidick for discussions on related topics. WS thanks Matt Coudron, David Gosset, and Jon Yard for helpful discussions.
AG is supported by ERC Consolidator Grant 615307-QPROGRESS. WS is supported by NSERC DG 2018-03968.

\section{Preliminaries}
\label{sec:prelim}

\subsection{Notation}
We denote $[n]$ as the set $\{1,...,n\}$. 
We assume that all Hilbert spaces are finite-dimensional.  An $n$-qubit binary observable (also called a reflection) $O$ is a Hermitian matrix with 
$\pm 1$ eigenvalues. 

We use the terminology ``quantum register'' to name specific quantum
systems. We use sans-serif font to denote registers, such as $\sA$, $\sB$.
For example, ``register $\reg{A}$'', to which is implicitly associated
the Hilbert space $\mathcal{H}_{\reg{A}}$.

For a density matrix $\rho$ defined on some registers $\sR_1 \cdots \sR_n$, and a subset $S$ of those registers, we write $\Tr_S (\rho)$ to denote the partial trace of $\rho$ over those registers in $S$. We write $\Tr_{\comp{S}}(\rho)$ to denote tracing out all registers of $\rho$ \emph{except} for the registers in $S$.

Let $\sigma_I,\sigma_X,\sigma_Y,\sigma_Z$ denote the four single-qubit
Pauli observables
\[
  \sigma_I = \begin{pmatrix}
    1 & 0 \\
    0 & 1 \\
  \end{pmatrix}\;, \qquad \sigma_X = \begin{pmatrix}
    0 & 1 \\
    1 & 0 \\
  \end{pmatrix}\;, \qquad \sigma_Y = \begin{pmatrix}
    0 & -i \\
    i & 0 \\
  \end{pmatrix}\;, \qquad \sigma_Z = \begin{pmatrix}
    1 & 0 \\
    0 & -1 \\
  \end{pmatrix}\;.
\]
We let $\cP_n$ denote the $n$-qubit Pauli group, so $\cP_n$ is the set of $n$-qubit unitaries $W_1 \otimes \cdots \otimes W_n$ where $W_i \in  \{ \pm \sigma_I, \pm i \sigma_I, \pm \sigma_X, \pm i \sigma_X, \pm \sigma_Y, \pm i \sigma_Y, \pm \sigma_Z, \pm i \sigma_Z \}$. %

We use two ways of specifying a Pauli observable acting on a specific
qubit.
\begin{enumerate}
\item Let $W \in \{I,X,Z\}$ be a label and let $\sR$ be a
  single-qubit register.
  We write $\sigma_W(\sR)$ to denote the observable $\sigma_W$ acting
  on $\sR$.
\item Let $\sR$ be an $n$-qubit register, and let $i \in
  \{1,\ldots,n\}$.
  Let $W = X_i$ (resp.
  $W = Z_i$).
  We write $\sigma_W$ to denote the $\sigma_X$ (resp.
  $\sigma_Z$) operator acting on the $i$-th qubit in $\sR$ (the
  register $\sR$ is implicit).
\end{enumerate}
We also use $W$ to label Pauli operators that have higher ``weight''.
For example, for $W = X_i Z_j$ the operator $\sigma_W$ denotes the
tensor product $\sigma_{X_i} \otimes \sigma_{Z_j}$.

\paragraph{Universal set of gates} A universal set of gates is $\set{ H,
\Lambda(X),\Lambda^2(X)}$, where $H$ is the Hadamard gate, $\Lambda(X)$ is the controlled-$X$ gate (also known as the CNOT gate), and $\Lambda^2(X)$ is the Toffoli
gate~\cite{Aharonov03}. 

\subsection{Error correcting codes} 
\label{sec:qecc}

Quantum error correcting codes (QECCs) provide a way of encoding quantum
information in a form that is resilient to noise. Specifically, a $[[n,k]]$
quantum code $\cC$ encodes all $k$-qubit states $\ket{\psi}$ into an $n$-qubit
state $\Enc(\ket{\psi})$. We say that a $[[n,k]]$ QECC has distance $d$
if for any quantum operation $\cE$ that acts on
at most $(d-1)/2$ qubits, the original state $\ket{\psi}$ can be recovered from
$\cE(\Enc(\ket{\psi}))$. In this case, we say that $\cC$ is a $[[n,k,d]]$ QECC.

Throughout this paper, we mostly use  codes that encode $1$ logical qubit into some number of physical qubits. If $\Enc$ is the encoding map of an $[[m,1]]$ QECC $\cC$ and $\ket{\phi}$ is an $n$-qubit state, then we overload notation and write $\Enc(\ket{\phi})$ to denote the $mn$ qubit state obtained from applying $\Enc$ to every qubit of $\ket{\phi}$. 
We refer to the qubits of $\ket{\phi}$ as \emph{logical qubits}, and the qubits of the encoded state $\Enc(\ket{\phi})$ as \emph{physical qubits}. We call any state $\ket{\psi}$ in the code $\cC$ a \emph{codeword}.

Given two QECCs $\mcC_1$ and $\mcC_2$, the concatenated code $\mcC_1\circ
\mcC_2$ is defined by setting $\Enc_{\mcC_1 \circ\mcC_2}(\rho) =
\Enc_{\mcC_2}(\Enc_{\mcC_1}(\rho))$, i.e. to encode $\rho$ in the concatenated
code, we first encode it using $\mcC_1$, and then encode every physical qubit
of $\Enc_{\mcC_1}(\rho)$ using $\mcC_2$.

\subsubsection{Inner and outer codes}

In our zero knowledge transformation, we use quantum error correcting codes in two different ways. One use, as described in the proof overview in Section~\ref{sec:overview}, is in the transformation from the original MIP* verifier $V$ to a fault-tolerant version $V_{enc}$. We call the error correcting code used in the fault tolerant construction the \emph{inner code}, denoted by $\innerCode$. 

The other use of quantum error correcting codes is in the protocol compression of $V_{enc}$ into the zero knowledge protocol $V_{ZK}$. In Section~\ref{sec:overview}, we described the protocol $V_{ZK}$ as testing whether the players share a history state $\ket{\Phi}$ of the protocol corresponding to $V_{enc}$. Actually, the protocol tests whether the players share an \emph{encoding} of the history state. The qubits of the history state $\ket{\Phi}$ corresponding to the state of the verifier $V_{enc}$ are supposed to be encoded using another error correcting code and distributed to multiple players (see Section~\ref{sec:honest} for more details). For this, we use what we call the \emph{outer code}, denoted by $\outerCode$.

\paragraph{The outer code} For the outer code $\outerCode$, we require a stabilizer code that satisfies the following properties~\cite{FitzsimonsJVY18}:
\begin{enumerate}
	\item For every qubit $i$, there exists a logical $X$ and $Z$ operator that acts trivially on that qubit. 
	
	\item The code can correct one erasure in a known location.
\end{enumerate}
The following four-qubit error detection code satisfies both properties~\cite{grassl1997codes}. 
\begin{align*}
	\ket{0} \mapsto \frac{1}{\sqrt{2}} \Paren{\ket{0000} + \ket{1111}} \qquad \ket{1} \mapsto \frac{1}{\sqrt{2}} \Paren{\ket{1001} + \ket{0110}}.
\end{align*}
The stabilizer generators for this code are $XXXX, ZIIZ, IZZI$. A set of logical operators for this code are $XIIX, IXXI, ZZII, IIZZ$. 
We use $\outerEnc$ to denote the encoding map for the outer code $\outerCode$.

\paragraph{The inner code} For the inner code $\innerCode$, we use the
concatenated Steane code $\Steane^K$ for some sufficiently large (but constant) $K$. 
We use $\innerEnc$ to denote the encoding
map for the outer code $\innerCode$. We describe the concatenated Steane code in
more detail in Section~\ref{sec:concatenated-steane}.

\subsubsection{Encodings of gates and simulatable codes}

An important concept in our work is that of \emph{simulatable codes}. The motivation for this concept is the observation that for a distance $d$ code $\cC$, the reduced density matrix of any codeword $\ket{\psi} \in \cC$ on fewer than $d-1$ qubits is a state that is independent of $\ket{\psi}$, and only depends on the code $\cC$. We generalize this indistinguishability notion to the context of fault tolerant encodings of gates with a QECC: informally, a QECC is simulatable if ``small width'' reduced density matrices of codewords $\ket{\psi}$ \emph{in the middle} of a logical operation are independent of $\ket{\psi}$. Intuitively, simulatability is a necessary condition for fault tolerant quantum computation; if local views of an in-progress quantum computation are dependent on the logical data, then environmental noise can corrupt the computation.

Let $U$ be a $k$-qubit gate.  If $\underline{a} = (a_1,\ldots,a_k)$ is a
$k$-tuple of distinct numbers between $1$ and $n$, we let $U(\underline{a})$ be
the gate $U$ applied to qubits $(a_1,\ldots,a_k)$.  If $\rho$ is an $n$-qubit
state, then $U(\underline{a}) \rho U(\underline{a})^\dagger$ is the result of
applying $U$ to $\rho$ in qubits $a_1,\ldots,a_k$.

 An encoding of a $k$-qubit gate $U$ in the code $\mcC$ is a way to transform $\Enc(\rho)$
to $\Enc(U(\underline{a}) \rho U(\underline{a})^\dagger)$ by applying operations on
the physical qubits, sometimes with an additional ancilla state used as a
resource. More formally, an \emph{encoding of a $k$-qubit $U$ in code $\mcC$} is a pair of states
$\sigma_U$ and $\sigma_U'$, and a number $\ell \geq 1$, along with a mapping
from $k$-tuples $\underline{a}$ of distinct physical qubits to sequences of
unitaries $O_1(\underline{a}),\ldots,O_{\ell}(\underline{a})$ such that
\begin{equation*}
    (O_\ell(\underline{a}) \cdots O_1(\underline{a})) \left(\Enc(\rho) \otimes \sigma_U\right) (O_{\ell}(\underline{a}) \cdots O_1(\underline{a}))^\dagger 
        = \Enc(U(\underline{a}) \rho U(\underline{a})^\dagger) \otimes \sigma'_U,
\end{equation*}
where (in a slight abuse of notation) the unitaries
$O_1(\underline{a}),\ldots,O_{\ell}(\underline{a})$ act only on the physical
qubits corresponding to logical qubits $a_1,\ldots,a_k$, as well as the ancilla
register holding $\sigma_U$. In this definition, the sequence
$O_1(\underline{a}),\ldots,O_{\ell}(\underline{a})$ depends on $\underline{a}$. However, in practice
$\underline{a}$ is only used to determine which physical qubits the gates
$O_1(\underline{a}),\ldots,O_{\ell}(\underline{a})$ act on, and otherwise the
sequence depends strictly on $U$. We say that an encoding \emph{uses physical
gates $\mcG$} if for every $\underline{a}$, the unitaries
$O_1(\underline{a}),\ldots,O_{\ell}(\underline{a})$ are gates in $\mcG$. 

If a QECC $\mcC$ can correct arbitrary errors on $s$ qubits,
then the partial trace $\tr_{\overline{S}}(\Enc(\rho))$ is independent of the
state $\rho$ for every set of physical qubits $S$ with $|S| \leq s$. If we start
with an encoded state $\Enc(\rho)$, and apply an encoded logical operation $U$
to some $k$-tuple of qubits $\underline{a}$, then we start in state $\Enc(\rho)
\otimes \sigma_U$ and end in state $\Enc(U(\underline{a}) \rho
U(\underline{a})^\dagger) \otimes \sigma'_U$. So as long as we can compute the
partial traces of $\sigma_U$ and $\sigma'_U$, then we can compute
$\tr_{\overline{S}}(\Enc(\rho))$ both before and after the operation. However,
the encoded operation is made of up a sequence of gates, and while we are
in the middle of applying these gates, the system might not be in an encoded
state. We say that an encoding is $s$-simulatable if we can still compute
the reduced density matrices on up to $s$ qubits of the state at any point during the encoding of
$U$. The following definition formalizes this notion:
\begin{definition}\label{D:simulatable}
    An encoding $(\sigma_U,\sigma'_U,\ell,O_1(\underline{a}),\ldots,
    O_\ell(\underline{a}))$ of a $k$-qubit gate in a QECC $\mcC$ is
    \emph{$s$-simulatable} if for all integers $0 \leq t \leq \ell$, $n$-qubit
    states $\rho$, and subsets $S$ of the physical qubits of $\Enc(\rho)
    \otimes \sigma_U$ with $|S| \leq s$, the partial trace 
    \begin{equation*}
        \tr_{\overline{S}}((O_t(\underline{a}) \cdots O_1(\underline{a})) \Enc(\rho) \otimes \sigma_U (O_{t}(\underline{a}) \cdots O_1(\underline{a}))^\dagger)
    \end{equation*}
    can be computed in polynomial time from $t$, $\underline{a}$, and
    $S$. In particular, the partial trace is independent of $\rho$.
\end{definition}
In our applications, $s$ will be constant. We also consider only a finite number
of gates $U$, and since $t$ is bounded in any given encoding, $t$ will also
be constant.  The partial trace in the above definition will be a $2^{|S|} \times
2^{|S|}$ matrix, where $|S| \leq s$. So when we say that the partial trace can be computed in polynomial time in Definition~\ref{D:simulatable}, we mean that the
entries of this matrix are rational, and can be computed explicitly in
polynomial time from $\underline{a}$, $S$, and $t$. %

A crucial component of our zero knowledge arguments is the notion of simulatable codes. We state now the theorem we will use to prove
zero knowledge. The proof is deferred to \Cref{sec:simulatable}.
\begin{theorem}\label{T:simulatable}
    Let $\mcU = \{H,\Lambda(X), \Lambda^2(X)\}$. For every constant $s$,
    there exists a $[[n,1]]$ QECC $\mathcal{C}$ where $n$ is constant, such that 
    $\mathcal{C}$ has $s$-simulatable encodings of $\mcU$ using
    only $\cU$ as physical gates.
\end{theorem}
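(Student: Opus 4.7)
The plan is to take $\mcC = \Steane^K$, the $K$-fold concatenation of the Steane code, with $K$ chosen sufficiently large depending on $s$ and the size of the Toffoli gadget described below. The Steane code has transversal $H$ and $\Lambda(X)$, and hence so does $\Steane^K$. For $\Lambda^2(X)$ we use a standard magic-state injection gadget, in a coherent (measurement-free) form. The proof verifies $s$-simulatability of each encoding separately.

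For $H$ and $\Lambda(X)$, take $\sigma_U$ and $\sigma'_U$ to be trivial, and let $O_1(\underline{a}),\ldots,O_\ell(\underline{a})$ apply the physical gate to each physical qubit (resp. each corresponding pair of physical qubits) of the involved blocks, in a fixed order. At step $t$, the intermediate state is obtained by applying a tensor product of local gates on a known subset of the physical qubits of $\Enc(\rho)$. The marginal on $S$ can then be computed by pulling the gates supported inside $S$ outside the partial trace and reducing to the marginal of $\Enc(\rho)$ on at most $|S|\le s$ physical qubits per block. Because $\Steane^K$ has distance $3^K$, any such marginal is independent of $\rho$ and depends only on the stabilizer structure of the code whenever $3^K > s$; it can be written down explicitly in constant time.

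For $\Lambda^2(X)$, let $\sigma_U$ consist of a prepared encoded Toffoli magic state together with fresh $\ket{0}$-initialized ancillas. The preparation of $\sigma_U$ is folded into the ancilla (and may use physical Toffolis), so it does not appear in the list $O_1,\ldots,O_\ell$. The operations $O_1,\ldots,O_\ell$ consist of: (i) transversal Cliffords between the encoded input and the magic-state register; (ii) transversal CNOTs coherently copying what would be the ``measurement outcomes'' in the standard magic-state protocol into the fresh ancillas; and (iii) controlled Clifford corrections on the encoded output, conditioned on those ancillas. All of these can be realized using only gates in $\mcU$. The residual state $\sigma'_U$ is then the consumed magic state together with the coherent ancillas, and is $\rho$-independent because the standard measurement-based protocol has a uniformly random outcome distribution, which lifts under coherent extraction to a fixed reduced state on the ancilla register.

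Simulatability for the Toffoli encoding then follows from a backward-lightcone analysis: at step $t$, the lightcone $L_t$ of $S$ under $O_1,\ldots,O_t$ has size bounded by a constant (since each gate has constant fan-in and $\ell$ is a constant depending only on the gadget), so $L_t \cap (\text{block }i)$ has bounded size for each block $i$. The marginal on $S$ equals the action of the final unitaries on the marginal of $\Enc(\rho) \otimes \sigma_U$ on $L_t$, and by taking $K$ so that $3^K$ exceeds $\max_i |L_t \cap (\text{block }i)|$, the encoded factor is $\rho$-independent while the $\sigma_U$ factor is already fixed. The main technical obstacle will be arranging the Toffoli gadget carefully enough that the coherent ancillas have $\rho$-independent reduced states at \emph{every} intermediate step, not only at the end, and that the whole procedure uses only the gates in $\mcU$. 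Once this is done, all partial traces reduce to finite linear-algebraic computations with fixed stabilizer-code data and can be carried out in polynomial (indeed constant) time.
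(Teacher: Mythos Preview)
Your choice of code and the treatment of $H$ and $\Lambda(X)$ match the paper: transversal encodings in $\Steane^K$, with simulatability following because conjugating a weight-$\le s$ Pauli through the partial transversal layer keeps the per-block weight $\le s < 3^K$. For the Toffoli you also, like the paper, use a coherent magic-state gadget. But the mechanism you propose for showing $s$-simulatability of that gadget --- a backward-lightcone bound --- has a genuine gap.

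The problem is step (iii), the controlled Clifford corrections. To implement a \emph{logical} $\Lambda(V)$ on encoded data without decoding, you cannot apply it transversally (controlling position $j$ of the ancilla on position $j$ of the target does not realize logical $\Lambda(V)$). What actually works --- and what the paper uses --- is to let \emph{every} physical qubit of the encoded control block act as a control on the target block, relying on the ``order-consistency'' of $\Steane^K$ (the Hamming-weight parity of each codeword matches the logical bit). But then a single qubit of $S$ lying in the target block pulls the entire control block into its backward lightcone, and hence (through the CNOT copies) an entire input block as well. That is $7^K$ physical qubits of one input block, which exceeds the distance $3^K$ for every $K$. So you cannot choose $K$ after the fact to dominate $\max_i |L_t \cap \text{block}_i|$: that quantity scales like $7^K$, not like a constant independent of $K$.

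The paper's argument sidesteps the lightcone entirely. It makes $s{+}1$ coherent copies of the encoded measurement register; since $|S|\le s$, at least one copy is disjoint from $S$ and is traced out, collapsing the state (on $S$) to a classical mixture over the measurement outcome $x$. For each fixed $x$ the control block is in the known state $\Enc(\ket{x})$, and the partial correction on the target becomes a product of known Clifford layers; the remaining unknown is only $\Enc(\rho'_x)$. One then computes $\tr(\Enc(\rho'_x)\,w')$ by conjugating the Pauli $w$ through those Cliffords (picking up at most $k$ extra weight from the left/right mismatch across a single position) and invoking stabilizer membership, which needs only $s+k<3^K$. This Clifford/Pauli bookkeeping --- not a lightcone bound --- is what makes the Toffoli encoding simulatable.
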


If a code $\mcC$ admits a simulatable encoding of a gate $U$, then, applying
Definition \ref{D:simulatable} with $t=0$, we see that it must be possible to
compute the partial trace $\tr_{\overline{S}}(\Enc(\rho) \otimes \sigma_U)$ for
any set of physical qubits $S$ with $|S| \leq s$, with no knowledge of $\rho$.
In particular, it must be possible to compute partial traces of $\Enc(\rho)$ on
all but $s$ qubits. We must also be able to compute the partial traces of the
ancilla states $\sigma_U$ and (setting $t=\ell$) $\sigma_U'$, although this is
easier in principle, since we have full knowledge of these states.

\subsection{Quantum interactive protocols}
\label{sec:complexity}

We first define the notion of a \emph{protocol circuit}, which is a quantum circuit representation an interaction between a quantum verifier and one or more provers. A protocol circuit $C$ with $k$ provers and $r$ rounds is specified by a tuple $(n,m,\Gamma)$ where $n,m$ are positive integers and $\Gamma$ is a sequence of gates $(g_1,g_2,\ldots)$. This tuple is interpreted in the following manner. The circuit $C$ acts on these registers:
\begin{enumerate}
	\item A set of \emph{prover} registers $\sP_1,\ldots,\sP_k$.
	\item A set of \emph{message} registers $\sM_1,\ldots,\sM_k$; each register $\sM_i$ consists of $m$ qubits. The $j$'th qubit of register $\sM_i$ is denoted by $\sM_{ij}$.
	\item A \emph{verifier} register $\sV$ which consists of $n$ qubits. The $j$'th qubit of register $\sV$ is denoted by $\sV_j$.
\end{enumerate}
Each gate $g_i$ consists of a \emph{gate type}, and the label of the registers that the gate acts on. There are two gate types:
\begin{enumerate}
	\item A gate from a universal gate set (such as Hadamard, CNOT, and Toffoli), which can only act on registers $\sV, \sM_1,\ldots,\sM_k$.
	\item A \emph{prover gate} $P_{ij}$, which represents the $i$'th prover's unitary in round $j$. The prover gate $P_{ij}$ can only act on registers $\sP_i \sM_i$. 
\end{enumerate}
Furthermore, prover $i$'s gates $\{P_{ij}\}$ must appear in order; for example,
$P_{i2}$ can only appear in the circuit after $P_{i1}$ has appeared. A prover
gate $P_{ij}$ cannot appear twice in the circuit with the same label. 

Intuitively, a protocol circuit describes an interaction between a verifier and $k$ provers where the verifier performs a computation on the workspace register $\sV$, and communicates with the provers through the message registers $\{\sM_i\}$, and the provers carry out their computations on the registers $\{\sP_i \sM_i\}$. The verifier's workspace $\sV$ is initialized in the all zeroes state, and the $\{\sP_i \sM_i \}$ registers are initialized in some entangled state $\ket{\psi}$ chosen by the provers. At the end of the protocol circuit, the first qubit of the workspace register $\sV$ is measured in the standard basis to determine whether the verifier accepts or rejects. 

A \emph{prover strategy} $\cS$ for a protocol circuit $C$ is specified by a tuple $(d,\{P_{ij}\},\ket{\psi})$ where $d$ is a positive integer, a set of unitary operators $P_{ij}$ for $i=1,\ldots,k$ and $j = 1,\ldots,r$ that act on $\complex^d \otimes (\complex^2)^{\otimes m}$, and pure states $\ket{\psi}$ in $(\complex^d)^{\otimes k} \otimes (\complex^2)^{\otimes mk}$. Given a protocol circuit $C$, we write $\omega^*(C)$ to denote the supremum of acceptance probabilities of the verifier over all possible prover strategies $\cS$.

\vspace{20pt}

We now define the complexity class $\QMIP$, which stands for \emph{quantum multiprover interactive proofs}. This is the set of all languages $L$ that can be decided by a quantum interactive protocol with at most polynomially many provers, at most polynomially-many rounds, and polynomial-sized protocol circuits, whose gates are drawn from the gate set $\{H,\Lambda(X),\Lambda^2(X) \}$.

\begin{definition}
A promise problem $L = (L_{yes},L_{no})$ is in the complexity class $\QMIP_{c,s}[k,r]$ if and only if there exists a polynomial-time computable function $V$ with the following properties:
\begin{enumerate}
	\item For every $x \in L_{yes} \cup L_{no}$, the output of $V$ on input $x$ is a description of a $k$-prover, $r$-round prover circuit $V(x) = (n,m,\Gamma)$ where $n,m = \poly(|x|)$.
	
	\item \emph{Completeness}. For every $x \in L_{yes}$, it holds that $\omega^*(V(x)) \geq c$.
	
	\item \emph{Soundness}. For every $x \in L_{no}$, it holds that $\omega^*(V(x)) < s$.
\end{enumerate}
Furthermore, we say that $L$ has a $\QMIP_{c,s}[k,r]$ protocol $V$.
\end{definition}
Throughout this paper, we interchangeably refer to $V(x)$ as the protocol circuit, the protocol, or the verifier that is executing the protocol, depending on the context. 

We note that in the negative case (i.e. $x \in L_{no}$), we require that the entangled value of $V(x)$ is \emph{strictly} less than $s$. This allows us to meaningfully talk about ``zero promise gap'' classes such as $\QMIP_{1,1}[k,r]$, where in the Completeness case, the verifier has to accept with probability $1$, whereas in the Soundness case, the verifier has to reject with some positive probability. Finally, we follow the convention that $\QMIP[k,r]$ is defined as $\QMIP_{\frac{2}{3},\frac{1}{3}}[k,r]$.

We also define the class $\MIP*$, which is defined in the same way as $\QMIP$ except that the protocol is specified as a classical interaction between a randomized verifier (modelled as a probabilistic polynomial-time Turing machine) and quantum provers. Since the verifier is classical, the communication between the verifier and provers can be treated as classical. Thus, in a $k$-prover $\MIP^*$ protocol, we can equivalently talk about \emph{measurement prover strategies} $\cS$, where the $k$ provers share an entangled state $\ket{\psi} \in \cH^{\otimes k}$ for some Hilbert space $\cH$. In each round of the protocol, each prover receives a classical message from the verifier, and performs a measurement on their share of $\ket{\psi}$ that depends on the verifier's message as well as the previous messages exchanged between that prover and the verifier (but not the communication with the other provers). 

We call prover strategies for a general $\QMIP$ protocol as \emph{unitary strategies}, to distinguish them from measurement strategies for $\MIP^*$ protocols. Furthermore, when we speak of an $\MIP^*$ protocol $V$, we are referring to the verifier for the protocol (which is some probabilistic Turing machine).

\subsection{Zero knowledge $\MIP^*$}
\label{sec:prelim-zk}

First, we define the \emph{view} of an interaction between a classical, randomized verifier $\hat{V}$ and a set of $k$ provers that behave according to some strategy $\cS$, as might occur in an $\MIP^*$ protocol. The view is a random variable $\View(\hat{V}(x) \leftrightarrow \cS)$ which is the tuple $(x,r,m_1,m_2,\ldots,m_{2r})$ where $x$ is the input to $\hat{V}$, $r$ is the randomness used by $\hat{V}$, and the $m_i$'s are the messages between the provers and verifier. 

Next, we present the definition of zero knowledge $\MIP^*$ protocols, first
defined by~\cite{CleveHTW04}. We use the abbreviation ``PPT'' to denote ``probabilistic polynomial-time.''

\begin{definition}
	An $\MIP^*_{c,s}[k,r]$ protocol $V$ for a promise language $L = (L_{yes},L_{no})$ is \emph{statistically zero knowledge} if for all $x \in L_{yes}$, there exists a prover strategy $\cS$ (called the \emph{honest strategy}) satisfying the following properties:
	\begin{enumerate}
		\item The strategy $\cS$ is accepted by the protocol $V(x)$ with probability at least $c$, 
		\item For all PPT verifiers $\hat{V}$, there exists a PPT simulator $\Sim_{\adv{V}}$ such that the output distribution of $\Sim_{\adv{V}}(x)$ is $\eps(n)$-close in total variation distance to $\View(\hat{V}(x) \leftrightarrow \cS)$, for some negligible function $\eps(n)$.

	\end{enumerate}
	Furthermore, the complexity class $\SZKMIP^*_{c,s}[k,r]$ is the set of languages that have statistical zero knowledge proof systems.
\end{definition}

When a language can be decided by a zero knowledge proof system with closeness $\eps(n) = 0$, we say that it admits a \emph{perfect zero knowledge} proof system. In other words, the interaction can be simulated exactly. We let $\PZKMIP^*_{c,s}[k,r]$ denote languages that admit perfect zero knowledge $\MIP^*$ protocols. %

\paragraph{Some subtleties} We address two subtleties regarding the definitions of $\QMIP$ and $\PZKMIP^*$. 
\begin{enumerate}
	\item The definition of $\QMIP$ depends on our choice of gate set. If we allow the verifier circuits to use arbitrary single- and two-qubit gates, then our perfect zero knowledge results may not hold; however, we will still get the statistical zero knowledge property with exponentially small error.
	
	\item In a $\PZKMIP^*_{c,s}[k,r]$ protocol $V$, there may be no strategy $\cS$ for the provers that gets accepted with probability $c$ exactly. Instead, there may be a sequence of strategies whose success probability converges to $c$. Thus, in order for $\PZKMIP^*_{c,s}[k,r]$ to be correctly defined, we require that there exists a sequence of honest strategies $\cS_1,\cS_2,\ldots$  satisfying:
	\begin{itemize}
		\item The success probability of $\cS_i$ approaches $c$ as $i \to \infty$, and
		\item For all verifiers $\adv{V}$, there exists a simulator $\Sim_{\adv{V}}$ whose output distribution can be approximated arbitrarily well by the sequence of honest strategies. In other words, for all $\delta$ there exists an $i$ such that the total variation distance between $\View(\adv{V}(x) \leftrightarrow \cS_i)$ and $\Sim_{\adv{V}}$ is at most $\delta$.
	\end{itemize}
	
	This subtlety only arises when considering ``zero gap'' classes such as $\PZKMIP^*_{1,1}[k,r]$. 
\end{enumerate}

\subsection{Parallel repetition}
\label{sec:parallel-repetition}

Parallel repetition of interactive protocols is a commonly used technique for performing \emph{gap amplification}. We now define what this means for $1$-round $\MIP^*$
protocols. 

\begin{definition}[Parallel repetition of a one-round $\MIP^*$ protocol]
 Let $V$ denote a $1$-round, $k$-prover $\MIP^*$ protocol. 
 The $m$-fold parallel repetition of $V$ is another $1$-round, $k$-prover $\MIP^*$ protocol $V^m$ where $m$ independent instances of $V$ are executed simultaneously. Let $q_{ij}$ denote the questions from instance $i$ to prover $j$. Then prover $j$ receives $(q_{1j},q_{2j},\ldots,q_{mj})$ simultaneously, and responds with answers $(a_{1j},a_{2j},\ldots,a_{nj})$. The answers $(a_{i1},a_{i2},\ldots,a_{ik})$ is then given to the $i$'th verifier instance, and $V^m$ accepts if and only if all instances accept.
\end{definition}

The behaviour of $\omega^*(V^m)$ as a function of $n$ and $\omega^*(V) < 1$ is non-trivial; clearly, if $\omega^*(V) = 1$, then $\omega^*(V^m) = 1$ as well. Although one might expect that $\omega^*(V^m)$ decays exponentially with $m$ in the case that $\omega^*(V) < 1$, this is not known in general. Raz~\cite{Raz98} showed that such exponential decay \emph{does} hold for classical $1$-round, $2$-prover $\MIP$ proof systems, but extending this to the case of more provers or $\MIP^*$ proof systems has remained an active area of research. It is an open question for whether the analogue of Raz's result holds for $\MIP^*$ protocols (although a polynomial-decay bound is known~\cite{Yuen16}). 

Bavarian, Vidick, and Yuen~\cite{BavarianVY17} showed that an exponential-decay parallel repetition theorem also holds for $1$-round $\MIP^*$ protocols that have the property of being \emph{anchored}, and furthermore, \emph{every} $1$-round $\MIP^*$ protocol can be transformed into an equivalent anchored protocol. Their result has the additional benefit in that it holds for any number of provers.

We do not formally define the anchoring property here, but instead we describe a simple transformation to anchor any $1$-round $\MIP^*$ protocol.
\begin{definition}[Anchoring]
  \label{def:anchored}
 Let $\alpha > 0$ be some constant. Given a $1$-round, $k$-prover $\MIP^*$ protocol $V$, define its \emph{$\alpha$-anchored version} $V_\bot$ to be the protocol which:
 \begin{enumerate}
 	\item Runs the verifier in $V$ to obtain questions $(q_1,\ldots,q_k)$ for the $k$ provers. 
	\item Independently choose each coordinate $i$ with probability $\alpha$ and replace $q_i$ with an auxiliary question symbol $\bot$, and send the questions to each prover. 
	\item If any prover received the auxiliary question $\bot$, automatically accept. Otherwise, accept the provers' answers only if $V$ would have accepted.
\end{enumerate}
\end{definition}
This transformation preserves completeness and soundness: $\omega^*(V) = 1$ if and only if $\omega^*(V_\bot) = 1$. In general, we have the relationship
\[
	\omega^*(V_\bot) = (1 - \alpha)^k \omega^*(V) + (1 - (1-\alpha)^k).
\]
Bavarian, Vidick and Yuen~\cite{BavarianVY17} showed the parallel repetition of anchored games admits an exponential decay in success probability. 

\begin{theorem}
  \label{thm:parallel-repetition-anchored}
  Let $\alpha > 0$. Let $V$ be a $1$-round, $k$-prover $\MIP^*$ protocol. Let $V_\bot$ be the $\alpha$-anchored version of $V$ as defined in~\Cref{def:anchored}. Let $m > 0$ be an integer. If $\omega^*(V) = 1$, then $\omega^*(V_\bot^m) = 1$. Otherwise, 
  \[
  	\omega^*(V_\bot^m) \leq \exp(-\beta \eps^\gamma m)
  \]
  where $\beta$ is a universal constant depending on $\alpha$ and the protocol $V$, $\eps$ is defined as $1 - \omega^*(V)$, and $\gamma$ is a universal constant.

\end{theorem}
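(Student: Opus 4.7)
The plan is to follow the information-theoretic ``quantum correlated sampling'' framework developed for parallel repetition of two-prover games (Jain--Pereszl\'enyi--Yao, Dinur--Steurer--Vidick) and extended by Bavarian--Vidick--Yuen to the multiprover anchored setting. The proof is by contradiction: assume that $\omega^*(V_\bot^m) > \exp(-\beta\eps^\gamma m)$ for some entangled strategy $\cS^{(m)}$ winning all $m$ instances simultaneously, and construct from $\cS^{(m)}$ a single-copy strategy for $V$ with success probability strictly greater than $1-\eps = \omega^*(V)$, a contradiction.

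First, I would set up the information-theoretic bookkeeping. Let $W_i$ denote the event that instance $i$ of $V_\bot$ is won in the $m$-fold execution, let $W_{<i} = W_1 \wedge \cdots \wedge W_{i-1}$, and let $\sQ_i = (\sQ_{i,1},\ldots,\sQ_{i,k})$ denote the question registers in coordinate $i$. If $\Pr[W_1 \wedge \cdots \wedge W_m] \geq \exp(-\beta\eps^\gamma m)$, then by a standard chain-rule / averaging argument there exists a large subset $T \subseteq [m]$ of coordinates such that for each $i \in T$, conditioned on $W_{<i}$ occurring and on a typical setting $q_{-i}$ of the questions in the other coordinates, the players win coordinate $i$ with probability at least $1 - \eps/2$. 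The goal is to turn this into a bona fide single-copy strategy.

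The central step, and the main obstacle, is the \emph{correlated-sampling} argument: each player, given only their own single-copy question $q$, must locally produce a state that is close to the post-measurement state of $\cS^{(m)}$ conditioned on $W_{<i}$, on the $i$-th question in coordinate $j$ being $q$, and on appropriately distributed dummy questions in the other coordinates. This is where the anchoring is crucial. Replacing a fraction $\alpha$ of the coordinates independently with the anchor symbol $\bot$ ensures that with constant probability every coordinate $i'\ne i$ looks ``anchored'' from the perspective of at least one of the players, and on the anchor outcome the verifier automatically accepts, so conditioning on the win events does very little to the reduced state. Formally, one shows that the quantum mutual information between one player's side of the state and the others, conditioned on $W_{<i}$, grows only linearly in the number of won coordinates, while the anchor provides enough ``free'' independent randomness to locally approximate the conditional state by a product-like sampling procedure. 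The technical heart is a quantum analogue of Holenstein's correlated-sampling lemma, bounded in terms of a smooth-max-divergence or relative-entropy quantity between the true conditional state and its local product approximation.

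Finally, given the sampled state and the original measurement operators of $\cS^{(m)}$, each player executes the $i$-th coordinate's measurement on their share. Averaging over the choice of $i \in T$, dummy questions, and the sampling randomness, the resulting single-copy strategy wins $V$ with probability at least $1 - \eps/2 - o(1)$, provided $\beta$ and $\gamma$ are chosen so that the approximation error in the correlated-sampling step is $o(\eps)$. This contradicts $\omega^*(V) = 1 - \eps$ and completes the proof. The hardest quantitative step is controlling the correlated-sampling error as a polynomial $\eps^\gamma$ in the single-copy gap; this is exactly what fixes the exponent $\gamma$ and the constant $\beta$ in the statement.
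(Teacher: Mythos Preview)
The paper does not prove this theorem at all: it is stated in the preliminaries and attributed to Bavarian, Vidick, and Yuen~\cite{BavarianVY17} as a black-box result used later for gap amplification. So there is no ``paper's own proof'' to compare against. Your sketch is a reasonable high-level outline of the argument in~\cite{BavarianVY17}, and since the present paper simply cites that result, your proposal is consistent with how the theorem is handled here.
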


\section{Our zero knowledge protocol}
\label{sec:protocol}

In this section we present the zero knowledge transformation for general $\MIP^*$ protocols.
For convenience we reproduce the statement of Theorem~\ref{thm:main}. 

\mainthm*

Fix a promise language $L \in \MIP^*_{1,s}[k,r]$. 
There exists a polynomial-time computable function $V$ that on input $x$ outputs a $k$-prover, $1$-round protocol circuit $V(x)$ such that if $x \in L$, then $\omega^*(V(x)) = 1$, and otherwise $\omega^*(V(x)) < s$. Furthermore, since we are dealing with an $\MIP^*$ proof system, the communication between the verifier and the provers is classical. Thus, we can assume that the protocol circuit $V$ has the following structure. All qubits of the verifier register $\sV$ are initialized to $\ket{0}$. The protocol circuit proceeds in five phases:
\begin{itemize}
	\item \textbf{Verifier Operation Phase 1}: All computation in this phase of the protocol occurs on the verifier register $\sV$. At the end of the computation, the verifier's messages to the $i$'th prover are stored in a subregister $\sN_i$ of $\sV$. 
	\item \textbf{Copy Question Phase}: For each prover $i$, CNOT gates are applied bitwise from $\sN_i$ to bits in the register $\sM_i$.
	
	\item \textbf{Prover Operation Phase}: Each prover $i$ applies prover gate $P_i$ to registers $\sP_i \sM_i$, in sequence. 
	
	\item \textbf{Copy Answer Phase}: For each prover $i$, CNOT gates are applied bitwise from $\sM_i$ to bits in the register $\sN_i$.
	
	\item \textbf{Verifier Operation Phase 2}: The remaining computation in the protocol occurs on the verifier register $\sV$, and the accept/reject decision bit is stored in a designated output qubit of $\sV$.
\end{itemize}
As mentioned earlier, we assume that the non-prover gates of the  protocol circuit $V(x)$ are drawn from the universal gate set $\{ H, \Lambda(X), \Lambda^2(X) \}$. Figure~\ref{fig:protocol} gives a diagrammatic representation of this five-phase structure, depicting a protocol in which a verifier interacts with a single prover. 

\begin{figure}[H]
  \begin{mdframed}[style=figstyle]
    \begin{center}
      \begin{tikzpicture}[scale=1, control/.style={circle, fill,
          minimum size = 4pt, inner sep=0mm}, target/.style={circle,
          draw, minimum size = 7pt, inner sep=0mm},
        vgate/.style={draw, minimum height = 1.7cm, minimum width =
          1.3cm, fill=ChannelColor}, pgate/.style={draw, minimum
          height = 1.3cm, minimum width = 1.3cm, fill=ChannelColor}]

        \node (V) at (-2,.5) {$\phantom{\sV}$}; %
        \node (V2) at (-2,0) {$\phantom{\sV}$}; %
        \node (V3) at (-2,.25) {$\sV$}; %
        \node (M) at (-2,-.5) {$\sN$}; %
        \node (P) at (-2,-2.3) {$\sP$}; %
        \node (P2) at (-2,-1.8) {$\sM$}; %
        \node (P3) at (-2,-2.05) {$\phantom{sP}$}; %

        \node (OutV) at (6,.5) {}; %
        \node (OutV2) at (6,0) {}; %
        \node (OutM) at (6,-.5) {}; %
        \node (OutP) at (6,-2.3) {}; %
        \node (OutP2) at (6,-1.8) {}; %

        \draw (V.east)--(OutV); %
        \draw (V2.east)--(OutV2); %
        \draw (M.east)--(OutM); %
        \draw (P.east)--(OutP); %
        \draw (P2.east)--(OutP2); %

        \node[vgate] (CQ) at (-.8,.05) {$V_1$}; %
        \node[target] (QG) at (.5,-1.8) {};
        \node[control] (C1) at (.5,-.5) {}; %
        \draw (C1.center)--(QG.south); %
        \node[pgate] (PG) at (2.1,-2.05) {$P$}; %
		\node[control] (C2) at (3.7,-1.8) {};
        \node[vgate] (CA) at (5,.05) {$V_2$}; %
        \node[target] (T1) at (3.7,-.5) {}; %
        \draw (T1.north)--(C2.center);
      \end{tikzpicture}
    \end{center}
  \end{mdframed}
  \caption{A quantum circuit representation of an MIP* protocol}
  \label{fig:protocol}
\end{figure}
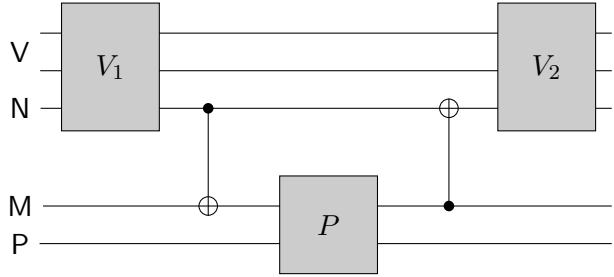

As described in the Introduction, we first transform the protocol circuit $V(x)$ into an equivalent protocol circuit $V_{enc}(x)$ that performs its computations fault-tolerantly. Then, we use the compression techniques of~\cite{Ji17,FitzsimonsJVY18} on the protocol defined by $V_{enc}(x)$ to obtain a protocol $V_{ZK}(x)$ which has the desired zero knowledge properties. 

\subsection{Robustifying protocol circuits}
\label{sec:transformation}

We now describe a polynomial-time transformation that takes as input the description of a $k$-prover, $1$-round $\MIP^*$ protocol circuit such as $V$ described above, and outputs another $k$-prover, $1$-round protocol circuit $V_{enc}$ that describes an \emph{equivalent} $\MIP^*$ protocol, but has additional fault-tolerance properties. %

The non-prover gates of $V_{enc}$ are drawn from the universal gate set $\{ H \otimes H, \Lambda(X), \Lambda^2(X) \}$.\footnote{The doubled Hadamard gate is used for technical reasons; the second Hadamard gate can always be applied to unused ancilla qubits if it is not needed.} The registers that are involved in the protocol $V_{enc}$ are $\{ \sP_1,\ldots,\sP_k,\sM_1,\ldots,\sM_k, \sV \}$. The verifier workspace register $\sV$ can be subdivided into registers $\sA$, $\sB$, $\sO$, and $\sN_1,\ldots,\sN_k$. Intuitively, the register $\sA$ holds encoded qubits, the register $\sB$ holds unencoded qubits, the register $\sO$ holds an encoding of the output bit at the end of the protocol, and the register $\sN_i$ is isomorphic to $\sM_i$ for all $i$.

Let the inner code $\innerCode$ be a $192$-simulatable code. We remark that from
\Cref{T:simulatable}, such codes exist and each logical qubit is encoded in $m$
physical bits, for some constant $m$. 

At the beginning of the protocol $V_{enc}$, the qubits in register $\sV$ are initialized to zero. In addition to the five phases of $V$, there are two additional phases in $V_{enc}$. First, the protocol $V_{enc}$ goes through a \textbf{Resource Generation Phase}, in which the verifier generates many $\innerCode$ encodings of the following states in its private workspace:
	\begin{enumerate}
		\item Toffoli magic states $\ket{\Toffoli} = \Lambda^2(X) (H \otimes H
      \otimes I) \ket{0,0,0}$.
		\item Ancilla $\ket{0}$ qubits.  
		\item Ancilla $\ket{1}$ qubits.  
	\end{enumerate}
	Thus the state of the register $\sV$ after the Resource Generation Phase will be a tensor product of encoded magic states, encoded $\ket{0}$ states, encoded $\ket{1}$ states, and unencoded $\ket{0}$ states. 
	
	Now the the verifier of $V_{enc}$ simulates the five computational phases of $V$, but as logical operations acting on data encoded using the inner code $\innerCode$. For the Verifier Operation Phases and the Copy Question/Answer Phases, each non-prover gate $g_i \in \{H, \Lambda(X), \Lambda^2(X)\}$ of $V$
    is replaced in $V_{enc}$ with the encoding of $g_i$ using the $\innerCode$, as given by Theorem~\ref{T:simulatable}. For
    example, if $g_i$ in $V$ is a Hadamard gate that acts on some qubit $\alpha$
    of $\sV$, then its equivalent will be a sequence of (double) Hadamard gates acting
    transversally on the physical qubits of the encoding of qubit $\alpha$. If
    $g_i$ in $V$ is a Toffoli gate, then in $V_{enc}$ the logical gate is applied
    using the Toffoli gadget (as described in Section~\ref{sec:simulatable}). Thus, all of the gates of the verifier in $V$ are performed in an encoded manner in $V_{enc}$. %
    
    The Prover Operation Phase proceeds as before; each prover applies their prover gate on the $\sM \sP$ registers in sequence. We assume that the Prover Operation Phase is padded with sufficiently many identity gates so that the number of time steps in between each prover gate application is at some sufficiently large constant times the block length of the inner code $\innerCode$.
    
    Note that the questions to the provers are encoded using the inner code $\innerCode$; this is not a problem for the provers, who can decode the questions before performing their original strategy, and encode their answers afterwards. 
    
    Finally, we assume that at the end of the (encoded) Verifier Operation Phase 2, the register $\sO$ stores the logical encoding of the accept/reject decision bit. After Verifier Operation Phase 2, the protocol $V_{enc}$ executes the \textbf{Output Decoding Phase}, where the logical state in register $\sO$ is decoded (using the decoder from $\innerCode$) into a single physical qubit $\sO_{out}$.

It is easy to see that this transformation from $V$ to $V_{enc}$ preserves the acceptance probability of the protocol.
\begin{proposition}
\label{prop:same_acc}
For all $1$-round $\MIP^*$ protocols $V$, for the $\MIP^*$ protocol $V_{enc}$ that is the result of the transformation just described, we have that
\[
  \omega^*(V) = \omega^*(V_{enc}).
\]
\end{proposition}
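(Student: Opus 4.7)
The plan is to prove the equality by two inequalities, each via an explicit, strategy-preserving translation. The key input is the defining property of a gate encoding (stated just before Definition \ref{D:simulatable}), namely
\[
    O_\ell(\underline{a}) \cdots O_1(\underline{a}) \Paren{\Enc(\rho) \otimes \sigma_U} O_1(\underline{a})^\dagger \cdots O_\ell(\underline{a})^\dagger = \Enc\Paren{U(\underline{a}) \rho U(\underline{a})^\dagger} \otimes \sigma_U',
\]
together with the observation that the Resource Generation Phase supplies precisely the $\sigma_U$ ancillas and the encoded $\ket{0}, \ket{1}$ needed by every encoded gate in the Verifier Operation and Copy Phases, and the Output Decoding Phase inverts $\innerEnc$ on register $\sO$.

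For $\omega^*(V_{enc}) \geq \omega^*(V)$, I would start with any strategy $\cS = (d, \{P_i\}, \ket{\psi})$ for $V$ with acceptance probability close to some value $p$ and construct a strategy $\cS'$ for $V_{enc}$ with the same probability. The entangled state $\ket{\psi}$ is reused; the private registers are extended to provide workspace for coding; each prover unitary in $\cS'$ is defined as (i) decode the incoming $\sM_i$ using $\innerEnc^{-1}$, (ii) apply $P_i$ on the decoded message register together with the private register, (iii) re-encode the outgoing message with $\innerEnc$. Tracing the seven phases of $V_{enc}$ in order and invoking the gate-encoding identity at each encoded verifier gate, the Verifier Operation Phases implement $V$'s computation on $\innerCode$-encoded data, the Copy Phases transfer the encoded bits bitwise, the honest provers preserve the encoding across the Prover Operation Phase, and the Output Decoding Phase finally recovers a physical qubit whose measurement distribution is identical to that of $V$'s decision qubit under $\cS$.

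The reverse inequality $\omega^*(V) \geq \omega^*(V_{enc})$ is proved symmetrically. Given any strategy $\cS'$ for $V_{enc}$ with acceptance probability close to $p$, I would construct $\cS$ for $V$ by letting each prover absorb the extra qubits of $V_{enc}$'s $\sM_i$ (relative to $V$'s $\sM_i$) into its private register; upon receiving the verifier's classical message, the prover uses private ancillas to prepare $\innerEnc$ of the bit string, applies $\cS'$'s prover unitary, then applies $\innerEnc^{-1}$ to produce a classical answer. The same phase-by-phase trace shows the overall execution is isomorphic to running $V_{enc}$ under $\cS'$ restricted to the logical subspace, hence yields the same accept distribution.

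I do not expect any real obstacle: the argument is essentially a correctness check for fault-tolerant simulation. The only mild subtlety is that $\sM_i$ has different physical sizes in the two protocols, but this is painlessly handled by the freedom in the strategy definition of Section \ref{sec:complexity} to take the private-register dimension $d$ arbitrarily large. Taking suprema over strategies on both sides of the two inequalities then gives $\omega^*(V) = \omega^*(V_{enc})$.
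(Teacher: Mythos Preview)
The paper does not actually prove this proposition: it is preceded only by the sentence ``It is easy to see that this transformation from $V$ to $V_{enc}$ preserves the acceptance probability of the protocol,'' and no argument follows. Your two-inequality sketch via explicit strategy translations is precisely the natural argument the paper is gesturing at; indeed the forward direction is spelled out almost verbatim in the paper's own text just above the proposition (``this is not a problem for the provers, who can decode the questions before performing their original strategy, and encode their answers afterwards''). So your approach agrees with the paper's intended one and is in fact more detailed than what the paper provides.

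One small remark on the reverse direction: after applying the $V_{enc}$ prover's unitary $P_i'$, the message register need not lie in the code subspace, so ``$\innerEnc^{-1}$'' should be read as the inverse of a unitary extension of the encoding isometry (with the syndrome qubits kept in the prover's private workspace), not a bare partial isometry. With that reading your absorption of the extra $\sM_i$ qubits into the private register goes through, and the bijection between strategies is acceptance-preserving as you claim.
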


\subsubsection{Micro-phases of $V_{enc}$}
\label{sec:micro-phases}
We assume the following structural format to the protocol circuit $V_{enc}$: aside from the major phases of $V_{enc}$, we can partition the timesteps of the circuit into ``micro-phases'', where each micro-phase consists
of a constant number of consecutive timesteps, and each micro-phase can be
classified according to the operations performed within it:
\begin{itemize}
	\item \textbf{Idling}: the gates applied by the verifier during this micro-phase are all identity gates.
	\item \textbf{Resource encoding}: gates are applied to a collection of ancilla $\ket{0}$ qubits to form either an encoding of a $\ket{0}$ state, $\ket{1}$ state, or a Toffoli magic state.
	\item \textbf{Logical operation}: the encoding of a single logical gate is being applied to some encoded blocks of qubits, possibly along with some unencoded ancilla qubits.	
	\item \textbf{Output decoding}: the output register $\sO$ of the verifier circuit is decoded to obtain a single qubit answer. This is exactly the Output Decoding phase.
\end{itemize}
For example, the Resource Generation phase consists of a sequence of resource encoding micro-phases, applied to blocks of ancilla qubits. The Verifier Operation phases consist of sequences of both idling steps and logical operations, applied to blocks of encoded qubits as well as ancilla qubits. The timesteps during the Prover Operation phase are classified as idling steps, because the verifier is not applying any gates to its private space. 

\subsubsection{Prover reflection times}

Given the protocol circuit $V_{enc}$ of length $T$, we identify special timesteps during the protocol corresponding to the timesteps where the provers apply their prover gate. For every prover $i$, we define $t_\star(i) \in \{0,1,2,\ldots,T\}$ to be the time in the protocol circuit when prover $i$ applies their prover gate $P_i$.

\subsection{A zero knowledge $\MIP*$ protocol to decide $L$}

Given the transformation from an $\MIP^*$ protocol $V$ to an equivalent ``fault-tolerant'' protocol $V_{enc}$, we now introduce a second transformation that takes $V_{enc}$ and produces another equivalent $\MIP^*$ protocol $V_{ZK}$ that has the desired zero knowledge properties.

This protocol is obtained by applying the compression procedure of~\cite{Ji17,FitzsimonsJVY18} to $V_{enc}$. Since we are compressing interactive protocols (involving verifiers and provers) into other interactive protocols, to keep things clear we use the following naming convention:
\begin{itemize}
	\item \textbf{Verifiers} and \textbf{provers} refer to the parties in $V_{enc}$ (i.e. the protocol that is being \emph{compressed});
	\item \textbf{Referees} and \textbf{players} refer to the parties in $V_{ZK}$ (i.e. the protocol that is the result of the compression scheme).
\end{itemize}

At a high level, the protocol $V_{ZK}$ is designed to verify that the players possess (an encoding of) a \emph{history state} of the protocol $V_{enc}$:
\begin{equation}
\label{eq:history_state}
	\ket{\Phi}_{\sC \sV \sM \sP} = \frac{1}{\sqrt{T+ 1}} \sum_{t = 0}^T \ket{\unary(t)}_{\sC} \otimes \ket{\Phi_t}_{\sV \sM \sP}
\end{equation}
where $T$ is the number of gates of $V_{enc}$, $\unary(t)= t_1 t_2 \cdots t_T$ denotes
the unary encoding of time $t$, i.e. \[
		t_\ell =
     \begin{cases}
				1 & \qquad \mbox{ if } \ell \leq t \\
				0 & \qquad \mbox{ otherwise}
     \end{cases},
	\]
and $\ket{\Phi_t}$ is the state of the protocol $V_{enc}$ after $t$ time steps (called the $t$'th \emph{snapshot state}). %

We specify some details of the protocol $V_{ZK}$:

\begin{itemize}
	\item \textbf{Rounds}: $1$-round protocol
	\item \textbf{Number of players}: $k+4$ players, which are are divided into 
	$k$ \emph{prover players} (labelled $PP_1,\ldots,PP_k$) and $4$ \emph{verifier players} (labelled $PV_1,\ldots,PV_4$). %

\item \textbf{Question and answer format}: questions to the verifier players are $6$-tuples of the form $(W_1, \ldots,W_6)$, where each $W_i$ denotes a two-qubit Pauli observable on some specified pair of qubits, and the six observables commute. Furthermore, the Pauli observables are tensor products of operators from the set $\{ I, X, Z \}$. An example of a question would be: $(X_1 X_2, Z_1 Z_2, I_{7} Z_{5},X_3 Z_4, Z_3 X_4, X_7 I_5)$. Verifier players' answers are a $6$-tuple of bits $(a_1,\ldots,a_6)$.

Questions to prover player $PP_i$ can be one of three types:
\begin{enumerate}
	\item Prover reflection, denoted by $\star_i$. %
	\item Question gates, denoted by $Q_{ij}$ for $j = 1,\ldots,m'$, where $m'$ is the maximum number of qubits in the message registers $\{\sM_i\}$ in the protocol $V_{enc}$.
	\item Question flag flip, denoted by $QF_i$.
	\item Answer gates, denoted by $A_{ij}$ for $j = 1,\ldots,m'$.
	\item Answer flag flip, denoted by $AF_i$.
\end{enumerate}
	We notice that even if the Prover players' original answers consisted
    of a single bit, after robustifying the protocol circuits, the answers
    become an encoding of the logical bit.

\end{itemize}

The distribution of questions and the rules used by the referee in $V_{ZK}$ are essentially identical to the ones used in the compression protocol in~\cite{FitzsimonsJVY18}.\footnote{The main difference concerns the questions ``Question flag flip'' and ``Answer flag flip'' to the provers, which do not occur in~\cite{Ji17,FitzsimonsJVY18}. These will be helpful for the analysis of zero knowledge property. We explain in Appendix~\ref{sec:differences} the slight modifications to the protocol from~\cite{FitzsimonsJVY18} that are needed.} Given those, the results of~\cite{FitzsimonsJVY18} show that $V_{ZK}$ is a complete and sound $\MIP^*$ proof system for $L$:
\[
	L \in \MIP^*_{1,s'}[k + 4,1]
\]
where $s' = (1 - s)^\beta/p(n)$ for some universal constant $\beta$ and some polynomial $p(n)$ that depends on the original protocol $V$, and $s$ is the soundness of $V$.

The details of the the question distribution, the rules and the soundness
analysis are irrelevant for this paper, as we are only concerned with
establishing the zero knowledge property of $V_{ZK}$. For this, we only need to
consider the interaction between honest players and a potentially cheating referee $\hat{R}$. 

\subsubsection{An honest strategy $\cS_{ZK}$ for $V_{ZK}$} 
\label{sec:honest}

We now specify an honest strategy $\cS_{ZK}(x)$ for the players in $V_{ZK}(x)$ in the case that $x \in L_{yes}$. Since $x \in L_{yes}$, by definition we have that $\omega^*(V(x)) = 1$, and therefore by Proposition~\ref{prop:same_acc} we get $\omega^*(V_{enc}(x)) = 1$. Thus there exists a sequence of finite dimensional unitary strategies $\{ \cS_1(x),\cS_2(x),\ldots \}$ for $V_{enc}(x)$ such that the acceptance probability approaches $1$. For simplicity, we assume that there exists a finite dimensional unitary strategy $\cS(x)$ for $V_{enc}(x)$ that is accepted with probability $1$; in the general case, we can take a limit and our conclusions still hold. 

The strategy $\cS(x)$ consists of a dimension $d$, an entangled state
$\ket{\psi}$ on registers $\sP_1,\ldots,\sP_k$ and $\sM_1,\ldots,\sM_k$ (where the registers $\sP_i$ have dimension $d$), and a collection of unitaries $\{ P_i \}$ where $P_i$ acts on registers $\sP_i \sM_i$. We assume, without loss of generality, that in under the strategy $\cS$ in protocol $V_{enc}(x)$, the state of the message registers $\{\sM_i\}_i$ are in the code subspace of $\innerCode$ at each time step of the protocol (where we treat the prover operations as taking one time step). 

Given this, we define the measurement\footnote{Since the protocols $V$ and $V_{enc}$ are general $\QMIP$ protocols, the strategy $\cS$ is a unitary strategy. Since $V_{ZK}$ is a $\MIP^*$ protocol, we specify $\cS_{ZK}$ as a \emph{measurement} strategy.} strategy $\cS_{ZK}(x)$ in the following way. For notational simplicity, we omit mention of the input $x$ when it is clear from context.

\paragraph{The shared entanglement} Let $\ket{\Phi}_{\sC \sV \sM \sP}$ denote the history state of the protocol $V_{enc}(x)$ when the provers use strategy $\cS$ (as in~\eqref{eq:history_state}). The initial state $\ket{\Phi_0}$ is $\ket{0}_{\sV} \otimes \ket{\psi}_{\sM \sP}$. 

We now construct an \emph{distributed history state} $\ket{\Phi'}_{\sC' \sV' \sM \sP \sF}$ from $\ket{\Phi}$ in two steps. First, without loss of generality we augment a $k$-partite register $\sF = \sF_1,\ldots,\sF_k$ to $\ket{\Phi}$ so that serves as flags that indicate which operations the $i$'th prover has applied. Thus the augmented history state looks like
\[
	\ket{\Phi} = \frac{1}{\sqrt{T+ 1}} \sum_{t = 0}^T \ket{\unary(t)}_{\sC}
  \otimes \ket{\Delta_t}_{\sV \sM \sP} \otimes \ket{f(t)}_{\sF}
\]
where $\ket{f(t)}_{\sF} = \bigotimes_{i} \ket{f_i(t)}_{\sF_i}$ and 
$\ket{f_i(t)}_{\sF_i} = \ket{q_i(t)}_{\sF_{Q_i}} \otimes
\ket{p_i(t)}_{\sF_{P_i}} \otimes \ket{a_i(t)}_{\sF_{A_i}}$. For all $i \in
\{1,2,\ldots,k\}$, the functions $q_i(t), p_i(t), a_i(t)$ are boolean functions
of the time $t$, defined as follows:
\[
   q_i(t) =  \begin{cases}
				1 & \qquad \mbox{ if } t \geq t_\star(i) - 1 \\
				0 & \qquad \mbox{ otherwise}
     \end{cases},
\]
\[
   p_i(t) =  \begin{cases}
				1 & \qquad \mbox{ if } t \geq t_\star(i) \\
				0 & \qquad \mbox{ otherwise}
     \end{cases},
\]
and
\[
   a_i(t) =  \begin{cases}
				1 & \qquad \mbox{ if } t \geq t_\star(i) + 1 \\
				0 & \qquad \mbox{ otherwise}
     \end{cases}.
\]
The flags $q_i,p_i,a_i$ flip from $0$ to $1$ consecutively: at time $t = t_\star(i) - 2$, all flags for player $i$ are set to $0$. By the time $t = t_\star(i) + 1$, all flags for player $i$ are set to $1$.

Next, we perform a qubit-by-qubit encoding of the $\sC$ and $\sV$ registers of $\ket{\Phi}$ using the \emph{outer code} $\outerCode$, to obtain the \emph{encoded history state} $\ket{\Phi'}$ defined on registers $\sC', \sV', \sM, \sP$. Each qubit of $\sC$ and $\sV$ are encoded into $4$ physical qubits.

The allocation of the registers of $\ket{\Phi'}$ to the $k+4$ players are as follows:
\begin{enumerate}
\item The register $\sC$ consists of $T$ qubits. For $i = 1,\ldots,T$, let $\sC_i$ denote the $i$'th qubit register of $\sC$. For $j = 1,\ldots,4$, let $\sC'_{ij}$ denote the $j$'th share of the $\outerCode$ encoding of $\sC_i$. In the honest case, the $j$'th verifier player $PV_j$ has the qubits $\{\sC_{ij}' \}_i$. 

\item Similarly, the registers $\sV'_{ij}$ denote the $j$'th share of the encoding of the register $\sV_i$; the subregisters $\sA_i, \sB_i, \sO_i, \sN_i$ of $\sV$ are encoded into subregisters $\sA_{ij}',\sB_{ij}',\sO_{ij}',\sN_{ij}'$ of $\sV'$ respectively. In the honest case, the $j$'th verifier player $PV_j$ holds qubits $\{  \sV_{ij}' \}_i$. 

\item The prover players' $\{PP_1,\ldots,PP_k\}$ represent the original $k$ players of the protocol $V$ and $V_{enc}$. In the honest case, prover player $PP_i$ holds registers $\{ \sF_i  \sP_i \sM_i \}$. Note that these registers are not encoded and split up like with the clock and verifier registers.
\end{enumerate}

\paragraph{Player measurements} Since $V_{ZK}$ is a $1$-round $\MIP*$ protocol, we specify the strategy $\cS_{ZK}$ in terms of measurement operators. 
\begin{itemize}

\item When the verifier players receive a $6$-tuple of commuting Pauli observables $(W_1,\ldots,W_6)$, they measure each of the observables $\sigma_{W_1},\ldots,\sigma_{W_6}$ in sequence on the designated qubits of their share of $\ket{\Phi'}$, and report the measurement outcomes $(a_1,\ldots,a_6)$. For example, if $W_1 = X_1 Z_2$, then the corresponding observable would be $\sigma_X \otimes \sigma_Z$ acting on qubits labelled $1$ and $2$.

\item When prover player $PP_i$ receives a prover reflection question $\star_i$, they measure the following observable on the registers $\sF_{P_i} \sP_i \sM_i$:
\[
	P_i' = \ketbra{0}{1}_{\sF_{P_i}} \otimes P_i^\dagger + \ketbra{1}{0}_{\sF_{P_i}} \otimes P_i
\]
where $P_i$ acts on $\sP_i \sM_i$. It is easy to see that $P_i'$ is an observable with a $+1$ eigenspace and a $-1$ eigenspace.

\item When prover player $PP_i$ receives a ``Question gate'' question $Q_{ij}$, they measure the observable $\sigma_X$ on the register $\sM_{ij}$, and report the one-bit answer. When $PP_i$ receives an ``Answer gate'' question $A_{ij}$, they measure the observable $\sigma_Z$ on the register $\sM_{ij}$, and report the one-bit answer.

\item When prover player $PP_i$ receives the ``Question flag flip'' question $QF_i$, they measure the observable $\sigma_X$ on the register $\sF_{Q_i}$. When they receive ``Answer flag flip'' question $AF_i$, they measure the observable $\sigma_X$ on the register $\sF_{A_i}$. 

\end{itemize}

The analysis of the compression protocol in~\cite{FitzsimonsJVY18} implies that the strategy $\cS_{ZK}(x)$ is accepted in the protocol $V_{ZK}(x)$ with probability $1$. We now proceed to argue the zero knowledge property of the protocol $V_{ZK}$ with the honest player strategy $\cS_{ZK}$.

{
\renewcommand{\arraystretch}{1.2}
\begin{figure}[H]
\begin{center}
\begin{tabular}{l | p{12cm}}
 \textbf{Notation} & \textbf{Meaning} \\
\hline 
 $V_{enc}$			& 	The fault tolerant encoding of the original protocol $V$ \\
 $k$ 				&	 Number of provers in the protocol $V_{enc}$ \\
 $V_{ZK}$				&	The zero knowledge protocol \\
 $N_V$					 &   Number of verifier players  in $V_{ZK}$, which is $4$.    \\
 $t_\star(i)$ 	& The time that prover $i$ applies prover reflection $P_i'$. \\
 $q_i(t)$ 			& 	Indicator function that is $1$ iff $t \geq t_\star(i) - 1$. Used as a flag to indicate whether the questions for the $i$'th prover have been all copied. \\
  $p_i(t)$ 			& 	Indicator function that is $1$ iff $t \geq t_\star(i)$.  Used as a flag to indicate whether the $i$'th prover has applied its reflection $P_i'$. \\
 $a_i(t)$ 			& 	Indicator function that is $1$ iff $t \geq t_\star(i) + 1$. Used as a flag to indicate whether the $i$'th prover is ready to copy its answers to the verifier. \\
 $\cS_{ZK}$			&	The honest player strategy for $V_{ZK}$ \\  
 $\ket{\Phi}$		&	The unencoded history state of an interaction in the protocol $V_{enc}$ \\
 $\ket{\Phi_I}$		&	The restriction of the history state $\ket{\Phi}$ to a time interval $I$ \\ 
 $P_i'$				&	The prover reflection used by prover player $PP_i$ in $\cS_{ZK}$ \\
 $\adv{R}$			&  A (possibly cheating) referee in the protocol $V_{ZK}$ \\
 $\adv{W}$			& A tuple of questions in $V_{ZK}$, or the associated observable measured by the players in $\cS_{ZK}$. \\
 $\adv{W}^{(V,r)}, \adv{W}^{(P,r)}$	& Questions to the $r$'th verifier and prover players, respectively. \\
 $\tilde{W}^{(V,r)}, \tilde{W}^{(P,r)}$ & Players' measurement observables without the prover reflections \\
 $L$				 &   This is the number of verifier player qubits that can be addressed by a question $\adv{W}$. This is $12N_V = 48$, which is a constant.			 \\
\end{tabular}
\end{center}
\caption{Notation reference}
\label{fig:notation}
\end{figure}
}

\section{Zero knowledge property of $V_{ZK}$}
\label{sec:zk}
Let $\hat{R}(x)$ be an arbitrary referee (modelled as a probabilistic polynomial-time Turing machine) interacting with $k$ provers that use the measurement strategy $\cS_{ZK}(x)$ defined above. In general, this referee $\hat{R}(x)$ may try to gain forbidden knowledge by deviating from the behaviour of the referee specified by the protocol $V_{ZK}$. In this section, we show this cannot happen by describing an efficient simulator $\Sim(x)$ whose output
distribution is equal to $\View(\hat{R}(x) \leftrightarrow \cS_{ZK}(x))$.

A referee $\adv{R}$ could try to cheat by sampling questions from a different distribution than the one that is specified in the $V_{ZK}$ protocol. Furthermore, the referee could interact with the provers \emph{adaptively}: it could send some messages to a subset of the provers, get some answers, and depending on those responses choose questions for another set of provers. We can assume that a cheating referee  does not interact with the same prover in $V_{ZK}$ twice; since the protocol is supposed to be one round, an honest prover would abort the protocol if the referee interacted with it multiple times. Similarly, we assume that a cheating referee only asks questions that match the {\em format} of questions in $V_{ZK}$.

In Section~\Cref{sec:non-adaptive}, we show how to simulate the interaction between $\adv{R}$ and the players when $\adv{R}$ is \emph{non-adaptive}, meaning that the questions for all
players are picked simultaneously by the referee before interacting with them. 
In \Cref{sec:general-adversary} we show how to perform this simulation for general
adaptive referees $\adv{R}$. 

For the remainder of this section, we omit mention of the input $x$; we assume that the referee $\adv{R}$ and the strategy $\cS_{ZK}$ implicitly depend on $x$.

We introduce some additional notation. 
\begin{itemize}
	\item Let $N_V = 4, N_P = k$ denote the number of verifier players and prover players, respectively. 
	\item Let $\adv{W}^{(V,r)}$, and  $\adv{W}^{(P,r)}$ denote the question for the $r$'th verifier player and $r$'th prover player respectively. The question $\adv{W}^{(V,r)}$ is a $6$-tuple $(\adv{W}^{(V,r)}_1,\ldots,\adv{W}^{(V,r)}_6)$ of commuting two-qubit Pauli observables. 
	
	\item For $r \in [N_V]$, for $j \in \{1,\ldots,6\}$, we overload notation by also letting $\adv{W}^{(V,r)}_j$ denote the $j$'th Pauli observable used by $r$'th verifier player in the honest strategy $\cS_{ZK}$ when they receive question $\adv{W}^{(V,r)}$ , as specified in Section~\ref{sec:honest}. We also let $\adv{W}^{(V,r)}$ denote the observable that is the product $\adv{W}^{(V,r)}_1 \cdots \adv{W}^{(V,r)}_6$ (the order does not matter because the observables commute). Whether or not $\adv{W}^{(V,r)}_j$ and $\adv{W}^{(V,r)}$ are used to refer to the question or the observables will be clear from context.
	
	\item For $r \in [N_P]$, we let $\adv{W}^{(P,r)}$ also denote the observable used by prover player $PP_r$ in the honest strategy $\cS_{ZK}$ when they receive question $\adv{W}^{(P,r)}$. For example, if $\adv{W}^{(P,r)}$ is a ``Question gate'' $Q_{rj}$ or a ``Question flag flip'' $QF_{r}$, then as an observable we interpret $\adv{W}^{(P,r)}$ as the corresponding Pauli observable in the honest strategy $\cS_{ZK}$. If $\adv{W}^{(P,r)}$ is a prover reflection $\star_r$, then as an observable we interpret $\adv{W}^{(P,r)}$ as $P'_r$. 
	
	\item Let $\adv{W} = \Paren{ \adv{W}^{(D,r)} }_{D \in \{V,P\}, r \in [N_D]}$ denote the tuple of questions for all players in the protocol. We also use $\adv{W}$ to denote the tensor product of observables
	\[
		\adv{W} = \bigotimes_{D \in \{V,P\}, r \in [N_D]} \adv{W}^{(D,r)}.
	\]
	
	\item For $r \in [N_V]$, $j \in \{1,\ldots,6\}$, we define the observable $\tilde{W}^{(V,r)}_j$ to be $\adv{W}^{(V,r)}_j$. For $r \in [N_P]$, we define the observable 
		\[
			\tilde{W}^{(P,r)} = 
			\left \{ 
			\begin{array}{ll}
				\sigma_X(\sF_{P_r}) & \text{if } \adv{W}^{(P,r)} = \star_r \\
				\adv{W}^{(P,r)} &  \text{otherwise }
			\end{array}
			\right.
		\]
	
	Notice that the observables $\tilde{W}^{(D,r)}$ are simply Pauli observables (or products of Pauli observables). We explain the reasoning behind defining the observables $\tilde{W}^{(D,r)}$ in the next section. 
	
  \item We also define the projectors corresponding to the players' observables. The verifier players output a $6$-tuple of bits $(b_1,\ldots,b_6) \in \{0,1\}^6$. For $r \in [N_V]$, $j \in \{1,\ldots,6\}$, and bit $b \in \{0,1\}$, define
  \[
  	\adv{W}^{(V,r)}_j(b) = \frac{1}{2} \Paren{ I + (-1)^b \adv{W}^{(V,r)}_j}
  \]
  which is the projector onto the $b$ subspace of $\adv{W}^{(V,r)}_j$. Define
  \[
  	\adv{W}^{(V,r)}(b_1,\ldots,b_6) = \prod_{i = 1}^6 \adv{W}^{(V,r)}_i(b_i).
  \]
	The prover players only output a single bit, so for $r \in [N_P]$, define
  \[
  \adv{W}^{(P,r)}(b) = \frac{1}{2} \Paren{ I + (-1)^b \adv{W}^{(P,r)}}.
  \]
  
  Let $a = \Paren{ a^{(D,r)} }_{D \in \{V,P\}, r \in [N_D]}$ denote an answer vector for all players (where $a^{(V,r)}$ corresponds to a $6$-tuple of bits). Then for every tuple of questions $\adv{W}$, we define
	  \[
      \adv{W}(a) = \bigotimes_{D \in \{V,P\}, r \in [N_D]} \adv{W}^{(D,r)}(a^{(D,r)}).
	  \]

	We define the projectors $\tilde{W}^{(D,r)}(b)$ and $\tilde{W}^{(D,r)}$ analogously.
\end{itemize}
For convenience we also provide a notation reference table in Figure~\ref{fig:notation}.

\subsection{Non-adaptive cheating referees}
\label{sec:non-adaptive}
In this section, we show that for \emph{every} possible
combination of (correctly formatted) questions to the players, 
the joint distribution of answers of
players using the honest strategy $\cS_{ZK}$ is efficiently simulable.

The reason for defining the observables $\tilde{W}^{(D,r)}$ is as follows. Ultimately, the goal of the simulator is, for every question tuple $\adv{W}$, to sample answer vectors $a$ that is distributed according to the probability density
\[
	\Tr \Paren{ \outerEnc(\Phi) \, \adv{W}(a)}
\]
where $\Phi = \ketbra{\Phi}{\Phi}$ is the shared entangled state and $\adv{W}(a)$ also denotes the projectors corresponding to outcome $a$ in the honest strategy $\cS_{ZK}(x)$ (see Section~\ref{sec:honest}). The main difficulty is that the simulator does not have any control over the prover reflections, nor the parts of $\Phi$ that correspond to the provers' private registers (which may be unbounded in size). 

To get around this issue, the key observation we use is the following: the measurements of the verifier players are Pauli observables that act on at most a constant number of qubits. Furthermore, the measurements of the prover players when they get a question other than the special prover reflection $\star$ are also just Pauli observables on a constant number of qubits. 

We define two notions of support of a question tuple $\adv{W}$. Then, define the \emph{physical support} of $\adv{W}$ to be the set $S_{\adv{W}}'$ of qubit registers that are acted upon nontrivially by $\adv{W}$, omitting the prover players' private $\sP_r$ registers. The set $S_{\adv{W}}'$ contains subregisters of $\sC', \sV', \sM, \sF$. 

We can also define the \emph{logical support} of $\adv{W}$, denoted by the set $S_{\adv{W}}$, which contains subregisters of $\sC, \sV, \sM, \sF$ that correspond to the registers in $S_{\adv{W}}'$. The logical support set $S_{\adv{W}}$ contains all the $\sM_{ij}$ and $\sF_i$ registers that are in $S_{\adv{W}}'$. The set $S_{\adv{W}}$ contains $\sC_t$ if and only if $S_{\adv{W}}'$ contains $\sC_{tj}$ for some $j$, and similarly contains $\sV_i$ if and only if $S_{\adv{W}}'$ contains $\sV_{ij}$ for some $j$. The difference between the physical and logical support of $\adv{W}$ comes from the fact that the history state $\ket{\Phi}$ was encoded using $\outerCode$ and split between multiple provers.

Note that the number of qubit registers in $S_{\adv{W}}$ is at most $12N_V + k$. This is because the questions to each verifier player is a $6$-tuple of Pauli observables that act on up to $2$ qubits, and each prover player measures at most a single qubit flag register at a time. Define $L = 12N_V = 48$, which is the maximum number of verifier player qubits that can be addressed by $\adv{W}$.

For all $\adv{W}$, the simulator computes a \emph{succinct} description of density matrix $\rho$ defined only on the logical registers in $S_{\adv{W}}$ that mimics $\Phi$ in a certain sense that is captured by the following Lemma~\ref{lem:density-matrix}. Before stating the Lemma, however, we specify what we mean by succinct description of $\rho$. In general, $\rho$ will be a density matrix with dimension at least $2^{12 N_V + k}$, so the na\"{i}ve strategy of explicitly storing all the matrix entires of $\rho$ is not an efficient representation if the number of prover players $k$ is a growing function. Instead, we will specify our density matrices $\rho$ and measurement operators using the following type of efficient representation:

\begin{definition}[Efficient representations of operators]
\label{def:efficient_representation}
	Let $A$ denote a linear operator defined on $m$ qubits. The operator $A$ has an \emph{$(w,\ell)$-efficient representation} if there exist, for all $i \in \{1,2,\ldots,w\}$, a collection of operators $\{ A_{ij} \}$ where each $A_{ij}$ is defined on some subset $S_{ij} \subseteq \{1,2,\ldots,m\}$ of qubit registers, and 
\begin{enumerate}
	\item For all $i \in \{1,2,\ldots,w\}$, $\{ S_{ij} \}_j$ is a partition of $\{1,2,\ldots,m\}$.
	\item $|S_{ij}| \leq \ell$ for all $i,j$.
	\item The explicit matrix representation of $A_{ij}$ can be described using $2^{O(\ell)}$ bits.
	\item $A = \sum_{t = 1}^w \bigotimes_{j} A_{ij}$.
\end{enumerate}
\end{definition}

\noindent The following Claim justifies our definition of ``efficient representation'':
\begin{claim}
\label{clm:efficient_trace}
	Let $A,B$ be $m$-qubit operators with $(w,\ell)$-efficient representations $\{A_{ij}\}$ and $\{B_{ij}\}$, respectively. First, the efficient representations of both operators have bit complexity $w \cdot 2^{O(\ell)} \cdot \poly(m)$. Second, the trace $\Tr(AB)$ can be computed in time $w \cdot 2^{O(\ell)} \cdot \poly(m)$. 
\end{claim}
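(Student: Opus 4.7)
The claim has two parts, both handled by a direct analysis of the tensor-product structure of $(w,\ell)$-efficient representations.

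\emph{Bit complexity.} For each of the $w$ terms $\bigotimes_j A_{ij}$, the partition $\{S_{ij}\}_j$ of $\{1,\ldots,m\}$ has at most $m$ non-empty parts and is encoded by assigning each qubit to its part in $O(m\log m)$ bits; each factor $A_{ij}$ is a $2^{|S_{ij}|}\times 2^{|S_{ij}|}$ matrix with at most $4^\ell$ entries. In the simulator's application, every matrix arising is built from partial traces and products of fixed-size circuit elements (the gate set $\{H,\Lambda(X),\Lambda^2(X)\}$ together with computational-basis projectors), so its entries are rationals of $\poly(m)$ bit-length. Hence each term uses $2^{O(\ell)}\cdot\poly(m)$ bits, and summing over $\leq m$ parts per term and $w$ terms yields the claimed bit complexity.

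\emph{Trace computation.} I plan to expand by linearity, $\Tr(AB)=\sum_{i,i'=1}^{w}\Tr\bigl((\bigotimes_j A_{ij})(\bigotimes_{j'} B_{i'j'})\bigr)$, and evaluate each term-pair via a tensor-network contraction. For a given pair $(i,i')$, I construct the bipartite \emph{contraction graph} $G_{ii'}$ whose vertices are the $A$-parts and $B$-parts, with an edge between $A_{ij}$ and $B_{i'j'}$ whenever $S_{ij}\cap S_{i'j'}\neq\emptyset$. Because each qubit lies in a unique $A$-part and a unique $B$-part, the qubits of $\{1,\ldots,m\}$ partition into the vertex-supports $\{Q_C\}_C$ of the connected components of $G_{ii'}$. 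Pairwise commutativity of the $A$-parts (disjoint supports) and of the $B$-parts then yields the factorization
\[
    \Tr(A_iB_{i'}) \;=\; \prod_C \Tr_{Q_C}\!\bigl((A_iB_{i'})|_{Q_C}\bigr),
\]
and each sub-trace is computed by forming the product of the operators on $Q_C$ as a $2^{|Q_C|}\times 2^{|Q_C|}$ matrix and taking its trace in $2^{O(|Q_C|)}$ time.

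\emph{Main obstacle.} The delicate step is securing the claimed \emph{linear}-in-$w$ runtime rather than the naive $w^2$. This is achieved by exploiting the specific operators for which the claim is invoked in Section~\ref{sec:non-adaptive}: the measurement operator $\adv{W}(a)$ is a single tensor product across players (so $w_B=1$, collapsing the inner sum to $w$ active pairs), and the simulator's density-matrix representation is constructed with the same per-player partition used by $\adv{W}(a)$. The two partitions then share a common refinement equal to the per-player partition, each connected component of $G_{ii'}$ consists of one $A$-part and one $B$-part on a single player's $O(\ell)$ qubits, and $|Q_C|=O(\ell)$. Each pair trace therefore costs $\poly(m)\cdot 2^{O(\ell)}$, and summing over the $w$ active pairs yields the stated $w\cdot 2^{O(\ell)}\cdot\poly(m)$ runtime.
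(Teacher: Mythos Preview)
The paper states this claim without proof, so there is no argument in the paper to compare against; your proposal is filling a gap the authors left implicit.

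Your bit-complexity count is correct. For the trace, you are right that the bilinear expansion produces $w^2$ pairs, and---more seriously---that for a single pair $(i,i')$ with unrelated partitions the contraction graph $G_{ii'}$ can be a bounded-degree expander, so even one pair need not be contractible in $2^{O(\ell)}\poly(m)$ time. The claim as literally written (two \emph{arbitrary} $(w,\ell)$-represented operators) is therefore too optimistic, and your retreat to its sole invocation in Lemma~\ref{lem:non-adaptive}---where the measurement operator $\tilde W(a)$ has a $(1,O(1))$-efficient representation, collapsing the outer sum to $w$ terms---is the right move and almost certainly what the authors intend.

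The one soft spot is the assertion that, in that invocation, the two partitions ``share a common refinement equal to the per-player partition.'' They do not: in the representation built in Lemma~\ref{lem:sim_density_interval}, each term of $\rho$ carries a single snapshot block $\Tr_{\comp{S}}(\ketbra{\Delta_t}{\Delta_{t'}})$ which, after the outer encoding, spans all four verifier players' shares at once, so it is not refined by the per-player partition. What you actually need is the weaker statement that each connected component of $G_{ii'}$ has $|Q_C|=O(L)$; this follows because there are only four verifier-player $B$-parts (each of size at most $12$), the prover-player $B$-parts are single qubits, and the only $A$-part of non-constant size is the lone snapshot block. Replacing the common-refinement claim with this direct bound on component size would make the argument complete.
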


\noindent We can now state our main simulation Lemma:
\begin{lemma}\label{lem:density-matrix}
There is a PPT algorithm $\SimDensity$ that when given a tuple $\adv{W}$ of questions, outputs a $(3(T+1) L^2,4L)$-efficient representation of a density matrix $\rho$ such that
for all answer vectors $a = \Paren{ a^{(D,r)} }_{D \in \{V,P\}, r \in [N_D]}$, we have that
\begin{equation}
\label{eq:sim_density}
\Tr \Paren{ \outerEnc(\Phi) \, \adv{W}(a) } = \Tr \Paren{ \outerEnc(\rho) \,
  \tilde{W}(a)}.	
\end{equation}
Furthermore, the density matrix $\rho$ is defined on the logical support $S_{\adv{W}}$ of $\adv{W}$.
\end{lemma}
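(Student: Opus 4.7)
The plan has two main steps. First, I would reduce the composite observables $\adv{W}(a)$ to the pure-Pauli observables $\tilde{W}(a)$ by a local unitary conjugation applied to the history state $\ket{\Phi}$. Second, I would compute the logical reduced density matrix $\rho$ of the conjugated state on the logical support $S_{\adv{W}}$, using the unary structure of the clock register together with the simulatability of the inner code $\innerCode$.

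For the reduction step, for each prover index $r$ introduce the controlled unitary $U_r = \ketbra{0}{0}_{\sF_{P_r}} \otimes I + \ketbra{1}{1}_{\sF_{P_r}} \otimes P_r$ acting on $\sF_{P_r}\sP_r\sM_r$. A direct computation gives $U_r\sigma_X(\sF_{P_r})U_r^\dagger = P_r'$, and therefore $U \tilde{W}(a) U^\dagger = \adv{W}(a)$, where $U$ is the product of the $U_r$'s ranging over those $r$ for which $\adv{W}^{(P,r)} = \star_r$. The crucial observation is that each $U_r$ acts only on the registers owned by prover player $r$, so it commutes both with $\outerEnc$ (which touches only $\sC$ and $\sV$) and with the measurement operator of any other player appearing in $\adv{W}(a)$. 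Setting $\ket{\tilde{\Phi}} = U^\dagger \ket{\Phi}$ and $\tilde{\Phi} = \ketbra{\tilde{\Phi}}{\tilde{\Phi}}$ then yields
\[
    \Tr\Paren{\outerEnc(\Phi)\,\adv{W}(a)} = \Tr\Paren{\outerEnc(\tilde{\Phi})\,\tilde{W}(a)}.
\]

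For the second step, take $\rho = \Tr_{\overline{S_{\adv{W}}}}(\tilde{\Phi})$. Because $\outerEnc$ acts qubit-wise on $\sC$ and $\sV$, and $\tilde{W}(a)$ is supported inside the outer encoding of $S_{\adv{W}}$ together with the unencoded logical $\sM, \sF$ registers in $S_{\adv{W}}$, the identity $\Tr(\outerEnc(\tilde{\Phi})\tilde{W}(a)) = \Tr(\outerEnc(\rho)\tilde{W}(a))$ follows from the commutation of partial trace with the per-qubit outer encoding. It remains to give a polynomial-time-computable efficient representation of $\rho$. Expanding in snapshots,
\[
    \rho = \frac{1}{T+1} \sum_{s,t=0}^T \Tr_{\overline{S_{\adv{W}}}}\Paren{ \ket{\unary(s)}\bra{\unary(t)}_\sC \otimes \ket{\tilde{\Delta}_s}\bra{\tilde{\Delta}_t}_{\sV\sM\sP} \otimes \ket{f(s)}\bra{f(t)}_\sF},
\]
and the unary structure of the clock forces the cross-term to vanish unless every position in $\{\min(s,t)+1,\ldots,\max(s,t)\}$ lies in the clock part of $S_{\adv{W}}$. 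Since $S_{\adv{W}}$ contains at most $L$ clock positions, this leaves at most $O(TL)$ contributing pairs $(s,t)$.

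For each surviving pair $(s,t)$, I would use the relation $\ket{\tilde{\Delta}_t} = G_{st}\ket{\tilde{\Delta}_s}$, where $G_{st}$ is a product of at most $O(L)$ constant-locality gates of $V_{enc}$ together with at most $O(1)$ residual $P_r^\dagger$ factors introduced by $U^\dagger$. By the $192$-simulatability of $\innerCode$ (\Cref{T:simulatable}), the marginal of $\ketbra{\tilde{\Delta}_s}{\tilde{\Delta}_s}$ on the $O(L)$-qubit set covering $S_{\adv{W}}$ and the support of $G_{st}$ is computable in polynomial time, independently of the specific honest strategy $\cS$; this handles the inner-code encoded portions of $\sV$ and the encoded messages in $\sM$. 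Tracing out the remaining $\sP_r$ registers is unproblematic because after applying $U^\dagger$ the actions of $P_r$ on $\sP_r$ cancel within each short time window, so the $\sP_r$ marginals enter only as fixed scalar factors. Combining these per-pair operators with the tensor product structure across independent prover registers and the constant-locality of the Pauli clock factor yields the claimed $(3(T+1)L^2, 4L)$-efficient representation. The main subtlety, in my view, is verifying that the padding of idling gates around each prover action (as specified in Section~\ref{sec:micro-phases}) guarantees that at most one prover gate $P_r$ for any fixed $r$ can occur within a window of $L$ consecutive time steps, so that the residual $P_r^\dagger$ factors introduced by $U^\dagger$ never enlarge the effective support beyond what simulatability can handle.
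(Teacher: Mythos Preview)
Your conjugation trick $U_r\,\sigma_X(\sF_{P_r})\,U_r^\dagger = P_r'$ is correct and gives a clean reduction from $\adv{W}(a)$ to $\tilde{W}(a)$; the paper achieves the same reduction more indirectly, via a case analysis on whether each prover flag $p_r$ changes on the relevant time window. However, your second step has a real gap. You assert that the marginal of $\ketbra{\tilde{\Delta}_s}{\tilde{\Delta}_s}$ on ``the $O(L)$-qubit set covering $S_{\adv{W}}$ and the support of $G_{st}$'' is efficiently computable and independent of the honest strategy. This fails in the Copy Answer phase when a prover $r$ with $\adv{W}^{(P,r)}=\star_r$ has already acted: then $\ket{\tilde{\Delta}_s}=P_r^\dagger\ket{\Delta_s}$, the gates $g_{s+1},\ldots,g_t$ include CNOTs touching $\sM_r$, and your $G_{st}=P_r^\dagger(g_t\cdots g_{s+1})P_r$ has support on all of $\sP_r\sM_r$. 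So the set on which you want the marginal is not $O(L)$-size; and even on a bounded subset of $\sM_r$, the marginal of $P_r^\dagger\ketbra{\Delta_s}{\Delta_s}P_r$ genuinely depends on $P_r$. The cancellation you invoke (``the actions of $P_r$ on $\sP_r$ cancel within each short time window'') does not occur at the level of the enlarged marginal --- it only occurs after tracing out \emph{all} of $\sP_r\sM_r$, via cyclicity of the partial trace, and for that you must first trace down to $S_{\adv{W}}$ (which does exclude $\sM_r$ for starred $r$), not up to $S_{\adv{W}}\cup\mathrm{supp}(G_{st})$. A related omission: when the window straddles $t_\star(r')$ for a \emph{non}-starred $r'$, your $G_{st}$ contains the unknown unitary $P_{r'}$ itself; that term is in fact zero because the traced-out flag $\sF_{P_{r'}}$ contributes $\langle 0|1\rangle=0$, but you never invoke this.

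The paper avoids both issues by never forming $\ket{\tilde{\Delta}_s}$. It works directly with $\ket{\Delta_s}$, whose small marginals are handled by \Cref{lem:sim_snapshot}, and uses the flag registers $\sF_{Q_r},\sF_{P_r},\sF_{A_r}$ to annihilate every cross-term that would otherwise require knowledge of a prover unitary. In the one window where a starred prover's reflection genuinely contributes, the paper exploits $\ket{\Delta_{t_\star(r)}}=P_r\ket{\Delta_{t_\star(r)-1}}$ together with the explicit form of $P_r'$ to absorb the reflection into a two-term clock superposition, reducing again to a marginal of $\ketbra{\Delta_{t_\star(r)-1}}{\Delta_{t_\star(r)-1}}$. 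Your conjugation idea can be salvaged by inserting exactly these two observations (cyclicity on $\sP_r\sM_r\subseteq\overline{S_{\adv{W}}}$ for starred $r$, and the flag-induced vanishing for non-starred $r'$), but at that point the argument essentially coincides with the paper's.
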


Before proving \Cref{lem:density-matrix}, we first prove a specialized version. Let $1 \leq t_1 \leq t_2 \leq T$ be such that $t_2 - t_1 \leq L$ and let $I(t_1,t_2) = \{ t : t_1 \leq t \leq t_2 \}$ denote the interval of time steps between $t_1$ and $t_2$. We 
show that we can simulate measurements on the state $\outerEnc(\Phi_I)$ where 
$\ket{\Phi_{I(t_1,t_2)}}$ is the post-measurement state
\[
	\ket{\Phi_{I(t_1,t_2)}} = \frac{1}{\sqrt{t_2 - t_1+1}} \sum_{t \in I(t_1,t_2)} \ket{\unary(t)}_{\sC} \otimes \ket{\Phi_t}_{\sV \sM \sP \sF}
\]
In other words, $\ket{\Phi_{I(t_1,t_2)}}$ denotes the part of the history state between times $t_1$ and $t_2$. Furthermore, when convenient we will omit mention of the $\unary$ encoding of the clock, and simply refer to the state of the clock register as $\ket{t}$. 

\begin{lemma}\label{lem:sim_density_interval}
There is a PPT algorithm $\SimInterval$ that when given a tuple $\adv{W}$ of questions and a pair of times $0 \leq t_1 \leq t_2 \leq T$ such that $t_2 - t_1 \leq L$, outputs a $(3L^2,4L)$-efficient representation of a density matrix $\rho$ such that
for all answer vectors $a = \Paren{ a^{(D,r)} }_{D \in \{V,P\}, r \in [N_D]}$, we have that
\begin{equation}
\label{eq:sim_density_interval}
\Tr \Paren{ \outerEnc(\Phi_{I(t_1,t_2)}) \, \adv{W}(a) } = \Tr \Paren{ \outerEnc(\rho) \,
  \tilde{W}(a)}.	
\end{equation}
Furthermore, the density matrix $\rho$ is defined on the logical support $S_{\adv{W}}$ of $\adv{W}$.
\end{lemma}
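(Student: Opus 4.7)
The plan is to expand $\ketbra{\Phi_I}{\Phi_I}$ as a double sum over snapshot pairs, dispatch the three kinds of observable in $\adv{W}$ (outer-encoded Pauli observables on the clock/verifier registers, Pauli observables on the message and flag registers of the prover players, and prover reflections $P_r'$) one at a time, and then assemble $\rho$ explicitly as a sum of constant-support tensor products. First, I would write
\[
  \Phi_I \;=\; \frac{1}{|I|}\sum_{t,t'\in I}\ket{t}\!\bra{t'}_{\sC}\otimes \ket{\Phi_t}\!\bra{\Phi_{t'}}_{\sV\sM\sP\sF}.
\]
Because $|I|\le L$ and $\adv{W}$ touches at most $12N_V$ physical qubits of the verifier players plus a constant number of prover-player qubits, the LHS of \eqref{eq:sim_density_interval} depends only on the partial trace of this operator on the physical support $S_{\adv{W}}'$; and there are at most $L^2$ pairs $(t,t')$ to worry about. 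Since $\outerCode$ is applied bitwise to $\sC$ and $\sV$, each Pauli factor in a verifier-player's question acts on a single block of a single outer codeword at a time, so the clock ``overlap'' $\bra{t'}(\cdot)\ket{t}$ on the outer-encoded $\sC$ reduces to an explicit computation depending on the local Pauli choice.

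Second, I would analyze the prover reflections. The key combinatorial fact is that the padding inserted in the Prover Operation Phase of $V_{enc}$ (at least a constant multiple of the $\innerCode$ block length $\ge L$) forces each prover $r$ to apply its gate at most once within any length-$L$ window; so there are three mutually exclusive cases for $t_\star(r)$ relative to $I$, namely ``strictly before'', ``inside'', or ``strictly after''. In the ``inside'' case, snapshots $\ket{\Phi_t}$ for $t<t_\star(r)$ have the flag $\sF_{P_r}$ in $\ket{0}$ with $P_r$ not yet applied to $\sP_r\sM_r$, while for $t\ge t_\star(r)$ the flag is $\ket{1}$ and $P_r$ has been applied. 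Using $P_r'=\ketbra{0}{1}\otimes P_r^\dagger+\ketbra{1}{0}\otimes P_r$, I expect a direct computation to show that on each summand $\ket{t}\bra{t'}\otimes\ket{\Phi_t}\bra{\Phi_{t'}}$ the action of $P_r'$ equals the action of $\sigma_X(\sF_{P_r})$ on a related operator in which $P_r$ has been ``factored out'' of the $\sP_r\sM_r$ registers: in effect $P_r'$ swaps pre-prover and post-prover snapshots in the same way that $\sigma_X$ on the flag would do if $P_r$ were absent. The other two cases are easier: when $t_\star(r)$ lies strictly outside $I$, the flag $\sF_{P_r}$ is constant across $I$, and $P_r'$ either picks up a coherent phase from $P_r^{\pm1}$ (which cancels on each diagonal/off-diagonal pair) or maps the state outside the snapshot support and so contributes zero.

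Third, with every prover reflection replaced by a local $\sigma_X$ on a flag register, I would build $\rho$ as a sum, indexed by pairs $(t,t')\in I\times I$ and by the (at most) three prover-position cases, of tensor products in which (i) the factor acting on the verifier players' physical qubits is the reduced density matrix of $\outerEnc(\ket{\Phi_t}\bra{\Phi_{t'}})$ on $S_{\adv{W}}'$, and (ii) each prover player $PP_r$ contributes a constant-support factor on $\sF_r\sM_r$ reflecting the replaced reflection. The factor in (i) can be computed in polynomial time because $\ket{\Phi_t}$ and $\ket{\Phi_{t'}}$ differ by a constant number of gates from a common reference snapshot (they straddle at most a constant number of micro-phases), and the $192$-simulatability of $\innerCode$ promised by \Cref{T:simulatable} — noting $4L = 192$ matches the verifier players' share of physical qubits — provides the partial traces of both the encoded logical state and all ancilla/magic states involved in the intervening encoded gadgets. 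A counting check then gives the claimed $(3L^2,4L)$-efficient representation.

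The main obstacle will be Step 2, replacing prover reflections by local flag operations. The argument cannot use any knowledge of $P_r$ itself (which is an uncharacterized unitary on a potentially unbounded Hilbert space), so it must rely only on the algebraic relation between consecutive snapshots enforced by the history-state structure, together with the padding-based guarantee that at most one prover operates within $I$. A secondary technical point is to handle off-diagonal outer products $\ket{\Phi_t}\bra{\Phi_{t'}}$ rather than just the reduced density of a single snapshot, and to verify that simulatability of $\innerCode$ continues to apply in that case; I expect this to go through because $\ket{\Phi_t}$ and $\ket{\Phi_{t'}}$ agree outside a constant-sized region of physical qubits, so the required partial traces are again of bounded size and computable from the simulatable encoding data.
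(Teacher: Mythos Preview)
Your outline matches the paper's proof closely: expand $\Phi_I$ over snapshot pairs, replace each prover reflection $P_r'$ by the local operator $\sigma_X(\sF_{P_r})$, and invoke simulatability of $\innerCode$ (via what the paper packages as Lemma~\ref{lem:sim_snapshot}) to compute the constant-width partial traces of the off-diagonal operators $\ketbra{\Delta_t}{\Delta_{t'}}$. The padding argument and the use of $4L=192$ are exactly right.

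There is one structural ingredient you have not identified, and without it your Step~2 as written does not close. In the ``inside'' case you only track the flag $\sF_{P_r}$. The honest strategy also carries two auxiliary flags $\sF_{Q_r}$ and $\sF_{A_r}$ that flip at times $t_\star(r)-1$ and $t_\star(r)+1$, and these \emph{cannot} lie in $S_{\adv W}$ when player $r$ receives $\star_r$ (that player asked no $QF_r$ or $AF_r$ question). Tracing them out forces the reduced state to be a convex combination over the three subintervals $I^-=\{t<t_\star-1\}$, $I^\star=\{t_\star-1,t_\star\}$, $I^+=\{t\ge t_\star+1\}$; in particular the only off-diagonal pair straddling $t_\star$ that survives is $(t_\star-1,t_\star)$. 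The paper then handles $I^\pm$ exactly as in the constant-flag case, and dispatches $I^\star$ by a two-line computation in which $\ket{\Phi_{t_\star}}=P_{r^*}'\ket{\Phi_{t_\star-1}}$ and cyclicity of the trace eliminate $P_{r^*}$ entirely, leaving everything in terms of $\ketbra{\Delta_{t_\star-1}}{\Delta_{t_\star-1}}$. Your per-pair ``factoring out'' would eventually arrive at the same place---all the extra cross-terms die once $\sF_{Q_r},\sF_{A_r}$ are traced out---but as stated you have not named the mechanism that kills them, and without it the ``related operator'' you allude to is not well-defined for a generic straddling pair.

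A minor correction: in the ``strictly before/after'' case, no phase from $P_r^{\pm1}$ is picked up or cancelled. When $p_r$ is constant on $I$, both $P_r'$ and $\sigma_X(\sF_{P_r})$ flip $\sF_{P_r}$ out of the support of $\ketbra{f(t)}{f(t')}$, so the non-identity halves of both projectors $\adv W^{(P,r)}(b)$ and $\tilde W^{(P,r)}(b)$ contribute zero, while the identity halves agree trivially.
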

\begin{proof}
Let $I = I(t_1,t_2)$ and $S = S_{\adv{W}}$.
Because of padding, we can assume without loss of generality that the time interval $I$ belongs entirely to one of the six phases of the protocol circuit $V_{enc}$ defined in Section~\ref{sec:transformation}. 

We can write for all $t \in I$,
\begin{equation}
\label{eq:sim_density_interval0}
	\ket{\Phi_t} = \ket{\Delta_t}_{\sV \sM \sP} \otimes \ket{f(t)}_{\sF}.
\end{equation}
Thus, we have
\[
	\ketbra{\Phi_I}{\Phi_I} = \frac{1}{|I|} \sum_{t,t' \in I} \ketbra{t}{t'}_{\sC} \otimes \ketbra{\Delta_t}{\Delta_{t'}}_{\sV \sM \sP} \otimes \ketbra{f(t)}{f(t')}_{\sF}.
\]
The left hand side of~\eqref{eq:sim_density_interval} can be written as
\begin{align}
	&\Tr \Paren{ \outerEnc(\Phi_I) \, \adv{W}(a) } \notag \\
	&= \frac{1}{|I|} \sum_{t,t' \in I} \Tr \Paren{ \outerEnc\Paren{\ketbra{t}{t'} \otimes \ketbra{\Delta_t}{\Delta_{t'}} } \otimes \ketbra{f(t)}{f(t')}\, \adv{W}(a) }
	\label{eq:sim_density_interval1}
\end{align}

\noindent We consider two cases. 

\paragraph{Case 1.} First, suppose that the following holds for all $r \in [N_P]$: either the $r$'th prover flag $p_r(t)$ stays constant throughout the interval $I$, or if it changes, then $\adv{W}^{(P,r)} \neq \star$ (that is, prover player $PP_r$ was not asked a $\star$ question).

Fix a $t,t' \in I$. Let $\adv{W}^{\star}(a)$ denote the tensor factors of $\adv{W}(a)$ corresponding to the prover players who received a $\star$ question (if none received a $\star$ question, then this operator is the identity). Similarly, let $\tilde{W}^\star(a)$ denote the tensor factors of $\tilde{W}(a)$ corresponding to the prover players who received a $\star$ question. Thus, $\tilde{W}^\star(a)$ is tensor product of $\sigma_X$ operators and identity operators. Under our assumption, any operator $A$ defined on registers $\sC \sV \sM \sP$, we have that
\[
	\Tr \Paren{ \Paren{A \otimes \ketbra{f(t)}{f(t')}_{\sF} } \adv{W}^{\star}(a)} = \Tr \Paren{ \Paren{A \otimes \ketbra{f(t)}{f(t')}_{\sF} } \tilde{W}^{\star}(a)}.
\]
This is because of the following: consider the set $J \subseteq [N_P]$ who
  received a $\star$ question. If $J$ is empty, then $\adv{W}^\star =
  \tilde{W}^\star = \Id$,\footnote{We denote the identity matrix as $\Id$ here
  in order to avoid confusion with the interval $I$.} so the equation trivially holds. If $J$ is non-empty, then by assumption for any $r \in J$, the prover flags for $r$ stay constant on the interval $I$, so the traces are $0$.
This implies that~\eqref{eq:sim_density_interval1} is equal to
\[
\frac{1}{|I|} \sum_{t,t' \in I} \Tr \Paren{ \outerEnc\Paren{\ketbra{t}{t'} \otimes \ketbra{\Delta_t}{\Delta_{t'}} } \otimes \ketbra{f(t)}{f(t')}\, \tilde{W}(a) }
\]
We now argue that a $(1,4L)$-efficient representation of the following operator
\[
	\rho_{S}(t,t') = \Tr_{\comp{S}} \Paren{\ketbra{t}{t'} \otimes \ketbra{\Delta_t}{\Delta_{t'}}  \otimes \ketbra{f(t)}{f(t')} }
\]
can be efficiently computed in polynomial time, where $\Tr_{\comp{S}}(\cdot)$ denotes tracing out all registers except those in $S$. Notice that $S$ does not include the register $\sP$, and has at most $O(N_V + k)$ qubit registers.

Given this is true, and using the fact that $|I| \leq L$, then a $(L^2,4L)$-efficient representation of
\[
	\rho_{S} = \frac{1}{|I|} \sum_{t,t' \in I} \rho_{S}(t,t')
\]
can be computed in polynomial time, and satisfies~\eqref{eq:sim_density_interval}. 

Since $\Tr_{\comp{S}} \Paren{\ketbra{t}{t'}}$ and $\Tr_{\comp{S}} \Paren{\ketbra{f(t)}{f(t')}}$ have $(1,1)$-efficient representations (i.e. these are tensor products of single qubit operators), it suffices to show that we can efficiently compute $\Tr_{\comp{S}}(\ketbra{\Delta_t}{\Delta_{t'}})$. Assume without loss of generality that $t \leq t'$. 

First, we consider the sub-case that all prover flags $\{p_r(t)\}$ stay constant throughout the interval $I$. This means that there exists a sequence of elementary gates $g_t,g_{t+1},\ldots,g_{t'}$ (i.e., no prover gates) such that
\[
	\ket{\Delta_{t'}} = g_{t'} g_{t'-1} \cdots g_{t+1} g_t \ket{\Delta_t}.
\]
Let $G$ denote the union of the registers that are acted upon by the gates $g_t,\ldots,g_{t'}$. Since $t' - t \leq |I| \leq L$, and each gate acts on at most $3$ qubits, we get that $|G| \leq 3L$. Now, we can write
\begin{align*}
	\Tr_{\comp{S}}(\ketbra{\Delta_t}{\Delta_{t'}}) &= \Tr_{\comp{S}}(\ketbra{\Delta_t}{\Delta_{t}} g_{t'}^\dagger \cdots g_t^\dagger) \\
	&= \Tr_{G \cap \comp{S}} \Paren{ \Tr_{\comp{S \cup G}}(\ketbra{\Delta_t}{\Delta_{t}}) g_{t'}^\dagger \cdots g_t^\dagger}
\end{align*}
The density matrix $\Tr_{\comp{S \cup G}}(\ketbra{\Delta_t}{\Delta_{t}})$ is where all registers except for $S$ and $G$ are traced out. 
We notice that the number of qubits of this density matrix, $|S \cup G|$, is at most $4L$.
We can appeal to the following Lemma to get that the explicit matrix description of $\Tr_{\comp{S \cup G}}(\ketbra{\Delta_t}{\Delta_{t}})$ can be computed in polynomial time. %

\begin{restatable}{lemma}{simsnapshot}
\label{lem:sim_snapshot}
	There exists a PPT algorithm $\SimSnapshot$ that on input $(x,Y,t)$ such that
	\begin{enumerate}
		\item $x$ is a binary string
		\item $Y$ is a subset of registers used in the honest strategy $\cS_{ZK}(x)$ (that does not include the prover registers $\sP$ nor the prover flags $\sF$) that has size at most
    $4L$, and %
		\item $t$ is an integer between $0$ and the length of the protocol circuit $V_{enc}(x)$
	\end{enumerate}
	outputs matrix entries of the density matrix
	\[
		\Tr_{\comp{Y}} \Paren{ \ketbra{\Delta_t}{\Delta_t}}
	\]
	where $\ket{\Delta_t}$ is defined as in~\eqref{eq:sim_density_interval0}.
\end{restatable}
We defer the proof of Lemma~\ref{lem:sim_snapshot} to Section~\ref{sec:sim_snapshot}.

The simulator $\SimInterval$ can execute $\SimSnapshot$ on $(x,S \cup G,t)$ to obtain the description of $\Tr_{\comp{S \cup G}}(\ketbra{\Delta_t}{\Delta_{t}})$, and then perform some efficient post-processing to obtain the explicit matrix description of $\Tr_{\comp{S}}(\ketbra{\Delta_t}{\Delta_{t'}})$. 

Putting everything together, we get that
\[
	\rho_S(t,t') = \Tr_{\comp{S}} \Paren{\ketbra{t}{t'} } \otimes \Tr_{\comp{S}} \Paren{ \ketbra{\Delta_t}{\Delta_{t'}}}   \otimes \Tr_{\comp{S}} \Paren{\ketbra{f(t)}{f(t')} }
\]
has a $(1,4L)$-efficient representation.
\medskip \\
Next, we consider the next sub-case, where the prover flags $\{p_r(t)\}$ do not stay constant. The interval $I$ lies within the Prover Operation phase. Because of padding, the interval $I$ can at most cover a single prover's operation, so there exists a unique $r^* \in [N_P]$ such that $p_{r^*}(t)$ changes (all others stay constant). Thus, the $p_{r^*}$ flag changes from $0$ to $1$ at time $t_\star(r^*)$.  Let $t_\star = t_\star(r^*)$.

By our assumption at the beginning, $\adv{W}^{(P,r^*)} \neq \star$ (the prover player $PP_{r^*}$ was not asked the $\star$ question). Since $S$ does not include $\sF_{P_{r^*}}$, for $t < t_\star$ and $t' \geq t_\star$, we get that
\[
	\Tr_{\comp{S}} \Paren{\ketbra{f(t)}{f(t')}} = 0,
\]
Define $I^- = \{ t \in I : t < t_\star \}$ and $I^+ = \{ t \in I : t \geq t_\star \}$. We then have
\[
	\rho_{S} = \frac{1}{|I|} \Paren{\sum_{t,t' \in I^-} \rho_{S}(t,t') + \sum_{t,t' \in I^+} \rho_{S}(t,t') } .
\]
Notice that all prover flags $p_r(t)$ stay constant on $I^-$ and $I^+$. Therefore we can reduce to the previous sub-case to argue that $\rho_{S}(t,t')$ can be computed when both $t,t'$ either come from $I^-$ or $I^+$.

This completes the proof of Case 1.

\paragraph{Case 2.}
Next, we consider the case that there is an $r^*$ for which the prover flag $p_{r^*}(t)$ changes from $0$ to $1$ during the interval $I$, and furthermore $\adv{W}^{(P,r^*)} = \star$ (the prover player $PP_{r^*}$ was asked the $\star$ question). Again, this interval $I$ must lie in the Prover Operation phase and by padding all other prover flags must be constant throughout the interval $I$. Let $t_\star = t_\star(r^*)$.

Since prover player $PP_{r^*}$ received the $\star$ question, they could not have received questions $QF_{r^*}$ or $AF_{r^*}$, and therefore $\sF_{Q_{r^*}}$ and $\sF_{A_{r^*}}$ are not part of the logical support set $S$. Thus the reduced density matrix of $\Phi_I$ where we trace out all registers except for $S$ and $\sP$ is a convex combination 
\[
\Tr_{\comp{S \cup \sP}}(\Phi_I) =  \Tr_{\comp{S \cup \sP}} \Paren{ \frac{|I^-|}{|I|} \Phi_{I^-} + \frac{|I^\star|}{|I|} \Phi_{I^\star} + \frac{|I^+|}{|I|} \Phi_{I^+}}
\]
where we define the subintervals
\begin{itemize}
	\item $I^- = \{t \in I : t < t_\star - 1 \}$
	\item $I^\star = \{t \in I : t_\star - 1\leq t \leq t_\star \}$
	\item $I^+ = \{t \in I : t \geq t_\star + 1 \}$
\end{itemize}
This is because we are tracing out the Question Flip flag register $\sF_{Q_{r^*}}$ and Answer Flip flag register $\sF_{A_{r^*}}$; so cross-terms where $t$ and $t'$ belong to different subintervals above would disappear. 

Therefore~\eqref{eq:sim_density_interval1} is equal to
\begin{align*}
	\Tr \Paren{ \outerEnc(\Phi_I) \, \adv{W}(a)} &=  \Tr \Paren{ \outerEnc\Paren{ \frac{|I^-|}{|I|} \Phi_{I^-} + \frac{|I^\star|}{|I|} \Phi_{I^\star} + \frac{|I^+|}{|I|} \Phi_{I^+}}
 \, \adv{W}(a)}
\end{align*}
We now show how to compute $(L^2,4L)$-efficient representations of density matrices $\rho_S^-,\rho_S^+,\rho_S^\star$ such that
\begin{align}
\label{eq:sim_density_interval4}
	\Tr(\outerEnc(\rho_S^-) \, \tilde{W}(a)) = \Tr(\outerEnc(\Phi_{I^-}) \, \adv{W}(a)) \\
\label{eq:sim_density_interval5}	
	\Tr(\outerEnc(\rho_S^+) \, \tilde{W}(a)) = \Tr(\outerEnc(\Phi_{I^+}) \, \adv{W}(a)) \\
\label{eq:sim_density_interval6}	
	\Tr(\outerEnc(\rho_S^\star) \, \tilde{W}(a)) = \Tr(\outerEnc(\Phi_{I^\star}) \, \adv{W}(a)).
\end{align}
Once we have this, then a $(3L^2,4L)$-efficient representation of density matrix $\rho_S = \frac{|I^-|}{|I|} \rho_S^- + \frac{|I^\star|}{|I|} \rho_S^\star + \frac{|I^+|}{|I|} \rho_S^+$ is efficiently computable and satisfies~\eqref{eq:sim_density_interval}, and this completes the proof of Case 2.

We argue that $\rho_S^-$ and $\rho_S^+$ have efficient representations. Notice that the prover flags $\{ p_r(t) \}$ are constant on the intervals $I^-$ and $I^+$. Thus from the same arguments as in Case 1, $\SimInterval$ can, when given input $\adv{W}$ and a pair of times $(\min(I^-),\max(I^-))$, efficiently compute a $(L^2,4L)$-efficient representation of the density matrix $\rho_S^-$ defined on $S$ that satisfies~\eqref{eq:sim_density_interval4}. Similarly, $\SimInterval$ can also efficiently compute an efficient representation of $\rho_S^+$ that satisfies~\eqref{eq:sim_density_interval5}.
 
We now turn to $\rho_S^\star$. Since we are in Case 2, it must be that $I^\star = \{ t_\star - 1, t_\star \}$ (otherwise, the prover flag for $PP_{r^*}$ would stay constant on $I$). Thus, using that $\ket{\Phi_t} = \ket{\Delta_t} \otimes \ket{f(t)}$,
\begin{align}
	\ket{\Phi_{I^\star}} &= \frac{1}{\sqrt{2}} \Bigbrac{ \ket{t_\star - 1}  \ket{\Delta_{t_\star - 1}}  \ket{f(t_\star -1)} + \ket{t_\star}  \ket{\Delta_{t_\star}}  \ket{f(t_\star)}} \notag \\
	&= \frac{1}{\sqrt{2}} \Bigbrac{ \ket{t_\star - 1}  \otimes \ket{\Phi_{t_\star - 1}} + \ket{t_\star} \otimes  P_{r^*}' \ket{\Phi_{t_\star - 1}}}
	\label{eq:sim_density_interval2}
\end{align}
Furthermore, since $PP_{r^*}$ receives the $\star$ question in $\adv{W}$, the measurement operator $\adv{W}^{(P,r^*)}(a)$ is simply
\[
	\frac{1}{2} \Paren{ \Id + (-1)^{a^{(P,r^*)}} P_{r^*}'}.
\]

Let $\adv{W}^{-(P,r*)}(a)$ be the measurement operator obtained by taking
$\adv{W}(a)$ and deleting the factor $\adv{W}^{(P,r^*)}$, i.e.,
$\adv{W}^{-(P,r*)}(a)$ is the tensor product of questions of all provers except
$PP_{r^*}$. Therefore we can write
\begin{align}
	&\Tr \Paren{ \outerEnc(\Phi_{I^\star}) \, \adv{W}(a) } \notag \\
	&= \Tr \Paren{ \outerEnc(\Phi_{I^\star}) \, \adv{W}^{-(P,r^*)}(a) \otimes \adv{W}^{(P,r^*)}(a) } \notag \\
	&= \frac{1}{2} \Paren{ \Tr \Paren{ \outerEnc(\Phi_{I^\star}) \, \adv{W}^{-(P,r^*)}(a)} + (-1)^{a^{(P,r^*)}} \Tr \Paren{ \outerEnc(\Phi_{I^\star}) \, \adv{W}^{-(P,r^*)}(a) \otimes P_{r^*}' }}
	\label{eq:sim_density_interval3}
\end{align}
We analyze the first term above. By substituting in the expression~\eqref{eq:sim_density_interval2} for $\Phi_{I^\star}$, we get some cross terms of the form
\begin{align*}
	&\Tr \Paren{ \outerEnc(\ketbra{t_\star - 1}{t_\star} \otimes \ketbra{\Phi_{t_\star - 1}}{\Phi_{t_\star -1}} (P_{r^*}')^\dagger) \, \adv{W}^{-(P,r^*)}(a)}  \\
	&= \Tr \Paren{ \Bigbrac{\outerEnc(\ketbra{t_\star - 1}{t_\star} \otimes \ketbra{\Delta_{t_\star - 1}}{\Delta_{t_\star -1}} P_{r^*}^\dagger) \otimes \ketbra{f(t_\star-1)}{f(t_\star)} }\, \adv{W}^{-(P,r^*)}(a)}
\end{align*}
Notice that the operator $\adv{W}^{-(P,r^*)}(a)$ does not act on the prover flag register $\sF_{P_{r^*}}$, and the $\sF_{P_{r^*}}$ component of $\ketbra{f(t_\star-1)}{f(t_\star)}$ is $\ketbra{0}{1}$. Thus, the cross-term vanishes. The first term of~\eqref{eq:sim_density_interval3} can be written as
\begin{align*}
	&\frac{1}{2} \Tr \Paren{ \outerEnc( \ketbra{t_\star - 1}{t_\star - 1} \otimes \ketbra{\Phi_{t_\star - 1}}{\Phi_{t_\star -1}} + \ketbra{t_\star}{t_\star} \otimes \ketbra{\Phi_{t_\star}}{\Phi_{t_\star}} )\, \adv{W}^{-(P,r^*)}(a) } \\
	&=\frac{1}{2} \Tr \Paren{ \outerEnc \Paren{ (\ketbra{t_\star - 1}{t_\star - 1} + \ketbra{t_\star}{t_\star}) \otimes \ketbra{\Phi_{t_\star-1}}{\Phi_{t_\star-1}} }\, \adv{W}^{-(P,r^*)}(a) }
\end{align*}
where in the equality we used that $\ket{\Phi_{t_\star}} = P_{r^*}'
\ket{\Phi_{t_\star -1}}$ and the operator $P_{r^*}'$ commutes with
$\adv{W}^{-(P,r^*)}(a)$, and thus vanishes by the cyclity of the trace. Applying
similar reasoning to the second term of~\eqref{eq:sim_density_interval2}, we
remain only with the cross terms, and we get that it can be written as
\[
\frac{1}{2} \Tr \Paren{ \outerEnc \Paren{ (\ketbra{t_\star - 1}{t_\star} + \ketbra{t_\star}{t_\star-1}) \otimes \ketbra{\Phi_{t_\star-1}}{\Phi_{t_\star-1}} }\, \adv{W}^{-(P,r^*)}(a) }
\]
Putting everything together, we get that~\eqref{eq:sim_density_interval3} can be written as
\begin{align}
\label{eq:sim_density_interval7}
\frac{1}{2} \Tr \Paren{ \outerEnc \Paren{ \tau(a,t_\star) \otimes \ketbra{\Delta_{t_\star -1}}{\Delta_{t_\star -1}} \otimes \ketbra{f(t_\star-1)}{f(t_\star -1)} }  \, \adv{W}^{-(P,r^*)}(a)}
\end{align}
with
\[
	\tau(a,t_\star) = \frac{1}{2} \Bigbrac{ \ket{t_\star-1} + (-1)^{a^{(P,r^*)}} \ket{t_\star}}\Bigbrac{ \bra{t_\star-1} + (-1)^{a^{(P,r^*)}} \bra{t_\star}}.
\]
Define
\[
	\rho_S^\star = \Tr_{\comp{S}} \Paren{\tau(a,t_\star) \otimes \ketbra{\Delta_{t_\star -1}}{\Delta_{t_\star -1}} \otimes \ketbra{f(t_\star-1)}{f(t_\star -1)}} .
\]
Just like in Case 1, the density matrices $\Tr_{\comp{S}} \Paren{\tau(a,t_\star)}$ and $\Tr_{\comp{S}} \Paren{\ketbra{f(t_\star-1)}{f(t_\star -1)}}$ have $(1,1)$-efficient representations, and by Lemma~\ref{lem:sim_snapshot} we have that $\Tr_{\comp{S}} \Paren{\ketbra{\Delta_{t_\star -1}}{\Delta_{t_\star -1}}}$ can be efficiently computed as well. This shows that $\rho_S^\star$ has a $(1,4L)$-efficient representation. Finally, we have that the $\adv{W}^{-(P,r^*)}(a)$ operator in~\eqref{eq:sim_density_interval7} can be replaced with $\tilde{W}^{-(P,r^*)}(a)$. This shows that $\rho_S^\star$ satisfies~\eqref{eq:sim_density_interval6}, and this completes the proof of Case 2.

\end{proof}

\noindent We now prove \Cref{lem:density-matrix}.
\begin{proof}[Proof of Lemma~\ref{lem:density-matrix}]
Fix a tuple $\adv{W}$ of questions. We argue that computing an efficient description of a density matrix $\rho$ that satisfies~\eqref{eq:sim_density} can be efficiently reduced to computing efficient descriptions of density matrices $\rho_I$ for various intervals $I$, for which we can use the algorithm $\SimInterval$ from Lemma~\ref{lem:sim_density_interval}. 

Since there are only $N_V$ verifier players, and each verifier player receives a $6$-tuple of Pauli observables that have support on at most $12$ physical qubits each, the joint measurement of the verifier players acts on at most $12N_V$ physical qubits, and therefore at most $12N_V$ logical qubits of the underlying encoded clock register.

Let
\[
	C_{tr} = \{ i \in [T] : \text{the $i$'th logical clock qubit $\sC_i$ is not in $S_{\adv{W}}$} \}
\]
denote the set of (logical) clock qubit registers that, after the outer encoding, are not acted upon by the measurement corresponding to $\adv{W}$. Thus, for all answer vectors $a$, 
\begin{equation}
	\Tr \Paren{ \outerEnc(\Phi) \, \adv{W}(a) } = \Tr \Paren{ \outerEnc(\Tr_{C_{tr}}( \Phi)) \, \adv{W}(a) }.
\end{equation}
We argue that the density matrix $\Tr_{C_{tr}}( \Phi)$ is a convex combination
  of $\ket{\Phi_I}$ states for various intervals $I$:
\begin{align}
\label{eq:sim_density_traced_out}
\Tr_{C_{tr}}( \Phi) &= \frac{1}{T+1} \sum_{t,t'} \Tr_{C_{tr}}(\ketbra{\unary(t)}{\unary(t')}) \otimes \ketbra{\Phi_t}{\Phi_{t'}}.
\end{align}
The following Claim easily follows from the structure of unary encodings:
\begin{claim}
	For all $ 0 \leq t,t' \leq T$, the operator $\Tr_{C_{tr}}(\ketbra{\unary(t)}{\unary(t')})$ is non-zero only when $t = t'$, or for all $i \in C_{tr}$, either both $t, t' > i$, or both $t, t' < i$.
\end{claim}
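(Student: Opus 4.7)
The plan is to exploit the fact that $\ket{\unary(t)}$ is a computational basis product state across the $T$ clock qubits, so the partial trace factorizes completely across the qubits being traced out. Concretely, writing $\ket{\unary(t)} = \bigotimes_{\ell=1}^T \ket{t_\ell}$ with $t_\ell \in \{0,1\}$, we get
\begin{equation*}
\Tr_{C_{tr}}\!\Paren{\ketbra{\unary(t)}{\unary(t')}} \;=\; \Paren{\prod_{i \in C_{tr}} \iprod{t'_i \mid t_i}} \cdot \ketbra{\unary(t)|_{\overline{C_{tr}}}}{\unary(t')|_{\overline{C_{tr}}}}.
\end{equation*}
Since $\iprod{t'_i \mid t_i} \in \{0,1\}$ is $1$ exactly when $t_i = t'_i$, the partial trace is nonzero iff $t_i = t'_i$ for every $i \in C_{tr}$.

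Next, I would translate the bitwise equality condition back to a condition on the pair $(t,t')$. By the definition of $\unary$, we have $t_i = 1 \Leftrightarrow i \leq t$ (and similarly for $t'$), so $t_i = t'_i$ is equivalent to
\[
    (i \leq t \text{ and } i \leq t') \;\;\text{or}\;\; (i > t \text{ and } i > t').
\]
If $t = t'$ this holds trivially for every $i$, which gives the first alternative in the claim. Otherwise, assume without loss of generality $t < t'$; then for every $i \in C_{tr}$ the only way to have $t_i = t'_i$ is that $i$ does not lie in the symmetric-difference interval $\{t+1,\dots,t'\}$, i.e., either $i \leq t$ (both $t,t' \geq i$, matching the ``both less than $i$'' half after the minor boundary cleanup in the statement) or $i > t'$ (both $t,t' < i$). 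This is exactly the disjunction in the statement of the claim.

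The only subtlety worth flagging is the boundary between strict and non-strict inequalities: a careful reading shows that the ``$t,t' > i$'' clause should be interpreted as ``$i \leq \min(t,t')$'' to cover the edge case $i = \min(t,t') < \max(t,t')$ where $t_i = t'_i = 1$; this is a cosmetic point and the underlying bitwise argument handles it automatically. No nontrivial step is required beyond the factorization of the partial trace and a direct unpacking of the unary encoding, so I do not anticipate any genuine obstacle — the claim is essentially a structural observation about how unary clock states interact with partial traces, and the one-line factorization above reduces it to elementary arithmetic on the indices.
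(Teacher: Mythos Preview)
Your argument is correct and is exactly the intended one: the paper does not spell out a proof at all, simply remarking that the claim ``easily follows from the structure of unary encodings,'' and your one-line factorization of the partial trace over the tensor-product clock state is precisely that structure. Your observation about the boundary case (the clause ``both $t,t' > i$'' should really read ``both $t,t' \geq i$'') is also accurate; the paper's phrasing is slightly imprecise there, but as you note this does not affect how the claim is used downstream.
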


Given this Claim, we notice that all cross-terms of~\eqref{eq:sim_density_traced_out} involving times $t,t'$ where $t \neq t'$ and at least one of $t,t'$ are in $C_{tr}$ vanish. Thus the only cross-terms that remain are times $t,t'$ that come from an interval $I \subseteq \{0,1,2,\ldots,T\}$ of consecutive time-steps where there is no $i \in C_{tr}$ such that $\min(I) \leq i \leq \max(I)$. Let $\{0,1,2,\ldots,T\} \setminus C_{tr}$ be the union of maximal intervals  $I_1,I_2,\ldots,I_\ell$ of consecutive time steps. Thus~\eqref{eq:sim_density_traced_out} can be written as
\[
	 \sum_{t \in C_{tr}} \frac{1}{T+1} \Tr_{C_{tr}}(\ketbra{\Phi_{\{t\}}}{\Phi_{\{t\}}}) + \sum_{j=1}^\ell \frac{|I_j|}{T+1} \Tr_{C_{tr}} (\ketbra{\Phi_{I_j}}{\Phi_{I_j}}) 
\]
where $\ket{\Phi_{\{t\}}} = \ket{\unary(t)} \otimes \ket{\Phi_t}$ denotes the history state restricted to the singleton interval $\{t\}$. As desired,~\eqref{eq:sim_density_traced_out} is a probabilistic mixture of interval states $\ket{\Phi_I}$ where each interval has size at most $6N_V \leq L$. The intervals $I_j$ occur with probability $|I_j|/(T+1)$ and the singleton intervals $\{t\}$ for $t \in C_{tr}$ occur with probability $1/(T+1)$.

The algorithm $\SimDensity$ works as follows: given a question tuple $\adv{W}$ it can compute the set $C_{tr}$, and then compute the intervals $I_1,\ldots,I_j$ in polynomial time. For each interval $I_j$, it invokes the algorithm $\SimInterval$ from Lemma~\ref{lem:sim_density_interval} to efficiently compute a $(3L^2,4L)$-efficient representation of the density matrix $\rho_{I_j}$ supported on $S_{\adv{W}}$ that satisfies
\[
	\Tr \Paren{ \outerEnc(\Phi_{I_j}) \, \adv{W}(a) } = \Tr \Paren{
    \outerEnc(\rho_{I_j}) \, \tilde{W}(a) }.
\]
Similarly, for every $t \in C_{tr}$ the algorithm $\SimDensity$ invokes $\SimInterval$ to compute a $(3L^2,4L)$-efficient representation of the density matrix $\rho_t$ that satisfies
\[
	\Tr \Paren{ \outerEnc(\Phi_{\{t\}}) \, \adv{W}(a) } = \Tr \Paren{
    \outerEnc(\rho_{t}) \, \tilde{W}(a) }.
\]
There are at most $T+1$ density matrices to compute. $\SimDensity$ then can then efficiently compute a $(3(T+1) L^2,4L)$-efficient representation of the convex combination
\[
	\rho = \sum_{t \in C_{tr}} \frac{1}{T+1} \rho_t + \sum_{j=1}^\ell \frac{|I_j|}{T+1} \rho_{I_j},
\]
which satisfies~\eqref{eq:sim_density}.
\end{proof}

With Lemma~\ref{lem:density-matrix}, we prove that $V_{ZK}$ has the zero knowledge property against cheating referees that are non-adaptive, meaning that the referee samples a question tuple $\adv{W}$ first, sends them to the players, and receives their answers $a$. 

\begin{lemma}
\label{lem:non-adaptive}
  For every non-adaptive polynomial-time referee $\advR^{na}$, there is a
  PPT simulator $\Sim_{\advR^{na}}$
  such that the output distribution of $\Sim_{\advR^{na}}(x)$
  is equal to $\View(\adv{R}^{na}(x) \leftrightarrow \cS_{ZK}(x))$.
\end{lemma}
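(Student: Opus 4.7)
The plan is to build $\Sim_{\advR^{na}}$ by combining Lemma~\ref{lem:density-matrix} with bit-by-bit conditional sampling. On input $x$, the simulator first draws fresh randomness $r$ for the referee, runs $\advR^{na}(x; r)$ in polynomial time to obtain the complete tuple $\adv{W}$ of player questions (possible because $\advR^{na}$ is non-adaptive and PPT), and then invokes $\SimDensity(\adv{W})$ to obtain an efficient representation of the density matrix $\rho$ promised by Lemma~\ref{lem:density-matrix}, satisfying
\[
p(a) \;:=\; \Tr \Paren{ \outerEnc(\Phi) \, \adv{W}(a) } \;=\; \Tr \Paren{ \outerEnc(\rho) \, \tilde{W}(a)}
\]
for every answer vector $a$. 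Since both $\{\adv{W}(a)\}_a$ and $\{\tilde{W}(a)\}_a$ are projective measurements (their components are products of commuting Pauli or reflection spectral projectors that sum to the identity), $p$ is a genuine probability distribution over answer vectors.

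The main challenge is that the answer vector has $6 N_V + N_P = \Theta(k)$ bits in total, so naively enumerating the answer space is not polynomial. I would handle this with sequential conditional sampling. Fix any ordering $a_1, \ldots, a_m$ of the answer bits, and for each $j$ compute the conditional marginal
\[
\Pr[a_{j+1} = c \mid a_1 = b_1, \ldots, a_j = b_j] = \frac{\Tr\bigl(\outerEnc(\rho) \, M_{b_1, \ldots, b_j, c}\bigr)}{\Tr\bigl(\outerEnc(\rho) \, M_{b_1, \ldots, b_j}\bigr)},
\]
where $M_{b_1, \ldots, b_i}$ is the product of the Pauli spectral projectors for the first $i$ answer bits having values $b_1, \ldots, b_i$, tensored with identities on all other qubits. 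Because $\tilde{W}(a)$ factors as a tensor product of Pauli projectors each supported on $O(1)$ physical qubits, each $M_{b_1,\ldots,b_i}$ admits a $(1, O(1))$-efficient representation in the sense of Definition~\ref{def:efficient_representation}. Claim~\ref{clm:efficient_trace} then yields both the numerator and denominator in time polynomial in the representation complexity of $\outerEnc(\rho)$ (which is $\poly(n, T)$ by Lemma~\ref{lem:density-matrix}); sample $a_{j+1}$ accordingly and iterate. By the chain rule the resulting $a$ is distributed exactly as $p$.

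The simulator finally outputs $(x, r, \adv{W}, a)$. On the real-interaction side, once $\advR^{na}$ commits to $\adv{W}$ the honest players respond with answers distributed according to $\Tr(\outerEnc(\Phi) \, \adv{W}(a))$, which is precisely $p(a)$. Since the simulator draws $r$ from the referee's true randomness distribution and then computes $\adv{W}$ deterministically from $(x, r)$ exactly as $\advR^{na}$ does, the joint output $(x, r, \adv{W}, a)$ has the same distribution as $\View(\advR^{na}(x) \leftrightarrow \cS_{ZK}(x))$, establishing perfect simulation. The only conceptual hurdle is the conditional-sampling step; Lemma~\ref{lem:density-matrix} has already done the real work of replacing the unmanageable prover-reflection measurements by polynomially representable Pauli projections, so the remaining task reduces to the efficient-representation trace calculus of Claim~\ref{clm:efficient_trace}.
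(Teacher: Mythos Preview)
Your proposal is correct and follows essentially the same approach as the paper: run the non-adaptive referee to obtain $\adv{W}$, call $\SimDensity$ to get the efficiently represented $\rho$, and then sample the answer vector via sequential conditional probabilities computed through the efficient-representation trace calculus. The only cosmetic difference is that the paper samples player-by-player (each player's answer lies in a constant-size alphabet) whereas you sample bit-by-bit; both work for the same reason, namely that the partial-answer projectors factor as tensor products over players with each factor supported on $O(1)$ qubits.
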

\begin{proof}
$\SimRna$ starts by sampling the questions to the players
  $\adv{W} = \left(\adv{W}^{(D,r)}\right)_{D \in \{V,P\}, r \in [N_D]}$
  from the same joint distribution as
$\advRna$ on input $x$. This can be performed efficiently since $\advRna$ is a
  polynomial-time algorithm and the questions are sampled in a non-adaptive way. 

  Then, the simulator $\Sim_{\adv{R}^{na}}$ executes the algorithm $\SimDensity$ from \Cref{lem:density-matrix} on input $\adv{W}$, which outputs an efficient representation of a density matrix $\rho$
  such that
for all answer vectors $a = \Paren{ a^{(D,r)} }_{D \in \{V,P\}, r \in [N_D]}$ we have that
\[
\alpha(a) = \Tr \Paren{ \outerEnc(\Phi) \, \adv{W}(a) } = \Tr \Paren{ \outerEnc(\rho) \,
  \tilde{W}(a)}.	
\]
Note that $\alpha(a)$ is a probability distribution over answer vectors. We need to show that we can efficiently sample an answer vector $a$ from the probability distribution $\alpha(a)$. We can do that by sampling each bit of $a$ one at a time, and conditioning the density matrix $\rho$ on the partial outcomes. 

Index the players using $\{1,2,\ldots,N_V + k\}$  in some canonical way. Let $a = (a_1,\ldots,a_{N_V + k})$ where $a_i$ denotes the answer symbol of the $i$'th player, which might come from the alphabet $\{0,1,\}^6$ or $\{0,1\}$, depending on whether the $i$'th player is a prover player or a verifier player. 

We utilize the following important observation: for every answer vector $a$, $\tilde{W}(a)$ is equal to the tensor product of projectors where the projectors corresponding to the prover players are all single-qubit operators, and the projectors corresponding to the verifier players may act on up to $12N_V$ qubits. 

For every $i \in \{1,2,\ldots,N_V + k\}$, let $\tilde{W}(a_i)$ denote the projector of the $i$'th player corresponding to outcome $a_i$, when the players receive the question tuple $\adv{W}$. Note that $\tilde{W}(a) = \tilde{W}(a_1) \otimes \cdots \otimes \tilde{W}(a_{N_V + k})$.

To sample $a_1$, the simulator can explicitly compute the probabilities
\[
\alpha(a_1) = \Tr \Paren{ \outerEnc(\Phi) \, \adv{W}(a_1) } = \Tr \Paren{ \outerEnc(\rho) \,
  \tilde{W}(a_1)}.
\]
for all $a_1$, where we use $\alpha(a_1)$ to denote the marginal distribution of $a_1$ in $\alpha$. Since $a_1$ comes from a constant-sized alphabet, this distribution can be sampled from in polynomial time. Given a sample $a_1$, we can now sample $a_2$ \emph{conditioned} on $a_1$, so we can compute the conditional distribution
\[
	\alpha(a_2 | a_1) = \frac{\Tr \Paren{ \outerEnc(\rho) \, \tilde{W}(a_1) \otimes \tilde{W}(a_2)}}{\alpha(a_1)},
\]
and sample from it as well. We can continue in this manner, until we have sampled $a_1 \cdots a_{N_V + k}$. This can be done in polynomial time, because $\tilde{W}(a_1) \otimes \cdots \otimes \tilde{W}( a_i)$ for all $i \in \{1,\ldots,N_V + k\}$ has a $(1,12N_V)$-efficient representation.

 The simulator then outputs $(x,r,\adv{W},a)$ where $r$ is the randomness used by cheating referee $\adv{R}^{na}$. By construction, this output is distributed identically to $\View(\adv{R}^{na}(x) \leftrightarrow \cS_{ZK}(x))$.
\end{proof}

\subsection{General cheating referees}
\label{sec:general-adversary}

We now show that if that for an arbitrary cheating referee $\advR$, there exists simulator whose
output is distributed according to $\View(\hat{R}(x) \leftrightarrow \cS_{ZK}(x))$.

As mentioned earlier, the difficulty is that $\advR$ could send questions to a set of players,
and then depending on their answers, adaptively choose questions for another set of players, and so on. The arguments from \Cref{sec:non-adaptive}
strongly rely on the fact that the simulator can sample all of the questions before
sampling the answers. In this section, we show how to simulate the interaction between the
referee and the players in the adaptive scenario.

\begin{lemma}
  For every PPT $\advR$, there exists a PPT
  simulator $\Sim_{\advR}$
  such that the output distribution of $\SimR(x)$
  is equal to $View(\hat{R}(x) \leftrightarrow \cS_{ZK}(x))$.
\end{lemma}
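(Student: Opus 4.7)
The plan is to bootstrap the non-adaptive simulator from Lemma~\ref{lem:non-adaptive} by handling the referee's adaptive choices one question at a time, using the fact that honest players abort on repeated queries and hence $\advR$ can interact at most once with each of the $N_V + k$ players.

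My simulator $\SimR$ will run $\advR$ internally, sampling $\advR$'s coin tosses up front and then presenting the player interface. It maintains the partial transcript $(W_1,a_1,\ldots,W_j,a_j)$ of question/answer pairs exchanged so far. Whenever $\advR$ decides to send a new question $W_{j+1}$ to a previously unqueried player, I pad the transcript with arbitrary well-formatted dummy questions for the remaining unqueried players to obtain a complete question tuple $\adv{W}$, invoke $\SimDensity(\adv{W})$ from Lemma~\ref{lem:density-matrix} to obtain an efficient representation of a density matrix $\rho$, and sample $a_{j+1}$ from the conditional distribution
\[
\Pr[a_{j+1}\mid a_1,\ldots,a_j] \;=\; \frac{\Tr\bigl(\outerEnc(\rho)\,\tilde{W}(a_1)\otimes\cdots\otimes \tilde{W}(a_{j+1})\bigr)}{\Tr\bigl(\outerEnc(\rho)\,\tilde{W}(a_1)\otimes\cdots\otimes \tilde{W}(a_j)\bigr)}
\]
using the same per-bit conditional sampling strategy as in the proof of Lemma~\ref{lem:non-adaptive}. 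Both numerator and denominator are traces against operators with $(3(T+1)L^2,4L)$-efficient representations and can be evaluated in polynomial time by Claim~\ref{clm:efficient_trace}. I return $a_{j+1}$ to $\advR$, append it to the transcript, and continue. When $\advR$ halts, the simulator outputs the sampled randomness together with the full transcript.

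The main correctness point to verify --- and the step where I anticipate the one genuine subtlety --- is that the sampled value $a_{j+1}$ really has the correct conditional distribution, even though I chose dummy questions for the players that $\advR$ has not (yet) queried. This holds because each player's measurement operators satisfy $\sum_{a^{(D,r)}} \adv{W}^{(D,r)}(a^{(D,r)}) = I$, so summing over the dummy answers recovers the genuine marginal over only the queried players, and this marginal is independent of which dummy questions were chosen. Lemma~\ref{lem:density-matrix} then guarantees that the marginal matches the true joint distribution under the honest strategy $\cS_{ZK}$. A straightforward induction on $j$ shows the full sampled transcript has the same law as $\View(\advR(x)\leftrightarrow\cS_{ZK}(x))$.

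For efficiency, since honest players abort on repeat queries, $\advR$ sends at most $N_V + k = \poly(|x|)$ distinct questions, so there are at most polynomially many invocations of $\SimDensity$, each running in polynomial time. The main obstacle I expect is making the dummy-padding argument rigorous: one must be careful that the dummy questions respect the formatting restrictions honest players enforce (in particular, the $6$-tuple of pairwise commuting Pauli observables for verifier players and the permissible question types for prover players), and that the substitution from $\adv{W}(a)$ to $\tilde{W}(a)$ is applied consistently with the prover-reflection convention so that the invocation of Lemma~\ref{lem:density-matrix} is legitimate. Neither issue is deep, but both must be spelled out explicitly in the final write-up.
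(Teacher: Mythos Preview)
Your proposal is correct and follows essentially the same approach as the paper: run $\advR$ internally, pad the partial question transcript with arbitrary well-formatted questions for the unqueried players, call $\SimDensity$ to obtain $\rho$, and sample the next answer from the appropriate conditional distribution, justifying independence from the dummy padding via the non-signalling property (your $\sum_a \adv{W}^{(D,r)}(a)=I$ argument). The paper phrases the iteration in terms of batches $B_1,B_2,\ldots$ of players rather than single questions, but this is cosmetic; your write-up is otherwise a faithful match, and your use of $\tilde{W}$ rather than $\adv{W}$ against $\outerEnc(\rho)$ is in fact the notationally cleaner choice.
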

\begin{proof}
A general cheating referee $\adv{R}$ behaves as follows: using randomness, it samples a set of players $B_1 \subseteq \cP = \{ (D,r) : D \in \{V,P\}, r \in [N_D] \}$, followed by some questions $\adv{W}^{B_1}$ for those players. It sends $\adv{W}^{B_1}$ to the $B_1$ players, and receives a partial answer vector $a^{B_1}$. Based on its randomness and the answers received, the referee samples another set of players $B_2 \subseteq \cP \setminus B_1$ and questions $\adv{W}^{B_2}$ for the $B_2$ players. We assume that $B_2$ is disjoint from $B_1$ because the players would abort the protocol if they are interacted more than once. The referee continues in this manner until it halts.  

The general simulator $\Sim$ runs the referee $\adv{R}$ on randomness $s$ to
  obtain the sample $(B_1, \adv{W}^{B_1})$. To simulate the $B_1$ players'
  responses to $\adv{W}^{B_1}$, the simulator will arbitrarily complete
  $\adv{W}^{B_1}$ to a question tuple $\adv{W}_1$ for all players, and then call
  $\SimDensity$ on $\adv{W}_1$ to obtain a density matrix $\rho_1$ defined on
  registers $S_{\adv{W}_1}$. With this density matrix, the simulator $\Sim$ can
  sample a partial answer vector $a^{B_1}$ with probability $\Tr(
  \outerEnc(\rho_1) \, \adv{W}^{B_1}(a^{B_1}))$. This partial answer vector can
  be sampled in the same way as described in the simulation for the non-adaptive
  referee in Lemma~\ref{lem:non-adaptive}. Note that this distribution does not
  depend how the question tuple $\adv{W}^{B_1}$ was completed, since the
  distribution is non-signalling.

Based on this sampled answer vector $a^{B_1}$ and the randomness $s$, the simulator can continue executing $\adv{R}$ to obtain a sample $(B_2, \adv{W}^{B_2})$. The simulator then constructs a question tuple $\adv{W}_2$ that contains both $\adv{W}^{B_1}$ and $\adv{W}^{B_2}$ (which are question tuples to disjoint sets of players), and invokes $\SimDensity$ to efficiently compute a density matrix $\rho_2$ defined on registers $S_{\adv{W}_2}$. The simulator can then sample a partial answer vector $a^{B_2}$ with probability
\[
	\frac{\Tr \Paren{ \adv{W}^{B_2}(a^{B_2}) \otimes \adv{W}^{B_1}(a^{B_1})\, \outerEnc(\rho_2) }}{\Tr( \outerEnc(\rho_2) \, \adv{W}^{B_1}(a^{B_1}))}.
\]
Once again, this partial answer vector can be sampled in the same way as described in the proof of Lemma~\ref{lem:non-adaptive}.
In the end, the simulator can repeat this process and obtain a sequence $(x,s,\adv{W}^{B_1},a^{B_1},\adv{W}^{B_2},a^{B_2},\ldots)$ that is distributed identically to $\View(\adv{R}(x) \leftrightarrow \cS_{ZK}(x))$.

The complete simulation algorithm is described in detail in Figure~\ref{fig:sim}. It is easy to see that the simulator runs in polynomial time. 
\end{proof}

\begin{figure}[H]
\rule[1ex]{\textwidth}{0.5pt}
   \textbf{Algorithm:} $\Sim_{\adv{R}}$(x) \\
\begin{enumerate}
	\item Set $i = 1$.
	\item Sample randomness $s$ for $\adv{R}$.
	\item Set $\pi = (x,s)$.
  \item While $\adv{R}$ has not halted:
  \begin{enumerate}
  	\item Continue the execution of the referee $\adv{R}$ on randomness $s$, the previous $i-1$ samples $(B_1,\adv{W}^{B_1},a^{(B_1)}),\ldots,(B_{i-1},\adv{W}^{B_{i-1}},a^{(B_{i-1})})$, to obtain a new sample $(B_i,\adv{W}^{B_i})$. If $B_i$ has non-zero intersection with any of the $B_1,\ldots,B_{i-1}$, add $\mathsf{abort}$ to the end of $\pi$ and output $\pi$.
	\item Let $\adv{W}_i$ denote the question tuple that is the concatenation of $\adv{W}^{B_1},\adv{W}^{B_2},\ldots,\adv{W}^{B_i}$ with arbitrary questions to the players in $\cP \setminus (B_1 \cup \cdots \cup B_i)$. 
	\item Execute $\SimDensity$ on input $\adv{W}_i$ to obtain a $(O(T),O(1))$-efficient representation of the density matrix $\rho_i$ supported on registers $S_{\adv{W}_i}$.
	\item Sample $a^{B_i}$ with probability
	\[
		\frac{\Tr \Paren{ \Pi_{i-1} \otimes \adv{W}^{B_i}(a^{B_i}) \, \outerEnc(\rho_i)}}{\Tr \Paren{ \Pi_{i-1} \, \outerEnc(\rho_i)}}
	\]
	where
	\[
		\Pi_{i-1} = \adv{W}^{B_1}(a^{B_1}) \otimes \cdots \otimes \adv{W}^{B_{i-1}}(a^{B_{i-1}}).
	\]
	\item Add $(\adv{W}^{B_i},a^{B_i})$ to the end of $\pi$.
	\item Set $i = i + 1$.
  \end{enumerate}
  \item Output $\pi$.
 \end{enumerate}
\rule[2ex]{\textwidth}{0.5pt}\vspace{-.5cm}
\caption{The simulator $\Sim_{\adv{R}}$}\label{fig:sim}
\end{figure}

\subsection{Simulating snapshots}
\label{sec:sim_snapshot}
We now prove Lemma~\ref{lem:sim_snapshot}. For convenience we recall the Lemma statement.

\simsnapshot*

\begin{proof}

Fix the protocol circuit $V_{enc} = V_{enc}(x)$. For convenience, we omit
  mention of the input $x$ for the remainder of this proof. Let $T$ denote the
  length of the circuit $V_{enc}$. We notice that our parameters imply that $4L
  = 192$,  and therefore $\innerCode$ is a
  $4L$-simulatable code and we denote $m$ as the blocklength of this code, as
  defined in \Cref{T:simulatable}.

The protocol circuit acts on registers $\sA, \sB, \sO, \sN, \sM, \sP$. Since the set $Y$ does not include any subregister of the prover register $\sP$, we only consider the subregisters of $\sR = \sA\sB\sO\sN\sM$. At each time $t$, we say that a group of $m$ qubit registers $\sR_{i_1},\ldots,\sR_{i_m}$ form an \emph{encoded block} if and only if $\Pi \ket{\Delta_t} = \ket{\Delta_t}$ where $\Pi$ is the projector onto the codespace for the qubits $\sR_{i_1},\ldots,\sR_{i_m}$. Since the protocol circuit $V_{enc}(x)$ can be computed in polynomial time, determining the encoded block of qubits than a physical qubit belongs to can be efficiently done.

  As explained in \Cref{sec:micro-phases}, we split the phases of the circuit into
  micro-phases.
For every time $t \in \{0,1,2,\ldots,T\}$, let $start(t) \leq t$ denote the start of the micro-phase containing time $t$, and let $end(t) \geq t$ denote the end of the micro-phase containing time $t$. For each time $t$, we can partition the qubit subregisters into three categories:
\begin{itemize}
	\item \textbf{Active}: These are qubits that have been acted upon by a gate $g_{t'}$ for some time $t' \in \{ start(t),\ldots,t \}$. Let $\cA(t)$ denote the set of active qubit registers at time $t$.
	\item \textbf{Encoded qubits}: These are qubits that belong to an encoded block, and are not active. Let $\cE(t)$ denote the set of encoded qubit registers at time $t$.
	\item \textbf{Unencoded qubits}: These are unencoded ancilla qubits in the state $\ket{0}$ or in the state $\ket{1}$, and are not active. Let $\cU_0(t)$ and $\cU_1(t)$ denote the sets of unencoded qubit registers in the state $\ket{0}$ an $\ket{1}$, respectively, at time $t$. %
\end{itemize}
Unencoded qubits are in a ``known'' state throughout the entire circuit $V_{enc}$ in the sense that their state is independent of the input $x$. In fact, for all $t$ the state $\ket{\Delta_t}$ can be written as
\[
	\ket{\Delta_t} = \ket{\Sigma_t}_{\cA(t) \cE(t)} \otimes \ket{0 \cdots 0}_{\cU_0(t)} \otimes \ket{1 \cdots 1}_{\cU_1(t)}
\]
where $\ket{\Sigma_t}$ corresponds to the registers that are either active or encoded, and the remaining qubits are unencoded ancillas.

By construction, the protocol circuit $V_{enc}$ satisfies the following invariant: at the beginning and end of every micro-phase of the circuit, all qubit subregisters are either encoded, or unencoded. Qubits can only be active within a micro-phase. 

We now argue that the description of $\Tr_{\comp{Y}}(\ketbra{\Delta_t}{\Delta_t})$ can be efficiently computed for all $t$. We argue this for each micro-phase separately. Let $t_0 = start(t)$.

\paragraph{Idling phase} During an idling phase, all qubits are either encoded
  or unencoded, and none are active. The reduced density matrix $\Tr_{\comp{Y}}
  \Paren{ \ketbra{\Delta_t}{\Delta_t}}$ thus consists of either at most $|Y|$
  unencoded $\ket{0}$ and $\ket{1}$ ancilla qubits, and the reduced density
  matrix of some encoded blocks on at most $|Y| \leq 4L$ qubits. By Theorem~\ref{T:simulatable}, the reduced density matrix of the encoded blocks is efficiently computable, and thus $\Tr_{\comp{Y}} \Paren{ \ketbra{\Delta_t}{\Delta_t}}$ is efficiently computable.

\paragraph{Resource encoding} In a resource encoding phase, a constant number of unencoded ancilla bits in $\ket{\Delta_{t_0}}$ will be transformed into an encoded resource state in $\ket{\Delta_{end(t)}}$, and the rest of the qubits are either in an encoded block or unencoded ancilla qubits. Thus the reduced density matrix $\Tr_{\comp{Y}} \Paren{ \ketbra{\Delta_t}{\Delta_t}}$ is a tensor product of the reduced density matrix of some encoded blocks (which is efficiently computable by Theorem~\ref{T:simulatable}), unencoded ancilla qubits, and the reduced density matrix of the intermediate state of a resource encoding circuit acting on a constant number of ancillas (which is efficiently computable). Thus $\Tr_{\comp{Y}} \Paren{ \ketbra{\Delta_t}{\Delta_t}}$ is efficiently computable.

  \paragraph{Logical operation} %
  In a logical operation micro-phase, either a logical Hadamard, logical CNOT, or logical Toffoli are being implemented on some encoded code blocks as well as some unencoded ancilla qubits. Let $U \in \{ H, \Lambda(X), \Lambda^2(X) \}$ be the logical gate,
  and $O_1, O_2, \ldots, O_{t - t_0}$ denote the first $t - t_0$ gates of the encoding of $U$.
  We have that 
\[
  \ket{\Delta_t} = O_{t - t_0} \cdots O_1 \ket{\Delta_{t_0}}.
\]

Since all qubits of $\ket{\Delta_{t_0}}$ are  correctly encoded, this
corresponds to the simulation in the middle of the application of a logical
gate, and again by \Cref{T:simulatable}, $\tr_{\overline{Y}}(\kb{\Delta_{t}})$ can
also be efficiently computable.

\paragraph{Output decoding} In the honest strategy $\cS_{ZK}(x)$, the state $\ket{\Delta_{t_0}}$ can be written as a tensor product
\[
	\ket{\Delta_{t_0}} = \innerEnc(\ket{1})_{\sO} \otimes \ket{\Sigma_{t_0}}_{\cE(t_0)} \otimes \ket{0 \cdots 0, 1 \cdots 1}_{\cU(t_0)}.
\]
This is because by assumption the strategy $\cS_{ZK}(x)$ causes the protocol circuit $V_{enc}$ to accept with probability $1$, and therefore the register $\sO$ at the beginning of the Output Decoding phase will store an encoding of $\ket{1}$.

Therefore, the reduced density matrix $\Tr_{\comp{Y}} \Paren{
  \ketbra{\Delta_t}{\Delta_t}}$ is a tensor product of the reduced density
matrix of a decoding circuit acting on $\innerEnc(\ket{1})$ (which is
efficiently computable), the reduced density matrix of $\ket{\Sigma_{t_0}}$ on
at most $|Y| \leq 4L$ qubits (which is efficiently computable by Theorem~\ref{T:simulatable}), and a constant number of unencoded ancilla qubits. Thus $\Tr_{\comp{Y}} \Paren{ \ketbra{\Delta_t}{\Delta_t}}$ is efficiently computable.

\end{proof}

\subsection{Completing the proof of Theorem~\ref{thm:main}}

If the completeness and soundness of the original $\MIP^*$ protocol for $L$ are $1$ and $s$ respectively, then the soundness of the resulting zero knowledge protocol $V_{ZK}$ for $L$ has completeness $1$ (i.e. perfect completeness) and has soundness $s'$ that is polynomially related to $1-s$:
\[
	s' \leq 1 - \frac{(1 - s)^\beta}{p(n)}
\]
for some universal constant $\beta$ and polynomial $p$.

Our zero knowledge transformation is not immediately gap preserving, in the sense that if $1-s$ is a  constant, the new soundness $s'$ is only separated from $1$ by an inverse polynomial. Since the standard definition of the complexity classes $\QMIP$, $\MIP^*$, and $\PZKMIP^*$ have constant completeness-soundness gaps, our result does not immediately show that $\MIP^* \subseteq \QMIP \subseteq \PZKMIP^*$.

To remedy this, we employ the gap amplification techniques described in~\Cref{sec:parallel-repetition}. Suppose that the soundness $s'$ of $V_{ZK}$ is at most $1 - 1/q$ for some polynomial $q$. First, we apply the anchoring transformation to $V_{ZK}$ to obtain a new protocol $V_{ZK,\bot}$ such that
\[
	\omega^*(V_{ZK,\bot}) = \alpha + (1- \alpha)\omega^*(V_{ZK})
\]
for some constant $\alpha$. Then, we use
Theorem~\ref{thm:parallel-repetition-anchored} of Bavarian, Vidick and
Yuen~\cite{BavarianVY17} to argue that the parallel repetition of $V_{ZK,\bot}$
has the desired soundness properties. In the case that $\omega^*(V_{ZK}) = 1$,
then $\omega^*(V_{ZK,\bot}^m) = 1$ for all $m$. Otherwise, for some polynomial
$m$ that depends on $q$, $k$, $\alpha$, and $V_{ZK}$, we have that
$\omega^*(V_{ZK,\bot}^m) \leq 1 - (1 - s)^{\gamma}$ for some universal constant $\gamma$. Thus, the soundness of $V_{ZK,\bot}^m$ is polynomially related to the original completeness-soundness gap $1-s$, and it also decides $L$.

It remains to argue that the amplified protocol $V_{ZK,\bot}^m$ still has the perfect zero knowledge property. In general, this is a delicate issue, since it is known that parallel repetition does not preserve zero knowledge in a black box manner~\cite{GoldreichK96,BellareIN97,Pass06}. 

In our case, however, since the referee is constrained to interacting with each prover only once, we can simulate the interaction in the amplified protocol $V_{ZK,\bot}^m$ by essentially running many copies of the simulator $\Sim_{\adv{R}}$ described in Figure~\ref{fig:sim} in parallel. Notice that the honest strategy for $V_{ZK,\bot}^m$ consists of sharing $m$ copies of the history state, and performing independent measurements on each of these copies. It is not hard to see that the interaction in $V_{ZK,\bot}^m$ can be simulated efficiently.

The number of provers involved in the protocol executed by $V_{ZK,\bot}^m$ is $k + 4$, and the protocol is $1$-round, which implies that
\[
	L \in \PZKMIP^*_{1,s''}[k+4,1].
\]
where $s'' \leq 1 - (1 -s)^\gamma$.
This concludes the proof of Theorem~\ref{thm:main}.
\section{Simulatable codes}
\label{sec:simulatable}
\label{sec:simulability}

In this section we show the existence of simulatable codes.  We start by
introducing stabilizer codes and some notation in \Cref{sec:stabilizer}. Then,
we analyse low-weight measurements on codewords of a stabilizer QECC in
\Cref{sec:partial-trace}. In
\Cref{sec:simulation-transversal,sec:simulation-toffoli} we show how to simulate
low-weight measurements on the encoding of transversal and non-transversal
gates, respectively. Finally, in~\Cref{sec:explicit-simulatable} we show that
the concatenated Steane code is a simulatable code.

\subsection{Stabilizer codes}
\label{sec:stabilizer}
We present some preliminary background on stabilizer codes, an important class of QECCs.
For an in-depth reference on stabilizer codes, we recommend consulting~\cite{Gottesman97}.

Let $\mcP_n$ be the $n$-qubit
Pauli group, so $P_n$ is the set of $n$-qubit unitaries $W_1 \otimes \cdots
\otimes W_n$, where $W_i \in \{\pm I, \pm i I, \pm X, \pm i X, \pm Y, \pm i
Y, \pm Z, \pm i Z\}$ for all $i=1,\ldots,n$. The \emph{weight} of an element
$W_1 \otimes \cdots \otimes W_n \in P_n$ is $|\{1 \leq i \leq n : W_i \not\in \{\pm
I, \pm i I \}|$.  

An $[[n,k]]$ stabilizer code is specified by an abelian subgroup
$\mcS \subseteq P_n$ such that $-I \not\in \mcS$, and any minimal generating
set of $\mcS$ has size $n-k$.  Usually we fix a minimal generating set
$g_1,\ldots,g_{n-k}$ of $\mcS$, and refer to these elements as the stabilizers
of the code. The codespace of an $[[n,k]]$ stabilizer code $\mcS$ is the subspace
of vectors in $\C^{2^n}$ fixed by $\mcS$. 
In other words, $\ket{\psi}$ is in the code if and only if $g \ket{\psi} = \ket{\psi}$
for all $g \in \cS$. 
This space always has dimension
$2^k$, and hence an $[[n,k]]$ stabilizer code can encodes $k$-qubit states in
$n$-qubit states. To  fix one of all the possible encodings of $k$-qubit states, we can find
$\overline{Z}_1,\ldots,\overline{Z}_k \in \mcP_n$ such that $\mcS' = \langle
g_1,\ldots,g_{n-k}, \overline{Z}_1,\ldots,\overline{Z}_k\rangle$ is abelian and
is minimally generated by
$g_1,\ldots,g_k,\overline{Z}_1,\ldots,\overline{Z}_k$. The encoding then sends
the computational basis state $\ket{x_1 \cdots x_k}$ to the unique state in the
codespace fixed by $g_1,\ldots,g_k,(-1)^{x_1} \overline{Z}_1, \ldots,
(-1)^{x_k} \overline{Z}_k$.

More generally, an $[[n,k]]$ stabilizer code can be used to encode $mk$-qubit
states in $mn$-qubit states. This can be expressed in the stabilizer formalism
by taking the product of the stabilizer code with itself $m$ times.
Specifically Let $\Delta^i_{n,mn} : \mcP_n \arr \mcP_{mn}$ be the
inclusion\footnote{An inclusion map $f : A
\mapsto B$ consists in treating an element $x \in A$, as an element of $B$.} induced by
having $\mcP_n$ act on qubits $(i-1)n+1,(i-1)n+2,\ldots,in$ of an $mn$-qubit
register. For example, we have $\Delta^2_{2,6}(X \otimes Z) = I \otimes I \otimes X
\otimes Z \otimes I \otimes I$. When $n$ and $m$ are clear, we write $\Delta^i$ for $\Delta^i_{n,mn}$. Given a stabilizer code $\mcS$ with stabilizers $g_1,\ldots,g_{n-k}$ in
$\mcP_n$, let
\begin{equation*}
    \mcS^{\otimes m} := \langle \Delta^i(g_j) \text{ where } 1 \leq i \leq m, 1 \leq j \leq n-k \rangle.
\end{equation*}
This defines an $[[mn,mk]]$ stabilizer code with minimal generating set
$\{\Delta^i(g_j)\}$.  To encode $mk$-qubit states in this code, we can take
elements $\overline{Z}_1,\ldots,\overline{Z}_k \in \mcP_n$ specifying an
encoding of $\mcS$ as above. Then elements $\Delta^i(\overline{Z}_j)$, $1 \leq i
\leq m$, $1 \leq j \leq k$ specify an encoding for the code $\mcS^{\otimes m}$.

Every pair of elements $g, h \in \mcP_n$ either commute or anticommute. Since
stabilizer codes do not contain $-I$ by definition, the normalizer $N(\mcS)$ of a
stabilizer code $\mcS$ in $\mcP_n$ is the set of all elements of $\mcP_n$ which
commute with every element in $\mcS$. The \emph{distance} of a stabilizer code is
the smallest integer $d$ such that $N(\mcS) \setminus \mcS$ contains an element of
weight $d$. An $[[n,k,d]]$ stabilizer code is an $[[n,k]]$ stabilizer code of
distance $\geq d$.

\subsection{Computing partial trace of codewords}
\label{sec:partial-trace}

Suppose we want to the compute the partial trace $\tr_{\overline{Q}}(\rho)$ for some
$n$-qubit state $\rho$ and subset of qubits $Q$. Because $\mcP_{|Q|}$ contains an
orthogonal basis (in the Hilbert-Schmidt inner product) for $2^{|Q|} \times
2^{|Q|}$ matrices and is closed under the adjoint operation, it is sufficient
to compute the inner products $\tr(\tr_{\overline{Q}}(\rho) w)$ for all elements
$w \in \mcP_{|Q|}$. Extending the notation from \Cref{sec:stabilizer}, let $\Delta^Q_n : \mcP_{|Q|} \arr \mcP_n$ be the inclusion
induced by having elements of $\mcP_{|Q|}$ act on qubits $Q$ (so for instance,
$\Delta_{n,mn}^i = \Delta^Q_{mn}$ where $Q = \{(i-1)n+1,\ldots,in\} \subseteq
\{1,\ldots,mn\}$). Then 
\begin{equation*}
    \tr\left(\tr_{\overline{Q}}(\rho) w\right) = \tr ( \rho \Delta_n^Q(w)),
\end{equation*}
so to compute $\tr_{\overline{Q}}(\rho)$, it is sufficient to be able to compute
$\tr( \rho \Delta_n^Q(w))$ for all $w \in \mcP_{|Q|}$. We record this fact
in the following lemma:
\begin{lemma}\label{L:paulitrace}
    The partial traces $\tr_{\overline{Q}}(\rho)$ of an $n$-qubit state $\rho$ can
    be computed from the traces $\tr( \rho \Delta_n^Q(w))$, $w \in \mcP_{|Q|}$,
    in time $\exp(O(|Q|))$. 
\end{lemma}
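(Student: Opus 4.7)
The plan is to leverage the fact that (unsigned) Pauli operators form an orthogonal basis for matrices under the Hilbert–Schmidt inner product, reducing the computation of $\tr_{\overline{Q}}(\rho)$ to computing its Pauli expansion coefficients, which are exactly the traces we are given.

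First I would observe that if we fix a set $\mcB \subseteq \mcP_{|Q|}$ consisting of one representative from each of the $4^{|Q|}$ cosets of $\{\pm I, \pm i I\}$ in $\mcP_{|Q|}$ (for example, all tensor products of $\{I, X, Y, Z\}$), then $\mcB$ is a basis for the space of $2^{|Q|} \times 2^{|Q|}$ matrices, and the elements of $\mcB$ are mutually orthogonal under the Hilbert–Schmidt inner product: $\tr(w^\dagger w') = 2^{|Q|} \delta_{w,w'}$ for $w, w' \in \mcB$, since each nontrivial Pauli is traceless and Paulis are Hermitian unitaries. Consequently, every $2^{|Q|}\times 2^{|Q|}$ matrix $M$ admits the expansion
\begin{equation*}
    M = \frac{1}{2^{|Q|}} \sum_{w \in \mcB} \tr(Mw)\, w.
\end{equation*}

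Next I would apply this to $M = \tr_{\overline{Q}}(\rho)$. The defining property of the partial trace asserts that for any operator $A$ on the $n$-qubit system and any operator $w$ acting on the qubits in $Q$,
\begin{equation*}
    \tr\bigl(\tr_{\overline{Q}}(\rho)\, w\bigr) \;=\; \tr\bigl(\rho\, \Delta_n^Q(w)\bigr),
\end{equation*}
so the expansion coefficients are exactly the quantities supplied to us in the lemma's hypothesis. Hence
\begin{equation*}
    \tr_{\overline{Q}}(\rho) \;=\; \frac{1}{2^{|Q|}} \sum_{w \in \mcB} \tr\bigl(\rho\, \Delta_n^Q(w)\bigr)\, w.
\end{equation*}

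Finally, I would bound the time complexity. The basis $\mcB$ has $4^{|Q|}$ elements, each a $2^{|Q|} \times 2^{|Q|}$ matrix with at most $2^{|Q|}$ nonzero entries (since each $w \in \mcB$ is a signed permutation matrix). Accumulating the weighted sum therefore takes time $4^{|Q|} \cdot 2^{|Q|} = \exp(O(|Q|))$, yielding the claimed bound. There is no real obstacle here; the only care needed is in selecting $\mcB$ so that the overall phase factors in $\mcP_{|Q|}$ do not spoil orthogonality, but this is a straightforward bookkeeping matter.
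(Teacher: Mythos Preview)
Your proposal is correct and follows essentially the same approach as the paper: the paper's justification (given in the paragraph preceding the lemma rather than as a formal proof) is exactly that $\mcP_{|Q|}$ contains a Hilbert--Schmidt orthogonal basis and that $\tr(\tr_{\overline{Q}}(\rho)\,w) = \tr(\rho\,\Delta_n^Q(w))$. Your write-up is in fact more explicit than the paper's, spelling out the choice of coset representatives and the complexity count.
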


For a stabilizer code $\mcS$, we can easily compute $\tr( \Enc(\rho) w)$,
without knowing $\rho$, as long as $w$ is not in $N(\mcS) \setminus \mcS$.
\begin{lemma}\label{L:stabilizerinnerproduct}
    Let $\Enc(\rho)$ be an encoding of a $k$-qubit state $\rho$  in an $[[n,k]]$
    stabilizer code $\mcS$, and suppose $w \not\in N(\mcS) \setminus \mcS$. Then 
    \begin{equation*}
        \tr(\Enc(\rho) w) = \begin{cases} 1 & w \in \mcS \\ 0 & w \not\in \mcS \end{cases}.
    \end{equation*}
\end{lemma}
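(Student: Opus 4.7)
The plan is to split the argument into the two cases $w \in \mcS$ and $w \not\in N(\mcS)$ (which together cover the hypothesis $w \not\in N(\mcS) \setminus \mcS$), and in each case use the defining property of a stabilizer code: every codeword $\Enc(\rho)$ lies in the subspace pointwise fixed by $\mcS$, so $g \Enc(\rho) = \Enc(\rho) g = \Enc(\rho)$ for every $g \in \mcS$. The key auxiliary fact, used in both cases, is that any $g \in \mcS$ squares to $+I$: indeed $g^2 \in \mcS$, and for a Pauli-group element $g^2 \in \{+I, -I\}$, but $-I \not\in \mcS$ by the definition of a stabilizer code.

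For the case $w \in \mcS$, I would simply note that $w \Enc(\rho) = \Enc(\rho)$, so $\tr(\Enc(\rho) w) = \tr(\Enc(\rho)) = 1$ since $\Enc(\rho)$ is a density matrix.

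For the case $w \not\in N(\mcS)$, by definition of the normalizer there exists $g \in \mcS$ that anticommutes with $w$, so $gw = -wg$. Using $\Enc(\rho) = \Enc(\rho) g$ followed by the anticommutation relation and then the cyclic property of the trace together with $g \Enc(\rho) = \Enc(\rho)$, I compute
\begin{equation*}
\tr(\Enc(\rho) w) = \tr(\Enc(\rho) g w) = -\tr(\Enc(\rho) w g) = -\tr(g \Enc(\rho) w) = -\tr(\Enc(\rho) w),
\end{equation*}
which forces $\tr(\Enc(\rho) w) = 0$.

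There is no real obstacle here; the only subtlety worth flagging is the need to verify $g^2 = I$ (rather than $-I$) so that one can freely insert $g$ and use $g \Enc(\rho) = \Enc(\rho)$ on both sides, and this is immediate from $-I \not\in \mcS$. The conclusion matches the statement: $1$ when $w \in \mcS$ and $0$ when $w \not\in \mcS$ under the hypothesis $w \not\in N(\mcS) \setminus \mcS$.
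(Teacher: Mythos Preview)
Your proof is correct and rests on the same idea as the paper's: for $w \in \mcS$ use that $w$ fixes the codespace, and for $w \notin N(\mcS)$ exploit a stabilizer $g$ that anticommutes with $w$ together with the invariance $g\,\Enc(\rho) = \Enc(\rho)\,g = \Enc(\rho)$.

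The only difference is in packaging the second case. The paper first arranges a generating set so that exactly one generator anticommutes with $w$, then works with the full codespace projector $P$ and the orthogonal projector $P'$ obtained by flipping the sign of that generator, showing $Pw = wP'$ and hence $\tr(\Enc(\rho)w) = \tr(P'P\,\Enc(\rho)w) = 0$. Your argument bypasses the projector machinery entirely: a single application of trace cyclicity and the anticommutation relation already forces $\tr(\Enc(\rho)w) = -\tr(\Enc(\rho)w)$. This is strictly more economical and avoids the auxiliary ``standard argument'' about choosing generators. The paper's version has the mild advantage of making explicit that $w$ maps the codespace to an orthogonal error space, which is conceptually nice but not needed for the lemma as stated.
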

\begin{proof}
    If $w \in \mcS$, then $\Enc(\rho) w = \Enc(\rho)$ by definition, since $w$
    fixes the codespace of $\mcS$. So $\tr(\Enc(\rho) w) = \tr(\Enc(\rho)) = 1$.

    Suppose $w \not\in N(\mcS)$. Let $g_1,\ldots,g_{n-k}$ be a minimal generating
    set for $\mcS$. By a standard argument, we can assume that $g_2,\ldots,g_{n-k}$
    commute with $w$, and $g_1$ anticommutes. Let
    \begin{equation*}
        P = \left(\frac{I + g_1}{2}\right) \left(\frac{I + g_2}{2}\right)
                \cdots \left(\frac{I + g_{n-k}}{2}\right),
    \end{equation*}
    the projection onto the codespace of $\Enc(\rho)$. Then
    $P w = w P'$, where
    \begin{equation*}
        P' = \left(\frac{I - g_1}{2}\right) \left(\frac{I + g_2}{2}\right)
                \cdots \left(\frac{I + g_{n-k}}{2}\right),
    \end{equation*}
    an orthogonal projection to $P$. So
        \[
        \tr(\Enc(\rho) w) = \tr(P \Enc(\rho) P w) = \tr(P \Enc(\rho) w P')
         = \tr(P' P \Enc(\rho) w) = 0.
      \qedhere \]
\end{proof}
In particular, if $\mcS$ is an $[[n,k,d]]$ stabilizer code, and $|Q| < d$, then
$\Delta_n^Q(w)$ will have weight $< d$ for all $w \in \mcP_{|Q|}$, and hence
$\tr(\Enc(\rho) \Delta_n^Q(w))$ will be equal to $1$ or $0$ for all $w \in
\mcP_{Q}$, depending on whether $\Delta_n^Q(w) \in \mcS$. If $\mcS$ is
non-degenerate, meaning that every element of $\mcS$ has weight $\geq d$, then
$\tr(\Enc(\rho) \Delta_n^Q(w)) = 0$ unless $\Delta_n^Q(w)$ is the identity
matrix, so $\tr_{\overline{Q}}(\Enc(\rho))$ will be maximally mixed.
If the code is degenerate, 
$\tr_{\overline{Q}}(\Enc(\rho))$ will not always be maximally mixed; instead,
it is maximally mixed over the invariant subspace of the degenerate stabilizers. 

If $\mcS$ is an $[[n,k,d]]$ stabilizer code, then the product code $\mcS^{\otimes m}$
only has distance $d$ (and hence is an $[[mn,mk,d]]$ code). However, we can say
more about when an element of $\mcP_{mn}$ is in $N(\mcS^{\otimes m})$. 
\begin{lemma}\label{L:productnormalizer}
    Let $\mcS$ be an $[[n,k,d]]$ code, and suppose $w_1,\ldots,w_m$ are elements
    of $\mcP_n$. Then $w = w_1 \otimes \cdots \otimes w_m \in \mcP_{mn}$ belongs
    to $N(\mcS^{\otimes m})$ if and only if $w_i \in N(\mcS)$ for all $1 \leq i \leq m$.
    In particular, if $w_i$ has weight $< d$ for every $1 \leq i \leq m$, then
    $w \in N(\mcS^{\otimes m})$ if and only if $w \in \mcS^{\otimes m}$.
\end{lemma}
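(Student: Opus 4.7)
The plan is to verify the forward and backward directions by reducing everything to block-wise commutation with the generators of $\mcS^{\otimes m}$, and then to handle the weight condition using the definition of distance.

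First, recall that $\mcS^{\otimes m}$ is, by construction, generated by the elements $\{\Delta^i(g_j)\}$, where $g_1,\ldots,g_{n-k}$ is a minimal generating set for $\mcS$ and $1 \leq i \leq m$. A Pauli element normalizes $\mcS^{\otimes m}$ iff it commutes with every generator, since commutation with generators extends to the whole abelian group. Now $\Delta^i(g_j)$ acts as the identity on every block other than block $i$, so for $w = w_1 \otimes \cdots \otimes w_m$, the commutator of $w$ with $\Delta^i(g_j)$ depends only on whether $w_i$ commutes with $g_j$ on the $i$-th block. Thus $w \in N(\mcS^{\otimes m})$ iff, for every $i$, $w_i$ commutes with each of $g_1,\ldots,g_{n-k}$, which is the same as $w_i \in N(\mcS)$. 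This proves the first assertion.

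For the second assertion, suppose that every $w_i$ has weight $< d$. By the definition of the distance of $\mcS$, every element of $N(\mcS) \setminus \mcS$ has weight at least $d$. Hence $w_i \in N(\mcS)$ forces $w_i \in \mcS$ for each $i$. Writing
\[
    w = w_1 \otimes \cdots \otimes w_m = \Delta^1(w_1) \cdot \Delta^2(w_2) \cdots \Delta^m(w_m),
\]
which is an exact equality of Pauli operators (no additional phase arises since the factors act on disjoint blocks and therefore commute), we see that $w$ is a product of elements $\Delta^i(w_i) \in \mcS^{\otimes m}$, and hence lies in $\mcS^{\otimes m}$. The converse inclusion $\mcS^{\otimes m} \subseteq N(\mcS^{\otimes m})$ is automatic from the abelian stabilizer formalism.

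The only mild subtlety, and the place I would be most careful, is the handling of phases: one must check that the minimal generating set of $\mcS^{\otimes m}$ really does factor through the block inclusions $\Delta^i$ without introducing $-I$, and that the product $\prod_i \Delta^i(w_i)$ reconstructs the tensor product $w$ without a spurious sign. Both are immediate from the disjointness of the blocks and from the hypothesis that $-I \not\in \mcS$, so no real obstacle arises.
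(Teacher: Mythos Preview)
Your proof is correct and follows essentially the same approach as the paper: reduce normalizer membership to block-wise commutation with the generators $\Delta^i(g_j)$, and then use the distance bound to force each $w_i \in N(\mcS)$ into $\mcS$. If anything, you are slightly more thorough than the paper, which only writes out the forward direction of the first assertion explicitly and leaves the converse and the phase bookkeeping implicit.
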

\begin{proof}
    Suppose $w \in N(\mcS^{\otimes m})$. Then $w \Delta^i(g) w^{-1} = \Delta^i(g)$
    for all $g \in \mcS$. But $w = \Delta^1(w_1) \cdots \Delta^m(w_m)$, so 
    $w \Delta^i(g) w^{-1} = \Delta^i(w_i g w_i^{-1})$. Since $\Delta^i$ is an
    inclusion, $w_i \in N(\mcS)$. 

    If $w_i$ has weight $< d$ for all $1 \leq i \leq m$, and $w \in
    N(\mcS^{\otimes m})$, then we must have $w_i \in \mcS$ for all $1 \leq i
    \leq m$, and hence $w \in \mcS^{\otimes m}$. 
\end{proof}

\subsection{Simulatable encoding of transversal Clifford gates}
\label{sec:simulation-transversal}
We now consider what happens if we add operations on encoded states into the
picture. For simplicity of description, we restrict to $[[n,1,d]]$ stabilizer
codes. Recall that the $n$-qubit Clifford group $\mcC_n$ is the normalizer of
$\mcP_n$ in the group of unitaries. 
\begin{lemma}\label{L:bitwisecliffords}
    Let $\mcS$ be an $[[n,1,d]]$ stabilizer code, let $\rho$ be a $k$-qubit
    state, and let $O_1,\ldots,O_\ell \in \mcC_{nk}$ such that $O_i$ acts
    on a subset $Q_i$ of the physical qubits of $\mcS^{\otimes k}$, where
    $Q_i \cap Q_j = \emptyset$ for all $1 \leq i \neq j \leq \ell$, and
    $Q_i$ contains at most one physical qubit from each logical qubit of
    $\mcS^{\otimes k}$ for all $1 \leq i \leq \ell$. 

    If $Q$ is a subset of the physical qubits of $\mcS^{\otimes k}$ with
    $|Q| < d$, then we can compute 
    \begin{equation*}
        \tr\left( O_\ell \cdots O_1 \Enc(\rho) (O_\ell \cdots O_1)^\dagger \Delta_{nk}^Q(w) \right)
    \end{equation*}
    for all $w \in \mcP_{|Q|}$ without knowledge of $\rho$. Furthermore, if
    $n$ and $d$ are constant, then this computation can be done in polynomial
    time in $k$, $\ell$, and the maximum amount of time needed to compute
    $O_i^\dagger g O_i \in \mcP_{nk}$ for any $1 \leq i \leq \ell$ and $g \in \mcP_{nk}$.
\end{lemma}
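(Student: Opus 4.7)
The plan is to use cyclicity of the trace to shift the Cliffords onto the Pauli:
\begin{equation*}
\tr\left(O_\ell \cdots O_1 \Enc(\rho) (O_\ell \cdots O_1)^\dagger \Delta_{nk}^Q(w)\right) = \tr\left(\Enc(\rho)\, w'\right),
\end{equation*}
where $w' := (O_\ell \cdots O_1)^\dagger \Delta_{nk}^Q(w) (O_\ell \cdots O_1)$. Because $\mcC_{nk}$ normalizes $\mcP_{nk}$, we have $w' \in \mcP_{nk}$, and $w'$ can be produced by the iteration $\sigma^{(i)} = O_i^\dagger \sigma^{(i-1)} O_i$ with $\sigma^{(0)} = \Delta_{nk}^Q(w)$ and $\sigma^{(\ell)} = w'$. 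Each iteration is exactly the Clifford-Pauli conjugation whose cost is assumed as a parameter, and the Pauli can be stored compactly, so the whole computation of $w'$ runs in polynomial time in $k$, $\ell$, and the stated conjugation time.

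The crux is to show that if we write $w' = w'_1 \otimes \cdots \otimes w'_k$ across the $k$ logical blocks of $\mcS^{\otimes k}$, then each factor $w'_j$ has weight strictly less than $d$. Once this holds, Lemma~\ref{L:productnormalizer} gives $w' \in N(\mcS^{\otimes k})$ iff $w' \in \mcS^{\otimes k}$, so Lemma~\ref{L:stabilizerinnerproduct} yields
\begin{equation*}
\tr(\Enc(\rho)\, w') = \begin{cases} 1 & w' \in \mcS^{\otimes k}, \\ 0 & \text{otherwise}, \end{cases}
\end{equation*}
which is independent of $\rho$ and can be decided in polynomial time from the generators $\{\Delta^i(g_j)\}$ of $\mcS^{\otimes k}$.

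The main obstacle is the combinatorial per-block weight bound. Writing $B_m$ for the set of physical qubits of the $m$-th logical block, I will show $|\supp(w') \cap B_m| \leq |Q|$ for every $m$. Two observations do the work: (i) since the $Q_i$ are pairwise disjoint, conjugation by $O_j$ alters $\sigma^{(j-1)}$ only if $\sigma^{(j-1)}$ has nontrivial support on $Q_j$, and since no earlier $O_i$ with $i<j$ can deposit support on $Q_j$, this forces $Q \cap Q_j \neq \emptyset$; hence any qubit of $\supp(w')$ lying in some $Q_j$ comes from such an ``activated'' $Q_j$; (ii) each $Q_i$ contains at most one qubit of each block $B_m$. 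Using these, I define an injection $f : \supp(w') \cap B_m \to Q$ by setting $f(q') = q'$ if $q' \in Q$, and otherwise picking any qubit of $Q \cap Q_j$ where $q' \in Q_j$. Disjointness of the $Q_j$'s together with the one-qubit-per-block property make $f$ injective, giving $|\supp(w') \cap B_m| \leq |Q| < d$, as needed.
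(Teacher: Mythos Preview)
Your proof is correct and follows essentially the same route as the paper's: conjugate the Pauli through the Cliffords via cyclicity of trace, then argue each per-block restriction of $w'$ has weight $<d$, and finish with Lemmas~\ref{L:productnormalizer} and~\ref{L:stabilizerinnerproduct}. The only cosmetic difference is that the paper bounds the per-block weight by the direct count $|\{i : Q_i \cap Q \neq \emptyset\}| + |R| \leq |Q|$ (where $R = Q \setminus \bigcup_i Q_i$), whereas you package the same two observations into an explicit injection $\supp(w') \cap B_m \hookrightarrow Q$; your injectivity check relies on precisely the same disjointness and one-qubit-per-block facts the paper invokes.
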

\begin{proof}
    Since $O_\ell \cdots O_1 \in \mcC_{nk}$, 
    \begin{equation*}
        (O_{\ell} \cdots O_1)^\dagger \Delta_{nk}^Q(w) = w' (O_{\ell} \cdots O_1)^\dagger,
    \end{equation*}
    where $w' \in \mcP_{nk}$ can be computed in time polynomial in $\ell$ and
    the time needed to compute $O_i^\dagger g O_i$ for any $g \in \mcP_{nk}$ and $1
    \leq i \leq \ell$.

    Let $R = Q \setminus \bigcup_{i=1}^{\ell} Q_i$. 
    Write $w' = W_1 \otimes \cdots \otimes W_{nk}$
    where $W_j \in \mcP_1$ for all $1 \leq j \leq nk$, and let $w_a =
    W_{(a-1)n+1} \otimes \cdots \otimes W_{an}$, so $w_a$ contains the
    operators corresponding to the $a$th logical qubit, $1 \leq a \leq k$. 
    Since the operators $O_i$ act on disjoint sets of physical qubits, if
    $W_j \not\in \{\pm I, \pm i I\}$, then either $j \in R$, or $j \in Q_i$
    for some $i$ with $Q_i \cap Q \neq \emptyset$. Because
    \begin{equation*}
        |\{(a-1)n + 1,\ldots,an\} \cap Q_i| \leq 1 \text{ for all } 1 \leq i \leq \ell,
    \end{equation*}
    we see that the weight of $w_a$ is at most
    \begin{equation*}
        |\{1 \leq i \leq \ell : Q_i \cap Q \neq \emptyset\}| + |R| \leq |Q| < d.
    \end{equation*}
    By Lemma \ref{L:productnormalizer}, $w' \in \mcS^{\otimes k}$ if and only
    if $w' \in N(\mcS^{\otimes k})$. Also, $w' \in N(\mcS^{\otimes k})$ if and
    only if $w_a \in N(\mcS)$ for all $1 \leq a \leq \ell$. Since $\mcS$ is
    fixed, we can check whether $w_a \in N(\mcS)$ in constant time, and hence
    we can determine whether $w' \in \mcS^{\otimes k}$. 

    Finally, we have
    \begin{equation*}
        \tr\left( O_\ell \cdots O_1 \Enc(\rho) (O_\ell \cdots O_1)^\dagger \Delta_{nk}^Q(w) \right)
            = \tr\left( O_\ell \cdots O_1 \Enc(\rho)w' (O_\ell \cdots O_1)^\dagger\right)
            = \tr(\Enc(\rho) w').
    \end{equation*}
    Since $w' \not\in N(\mcS^{\otimes k}) \setminus \mcS^{\otimes k}$, 
    \begin{equation*}
        \tr(\Enc(\rho)w') = \begin{cases} 1 & w' \in \mcS^{\otimes m} \\
                                          0 & w' \not\in \mcS^{\otimes m}.
            \end{cases}
    \end{equation*}
    by Lemma \ref{L:stabilizerinnerproduct}. 
\end{proof}

With the previous lemma, we can show how to simulate the transversal encoding of
Clifford gates.
 
\begin{proposition}\label{P:transversalsim}
  If the $[[n,1,d]]$ stabilizer code accepts a transversal encoding of a
  $k$-qubit Clifford gate $G$, then such encoding is $s$ simulatable for all $s < d$.
\end{proposition}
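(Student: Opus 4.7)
The plan is to reduce Proposition~\ref{P:transversalsim} to the two technical lemmas already established in Section~\ref{sec:partial-trace}, namely Lemma~\ref{L:paulitrace} and Lemma~\ref{L:bitwisecliffords}. Because $G$ is a Clifford gate and the encoding is transversal, the encoding needs no ancilla resource states, so I can take $\sigma_U = \sigma_U' = 1$ (the trivial one-dimensional state) in Definition~\ref{D:simulatable}. Transversality means that the physical gates $O_1,\ldots,O_\ell$ are Clifford operators, each acting on a set of physical qubits $Q_i$ that contains at most one physical qubit from each logical block of $\mcS^{\otimes k}$, and (after a suitable reordering that commutes gates acting on disjoint qubits) we can arrange that the $Q_i$ are pairwise disjoint. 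This is exactly the hypothesis of Lemma~\ref{L:bitwisecliffords}.

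Now fix a time $0 \leq t \leq \ell$, an $n$-qubit input state $\rho$, and a subset $S$ of physical qubits with $|S| \leq s < d$. I need to show that the partial trace
\[
  \tr_{\overline{S}}\bigl((O_t \cdots O_1)\,\Enc(\rho)\,(O_t \cdots O_1)^\dagger\bigr)
\]
can be computed in polynomial time and is independent of $\rho$. By Lemma~\ref{L:paulitrace}, it suffices to compute $\tr(O_t \cdots O_1\, \Enc(\rho)\, (O_t \cdots O_1)^\dagger\, \Delta_{nk}^S(w))$ for every $w \in \mcP_{|S|}$, in time $\exp(O(s)) = O(1)$ many such evaluations. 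For each such $w$, I apply Lemma~\ref{L:bitwisecliffords} with the truncated sequence $O_1,\ldots,O_t$ (a truncation of a transversal encoding still satisfies the disjointness and single-qubit-per-block hypotheses of the lemma). The lemma then guarantees both that this trace can be computed without any knowledge of $\rho$, and that the computation runs in polynomial time in $k$, $t \leq \ell$, and the time needed to conjugate Pauli elements through the $O_i$'s, which is polynomial since each $O_i$ is a Clifford acting on $O(1)$ qubits.

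Since the traces $\tr(\Enc(\rho)\,w')$ appearing in Lemma~\ref{L:bitwisecliffords} depend only on whether the conjugated Pauli $w'$ lies in $\mcS^{\otimes k}$ (equivalently, in $N(\mcS^{\otimes k})$, thanks to Lemma~\ref{L:productnormalizer} and the weight bound derived from $|S| < d$), they are independent of $\rho$. Assembling the $\exp(O(s))$ computed values back into the $2^{|S|} \times 2^{|S|}$ partial trace matrix via the inverse of the Pauli expansion (as in Lemma~\ref{L:paulitrace}) is also polynomial time. This establishes $s$-simulatability.

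The main thing to verify carefully is the weight bound: for a single Pauli $w$ acting on $|S| < d$ qubits, conjugating by $O_t \cdots O_1$ under the transversality hypothesis yields a Pauli $w'$ whose restriction $w_a$ to each logical block has weight strictly less than $d$, which is exactly what Lemma~\ref{L:bitwisecliffords} proves. Given that lemma, there is no remaining obstacle — the proposition follows by essentially packaging Lemmas~\ref{L:paulitrace} and~\ref{L:bitwisecliffords} together and observing that the transversal encoding produces a trivial ancilla $\sigma_U$ so the $\Enc(\rho) \otimes \sigma_U$ in Definition~\ref{D:simulatable} reduces to $\Enc(\rho)$.
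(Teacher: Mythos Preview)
Your proposal is correct and follows essentially the same approach as the paper: both reduce the claim to Lemmas~\ref{L:paulitrace} and~\ref{L:bitwisecliffords} by verifying that a transversal encoding of a $k$-qubit Clifford gate gives a sequence of Cliffords satisfying the disjoint-support and one-qubit-per-block hypotheses of Lemma~\ref{L:bitwisecliffords}. Your write-up is in fact more detailed than the paper's (you make explicit the absence of ancilla, the truncation argument, and the weight bound), but the underlying argument is identical.
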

\begin{proof}
    Let $\rho$ be an $n$-qubit state,
    $\underline{a} = (a_1,...,a_k)$ be a $k$-tuple
    of disjoint integers between $1$ and $n$ and
    $O_1(\underline{a},\ldots,O_{\ell}(\underline{a})$ be the encoding of
    $G(\underline{a})$ for $\ell = n$, and let $S$ be a subset of
    $\{1,\ldots,\ell n\}$ with $|S| \leq s$. 

    If $1 \leq t \leq \ell$, then the set of gates
    $O_1(\underline{a}),\ldots,O_t(\underline{a})$ satisfy the conditions of Lemma
    \ref{L:bitwisecliffords}, and furthermore, since $O_i(\underline{a})$ acts
    on at most $k$ physical qubits, $O_i(\underline{a})^\dagger w O_i(\underline{a})$
    can be computed in polynomial time in $n$ for all $w \in \mcP_{\ell n}$. 

    The proposition
    follows from Lemmas \ref{L:paulitrace} and  \ref{L:bitwisecliffords}.
\end{proof}

\subsection{Simulatable encoding of non-transversal gates}
\label{sec:simulation-toffoli}

It is well known that there is no QECC where all logical operations from a universal set of gates can be performed transversally. In order to circumvent this barrier, we can use other tools from 
fault-tolerant quantum computation, namely magic states.

The general procedure for applying a $k$-qubit gate $G$ using a magic state for it is
depicted in \Cref{fig:magic}. The input to this procedure is some $k$-qubit
state $\rho$,
on which we want to apply $G$, and the magic state $\ket{\Magic_G}$.
In the first phase, a unitary $V_0$ is
applied to both registers. In the second phase, some of the qubits are
measured. Finally,  classical controled unitaries $V_{i}$ are applied.
We assume for simplicity that each $V_i$ can be applied transversally
\footnote{More generally, we could assume that $V_i$ can be decomposed on gates
that can be applied transversally and then the encoding of $V_i$ consists of
the sequence of encoding for each of these gates.}.
The output of this 
procedure is then $G\rho G^\dagger$.

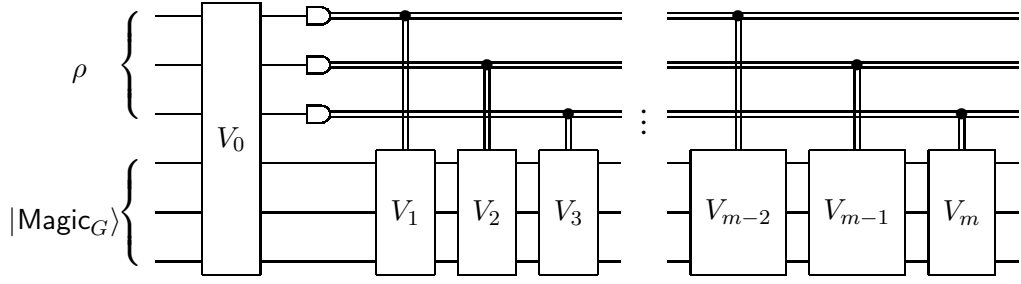
\begin{figure}[H]
\centering
$
\Qcircuit @C=0.8em @R=0.8em {
  &  & \qw & \multigate{5}{V_0} &  \qw & \measureD{}  & \cw  
    & \control \cw \cwx[3] & \cw & \cw 
  &\cw  & & 
    &   \control \cw \cwx[3] & \cw & \cw & \cw
    \\
  &  & \qw & \ghost{V_0} &  \qw & \measureD{}  & \cw  
    & \cw & \control \cw \cwx[2] & \cw
  & \cw & & 
    & \cw & \control \cw \cwx[2] & \cw & \cw \\
  &  & \qw & \ghost{V_0} &  \qw & \measureD{}  & \cw  
    & \cw & \cw & \control \cw \cwx[1]    
  & \cw &  \vdots  & 
    & \cw & \cw & \control \cw \cwx[1] & \cw   
\inputgroupv{1}{3}{.2em}{2.0em}{\rho \;} \\
 &  & \qw & \ghost{V_0}        &  \qw & \qw         & \qw
  & \multigate{2}{V_1} \qw 
  & \multigate{2}{V_2} \qw 
  & \multigate{2}{V_3} \qw 
  & \qw & & 
  & \multigate{2}{V_{m-2}} \qw 
  & \multigate{2}{V_{m-1}} \qw 
  & \multigate{2}{V_{m}} \qw & \qw
  \\
 &  & \qw & \ghost{V_0}        &  \qw & \qw         & \qw 
  & \ghost{V_1} \qw
  & \ghost{V_2} \qw
  & \ghost{V_3} \qw
  & \qw & &
  & \ghost{V_{m-2}} \qw
  & \ghost{V_{m-1}} \qw
  & \ghost{V_m} \qw & \qw
  \\
 &  & \qw & \ghost{V_0}        &  \qw & \qw         & \qw 
  & \ghost{V_1} \qw
  & \ghost{V_2} \qw
  & \ghost{V_3} \qw
  & \qw & & 
  & \ghost{V_{m-2}} \qw
  & \ghost{V_{m-1}} \qw
  & \ghost{V_m} \qw & \qw
\inputgroupv{4}{6}{.2em}{2.0em}{\ket{\Magic_G} \;\quad } \\
}
$
\caption{Teleportation gadget}
\label{fig:magic}
\end{figure}

There are two issues when one try to prove that the encoding of such 
gadget is simulatable. First,
it would be necessary to {\em decode} the measured qubit to perform the
controlled unitary, and this could make the encoding non-simulatable. 
In order to solve this problem, we present in \Cref{S:non-transversal} an extra
property that we require from our QECC and how it allows us to 
 define the non-transversal encoding of $G$.
Secondly, measurements are not allowed in 
\Cref{D:simulatable}. In this case, we show how to perform a coherent encoding of gate $G$ in \Cref{S:coherent}.

With these two pieces in hand, we are able to show the simulation of the
coherent non-transversal encoding of $G$ in \Cref{S:simulation-non-transversal}.

\subsubsection{Simulatable encoding of classically-controlled transversal gates}\label{S:non-transversal}
As mentioned before, one of the possible issues when simulating the encoding of \Cref{fig:magic}
is that the control qubit must be decoded during the gadget.  
Here, we show how to workaround this issue by
assuming that the QECC has the following property.

\begin{definition}
  For $b \in \{0,1\}$, let $E_b$ be the set of strings in $\Enc(\ket{b})$.
  A QECC $\mcS$ is order-consistent with a set of unitaries $\mathcal{V}$ 
  if for every $V \in \mathcal{V}$ and every $x \in E_b$, it follows that
  $V^{|x|} = V^b$.
\end{definition}

We require then our QECC to be order-consistent with the set of unitaries
$\mathcal{V} = \{V_0,...,V_m\}$ in \Cref{fig:magic}.
In order to explain how the non-transversal encoding works, let us 
first assume that starting from $\Enc(\ket{b}) \otimes \ket{\psi}$, we want to
apply $V^b$ on $\ket{\psi}$ without decoding $\ket{b}$.
This operation can be implemented by applying the gates
\begin{equation*}
  O_{1},O_{2},\ldots, O_{n}
\end{equation*}
where $O_{i}$ is a $\Lambda(V)$-gate with the $i$th qubit of 
$\Enc(\ket{b})$ as the control qubit and
$\ket{\psi}$ as target. 
We depict such encoding in 
\Cref{F:nontransversal}.

Notice that
this sequence of gates performs the correct logical operation, since
after applying 
$O_{1},O_{2},\ldots, O_{n}$, the resulting state is
\begin{align} \label{eq:order-consistent}
  \sum_{x \in E_b} \alpha_x \ket{x}V^{|x|}\ket{\psi},
\end{align}
where $E_b$ is the set of string in the support of $\Enc(\ket{b})$. Since
we assume that the QECC is order-consistent with $\mathcal{V}$, we have that
\Cref{eq:order-consistent} is equal to
\begin{align*}
  \Enc(\ket{b})V^{b}\ket{\psi}.
\end{align*}

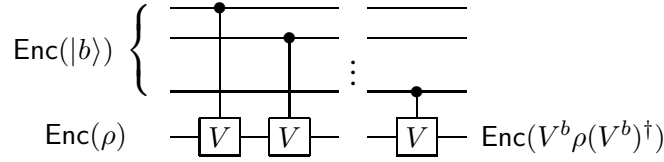
\begin{figure}[t]
  \centering
  $
        \Qcircuit @C=1em @R=.8em {
          &    & \ctrl{4}          & \qw               & \qw & & \qw      & \qw \\
          &  & \qw & \ctrl{3} & \qw &   & \qw      & \qw\\
          &  &     &          &  \quad\vdots & &          & \\
          &  & \qw & \qw      & \qw &   & \ctrl{1} & \qw & \\
        \lstick{\Enc(\rho)}   &  &\gate{V} &\gate{V} & \qw &  &\gate{V}   \qw    & 
  \rstick{\Enc(V^b \rho (V^b)^\dagger)} \qw\\
  {\inputgroupv{1}{4}{.2em}{1.5em}{\Enc(\ket{b}) \quad \quad }}
        }
        $
    \caption{Non-transversal encoding the controlled gates}
    \label{F:nontransversal}
\end{figure}

We can extend this idea and have an encoding of $\ket{\psi}$, and then we can
just decompose $O_i$
\begin{equation*}
  O_i = O_{i,1},O_{i,2},\ldots, O_{i,n}
\end{equation*}
where $O_{i,j}$  is a $\Lambda(V)$ gate with the $i$th physical qubit of the encoding of the
logical control bit, and the target are the $j$th physical qubits of the
encoding of the logical target state.

With this tool, we are able to prove the encoding of $\Lambda(V)$ is simulatable if the
control is a classical known bit.

\begin{lemma}\label{L:controlled-transversal}
  Let $\mcS$ be a $[[n,1,d]]$ stabilizer code that is order-consistent with
  some $k$-qubit unitary $V$ which has a transversal encoding in $\mcS$.
  Then the non-transversal encoding of 
  \[\Lambda(V)   \left(\kb{x} \otimes \rho \right)\Lambda(V)^\dagger\]
  for some fixed $x \in \{0,1\}$ is $s$-simulatable for any $s < d-k$.
\end{lemma}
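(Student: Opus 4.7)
The plan is to extend the Pauli-trace argument of \Cref{L:bitwisecliffords} to the non-transversal controlled encoding. By \Cref{L:paulitrace}, it suffices to compute
\[
\tr\bigl(U_t (\Enc(\kb{x}) \otimes \Enc(\rho)) U_t^\dagger \cdot \Delta^S(w)\bigr)
\]
for each $w \in \mcP_{|S|}$, where $U_t$ is the product of the first $t$ physical gates of the non-transversal encoding. I would split $w = w_c \otimes w_t$ according to $S = S_c \sqcup S_t$ (control vs.\ target qubits) and expand $\Enc(\kb{x}) = \sum_{z,z' \in E_x} \alpha_z \alpha_{z'}^* \ketbra{z}{z'}$. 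Since each physical gate $O_\tau$ is a $\Lambda(V)$ diagonal in the control's computational basis, one gets $U_t (\ketbra{z}{z'} \otimes \Enc(\rho)) U_t^\dagger = \ketbra{z}{z'} \otimes V_z^{(t)} \Enc(\rho) V_{z'}^{(t)\dagger}$, where $V_z^{(t)}$ acts on the $j$-th target slice as $V^{a_j(z,t)}$ with $a_j(z,t)$ the number of processed cells $(i,j)$ for which $z_i=1$. The trace therefore decomposes as
\[
\sum_{z,z' \in E_x} \alpha_z \alpha_{z'}^* \bra{z'}\Delta^{S_c}(w_c)\ket{z} \cdot \tr\bigl(\Enc(\rho) \cdot V_{z'}^{(t)\dagger} \Delta^{S_t}(w_t) V_z^{(t)}\bigr).
\]

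The key structural step is to argue that when $|S| < d-k$, only the diagonal terms $z = z'$ contribute. The factor $\bra{z'}\Delta^{S_c}(w_c)\ket{z}$ is nonzero only when $z$ and $z'$ agree outside $S_c$ and differ exactly on the $X/Y$-support of $w_c$, so $z \oplus z' \subseteq S_c$. For $z,z' \in E_x$, however, $z \oplus z'$ equals the $X/Y$-support pattern of some stabilizer $g \in \mcS$; for the concatenated Steane code of \Cref{T:simulatable} (or any CSS-like code) the set of such patterns forms a classical linear code whose nonzero elements have Hamming weight $\geq d$. Since $|S_c| \leq |S| < d-k < d$, the only surviving pairs are $z = z'$, which in turn forces $w_c$ to be a product of $I$ and $Z$ on $S_c$ and reduces the control contribution to a computable phase $(-1)^{w_c \cdot z|_{S_c}}$ that depends only on $z$ and $w_c$.

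For each diagonal $z \in E_x$, the remaining target operator is $V_z^{(t)\dagger}\Delta^{S_t}(w_t)V_z^{(t)}$. In the intended application to the Toffoli magic-state gadget of \Cref{fig:magic}, the classically-controlled unitaries $V_i$ are Clifford, so this conjugate is itself a Pauli $\tilde w(z,t) \in \mcP_{kn}$ whose support is contained in the target slices that intersect $S_t$. In particular, the restriction of $\tilde w(z,t)$ to the $n$ physical qubits of any single logical target qubit has weight at most $|S_t| < d-k < d$, so by \Cref{L:productnormalizer,L:stabilizerinnerproduct} we obtain $\tr(\Enc(\rho) \tilde w(z,t)) \in \{0,1\}$ independently of $\rho$. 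Summing the $|E_x| \leq 2^n$ diagonal terms (a constant since $n$ is constant) together with their phases then yields the desired partial trace in polynomial time, verifying $s$-simulatability. The main obstacle is the minimum-weight claim on $z \oplus z'$: it is clean for CSS codes thanks to the separation of $X$- and $Z$-stabilizers, but controlling the image $f(\mcS)$ of the $X/Y$-support map would require additional care for arbitrary stabilizer codes. In the present setting this is settled by fixing $\mcS$ to be the concatenated Steane code used in \Cref{T:simulatable}.
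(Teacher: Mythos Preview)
Your diagonal-only reduction breaks precisely for the code you fix at the end. You claim that for $z,z'\in E_x$ the difference $z\oplus z'$ is the $X$-support of some stabilizer and therefore has Hamming weight $\geq d$. For a \emph{nondegenerate} CSS code this is fine, but the concatenated Steane code $\Steane_K$ is degenerate for $K\geq 2$ (the paper states this explicitly in \Cref{sec:concatenated-steane}): the innermost-level $X$-stabilizers have weight~$4$, so the classical code spanned by the $X$-stabilizer supports has minimum distance~$4$, not $3^K$. Concretely, take any two strings in $E_x$ that agree on all inner blocks except one, where they differ by a weight-$4$ Steane $X$-stabilizer pattern; then $z\oplus z'$ has weight~$4$. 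Since the application requires $s=192$ and hence $|S_c|$ can be well above~$4$, the off-diagonal terms with $z\neq z'$ do not vanish and you have dropped a nonzero contribution.

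The paper's proof does not try to kill the off-diagonal terms. Instead it keeps all pairs $(y,z)$ and uses order-consistency in a different way: on every target slice $j$ that has been \emph{fully} processed, the left and right unitaries coincide because $V(j)^{|y|}=V(j)^{x}=V(j)^{|z|}$, so the only possible mismatch is on the single partially-processed slice $t_2$, where $V(t_2)^{b}(V(t_2)^{c})^{\dagger}$ is either the identity (if $b=c$) or a linear combination of weight-$k$ Paulis (if $b\neq c$). That extra factor of weight~$\leq k$ is exactly why the hypothesis is $s<d-k$ rather than $s<d$: it buys you the headroom to absorb the off-diagonal mismatch into the Pauli whose weight must stay below~$d$ for \Cref{L:bitwisecliffords} to apply. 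Your argument never exploits order-consistency on the target side in this way, and without it the off-diagonal terms cannot be controlled.
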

\begin{proof}
    The non-transversal encoding of $\Lambda(V)$ consists of $O_{1,1}, O_{2,1},\ldots,O_{n,1},
    O_{1,2},\ldots,O_{n,n}$, where $O_{i,j}$ is the $\Lambda(V)$ gate with control on
    the $i$th physical qubit of the encoding of the first qubit and $j$th
    physical qubits of the logical target qubits.

    Let $S$ be the subset of qubits for which we want to compute 
    \begin{align*}
       \tr_{\overline{S}}(O_{(t_1,t_2)}  O_{(t_1-1,t_2)} \cdots O_{2,1}
      O_{1,1} \Enc(\kb{x} \otimes \rho)
            (O_{(t_1,t_2)} O_{(t_1-1,t_2)} \cdots O_{2,1} O_{1,1})^\dagger),
    \end{align*}
    with $|S| \leq s$,
    \begin{equation*}
        \Enc(\ket{x}) = \sum_{y \in \{0,1\}^{(k+1) n}} \alpha_y \ket{y},
    \end{equation*}
    and
    \begin{equation*}
        \Enc(\kb{x} \otimes \rho) = \sum_{y,z\in \{0,1\}^{n}}
           \alpha_y \alpha_z^* \ket{y}\bra{z} \otimes \Enc(\rho).
    \end{equation*}

    We can fix the computational basis states $\ket{y}$ and
    $\ket{z}$ in the support of $\Enc(\ket{x})$ and the result will follow by
    linearity.
    Let $V(j)$ denote the gate $V$ applied to the $j$th physical
    qubits of the encodings of each qubit of $\rho$. We have that
    \begin{align*}
      &   O_{(t_1,t_2)}  O_{(t_1-1,t_2)} \cdots O_{2,1} O_{1,1} \ket{y}\bra{z}
        \otimes \Enc(\rho)
            (O_{(t_1,t_2)} O_{(t_1-1,t_2)} \cdots O_{2,1} O_{1,1})^\dagger \\
            & = \ket{y}\bra{z} \otimes  V(t_2)^{b} V(t_2-1)^{x} V(t_2-2)^{x} \cdots
            V(1)^{x} \Enc(\rho)
      \left(V(t_2)^{c} V(t_2-1)^{x} V(t_2-2)^{x} \cdots V(1)^{x}\right)^\dagger,
    \end{align*}
    where we use the fact that $V(j)^{|y|} = V(j)^x$ for every $y$ in
    $\Enc(\ket{x})$ and we define $b = y_{1} + \ldots + y_{{t_1}}$ and $c = z_{1} +\ldots + z_{{t_1}}$. Let
   $S_1 = S \cap
   \{1,\ldots,n\}$, and let $S_2 = S \setminus S_1$ (relabelled to
    be a subset of $\{1,\ldots,kn\}$). 
    Then
    \begin{align*}
      &   \tr_{\overline{S}}(O_{(t_1,t_2)}  O_{(t_1-1,t_2)} \cdots O_{2,1}
      O_{1,1} \ket{y}\bra{z} \otimes \Enc(\rho)
            (O_{(t_1,t_2)} O_{(t_1-1,t_2)} \cdots O_{2,1} O_{1,1})^\dagger)\\
        & = \tr_{\overline{S_1}}(\ket{y}\bra{z}) \otimes \tr_{\overline{S_2}}
        \left(V(t_2)^{b} V(t_2-1)^{x} \cdots V(1)^{x} \Enc(\rho)
      \left(V(t_2)^{c} V(t_2-1)^{x} V(t_2-2)^{x} \cdots
      V(1)^{x}\right)^\dagger\right).
    \end{align*}
    Because $x$, $y$, $z$ and the code are fixed, we can compute
    $\tr_{\overline{S_1}}(\ket{y}\bra{z})$ explicitly, so the only remaining
    step is to compute
    \begin{align*}
        \tr_{\overline{S_2}}
        \left(V(t_2)^{b} V(t_2-1)^{x} \cdots V(1)^{x} \Enc(\rho)
      \left(V(t_2)^{c} V(t_2-1)^{x} V(t_2-2)^{x} \cdots
      V(1)^{x}\right)^\dagger\right).
    \end{align*}
    By Lemma \ref{L:paulitrace}, it suffices to compute
    \begin{align}\label{Eq:codetarget}
        \tr_{\overline{S_2}}
        \left(V(t_2)^{b} V(t_2-1)^{x} \cdots V(1)^{x} \Enc(\rho)
      \left(V(t_2)^{c} V(t_2-1)^{x} V(t_2-2)^{x} \cdots
      V(1)^{x}\right)^\dagger\Delta^{S_2}_{kn} (w) \right).
    \end{align}
    for all $w \in \mcP_{|S_2|}$. Since $x$ is fixed, and can compute $b$
    and $c$ from $y$ and $z$. %
    Hence if $b=c$, we can compute \eqref{Eq:codetarget} by Lemma \ref{L:bitwisecliffords},
    without knowledge of $\rho$, since $|S_2| \leq |S| \leq d$.
    If $b \neq c$, then we can rewrite $V(t_2)^{b}(V(t_2)^c)^\dagger$ as a linear combination of weight
    $k$ elements of $\mcP_{kn}$, meaning that we can compute \eqref{Eq:codetarget} as
    long as we can compute 
    \begin{equation*}
        \tr \left(V(t_2-1)^{x} \cdots V(1)^{x} \Enc(\rho)
                V(1)^{x} \cdots V(t_2-1)^{x}  w' \right)
    \end{equation*}
    where $w'$ has weight at most $|S_2|+k \leq s+k < d$. So once again,
    we can compute this trace using Lemma \ref{L:bitwisecliffords}. 
\end{proof}

\subsubsection{Coherent encoding}\label{S:coherent}

We now present how to remove the measurements from \Cref{fig:magic}. First,
we notice that measuring one qubit is equivalent of copying ($\Lambda(X)$) it
into a fresh $\ket{0}$ ancilla, and then tracing out this ancilla. 

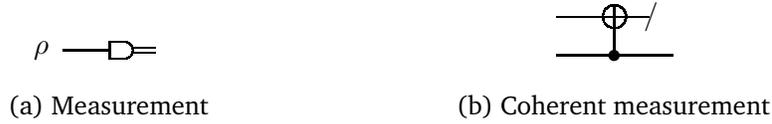
\begin{figure}[H]
\centering
\begin{subfigure}[b]{0.4\textwidth}
  \[
\Qcircuit @C=0.8em @R=0.8em {
  \lstick{\rho}  & \qw & \measureD{}  & \cw  
}
\]
\caption{Measurement}
\end{subfigure}
\begin{subfigure}[b]{0.4\textwidth} \[
\Qcircuit @C=0.8em @R=0.8em {
  & \qw & \targ \qw &  \qw & \lstick{/} \\  
  & \qw & \ctrl{-1} \qw & \qw & \qw 
}
\]
\caption{Coherent measurement}
\end{subfigure}
\label{fig:coherent-measurement}
  \caption{Replacing measurement by coherent measurement}
\end{figure}

Even though this is sufficient to remove the measurements of \Cref{fig:magic},
this will not be sufficient for the simulation, since we cannot guarantee that
the ancilla will be traced out, and it could be hard to keep track of
the entangled values. However, if we are just interested in a
subset $S$ of qubits with $|S| \leq s$ for a fixed $s$, we can repeat the previous
procedure with $s$ ancillas, and in this case, one of such $s+1$ register must
be traced out when simulating on $S$. We depict this in 
\Cref{fig:coherent_measurement}.

\begin{figure}[h]
\centering
$
\Qcircuit @C=0.8em @R=0.8em {
  & \qw   &  \qw & \targ & \qw   & \qw & \qw &  & \qw &  \qw &  \qw &  \\
  & \qw   &  \qw & \qw   & \targ & \qw & \qw &  & \qw & \qw &  \qw &    \\
 & & &  & & &  & \vdots\quad & & & & & & &&&&\\
  & \qw   &   \qw & \qw & \qw & \qw & \qw &  & \qw &\targ & \qw   \\
  & \qw &   \qw & \ctrl{-4} & \ctrl{-3}& \qw & \qw &  & \qw &\ctrl{-1} \qw      & \qw &
  {\inputgroupv{1}{4}{.2em}{2.0em}{\ket{0}^s}}\\
}
$
\caption{Coherent measurement with more ancilla}
\label{fig:coherent_measurement}
\end{figure}
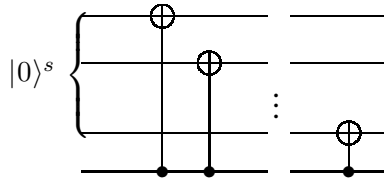

\subsubsection{Simulation of coherent non-transversal encoding}\label{S:simulation-non-transversal}

Starting with \Cref{fig:magic} and making it non-transversal, as defined in
\Cref{S:non-transversal}, and coherent, as defined in \Cref{S:coherent}, we get 
the circuit defined in \Cref{fig:coherent_toffoli}.

\begin{figure}[t]
\centering
$
\Qcircuit @C=0.8em @R=0.8em {
&  & \qw   &  \qw & \targ & \qw   & \qw & \qw &  & \qw & \qw & \qw & \qw & \qw & \qw & \qw & \qw & \qw & & \qw & \qw    \\
&  & \qw   &  \qw & \qw   & \targ & \qw & \qw &  & \qw & \qw & \qw & \qw & \qw & \qw & \qw & \qw & \qw & & \qw & \qw     \\
&  & \qw   &  \qw& \qw    & \qw & \targ & \qw &  & \qw & \qw & \qw & \qw & \qw & \qw & \qw & \qw & \qw & & \qw & \qw    \\
& & &  & & & & & &  & & & & & & & & &&&&&\\
&  & \qw   &   \qw & \qw & \qw & \qw & \qw &  & \qw &\targ & \qw & \qw & \qw &\qw \qw & \qw & \qw & \qw & & \qw &\qw   \\
&  & \qw   &   \qw & \qw & \qw & \qw & \qw &  & \qw & \qw &\targ & \qw & \qw & \qw & \qw &\qw     & \qw & & \qw & \qw     \\
&  & \qw   &   \qw & \qw & \qw & \qw & \qw &  & \qw & \qw & \qw &\targ & \qw & \qw & \qw &\qw     & \qw & & \qw & \qw     \\
  & & & &  & & &  & \vdots\quad& & & & & & & &&& \vdots\quad & & &\\
  & \qw & \multigate{6}{V_0} &   \qw & \ctrl{-8} & \qw & \qw & \qw &  & \qw
  &\ctrl{-4}       & \qw & \qw & \qw  & \ctrl{4} & \qw & \qw &   \qw &  &\qw &  \qw & \\
  & \qw & \ghost{V_0}        &   \qw & \qw & \ctrl{-8} & \qw & \qw &  &  \qw  &
  \qw &\ctrl{-4} & \qw & \qw & \qw  & \ctrl{3} &\qw & \qw &  & \qw &\qw \\
& \qw & \ghost{V_0}&  \qw & \qw & \qw & \ctrl{-8} & \qw &  & \qw  &\qw & \qw  &
  \ctrl{-4} & \qw & \qw & \qw & \ctrl{2} & \qw &  & \ctrl{2} & \qw  \\
& &  &  & & & & & & & & & &&&&&\\
& & \ghost{V_0}&   \qw & \qw & \qw  & \qw & \qw &  & \qw & \qw & \qw & \qw & \qw
  & \multigate{2}{V_1} & \multigate{2}{V_2} & \multigate{2}{V_3} &
  \qw &  &  \multigate{2}{V_m} & \qw \\
&  & \ghost{V_0}&  \qw &\qw & \qw & \qw  & \qw &  & \qw & \qw & \qw & \qw & \qw
  & \ghost{V_1} & \ghost{V_2} & \ghost{V_3} & \qw & &   \ghost{V_m} & \qw \\
&  & \ghost{V_0} & \qw & \qw & \qw & \qw &\qw &  & \qw & \qw & \qw & \qw &\qw &
  \ghost{V_1} & \ghost{V_2} & \ghost{V_3} & \qw &   &  \ghost{V_m}& \qw
\inputgroupv{13}{15}{.2em}{2.0em}{\ket{\Magic_G} \quad } \inputgroupv{1}{7}{.2em}{4.5em}{\ket{0}^{\otimes ks} \quad} 
}
$
\caption{The coherent non-transversal encoding of $G$}
\label{fig:coherent_toffoli}
\end{figure}
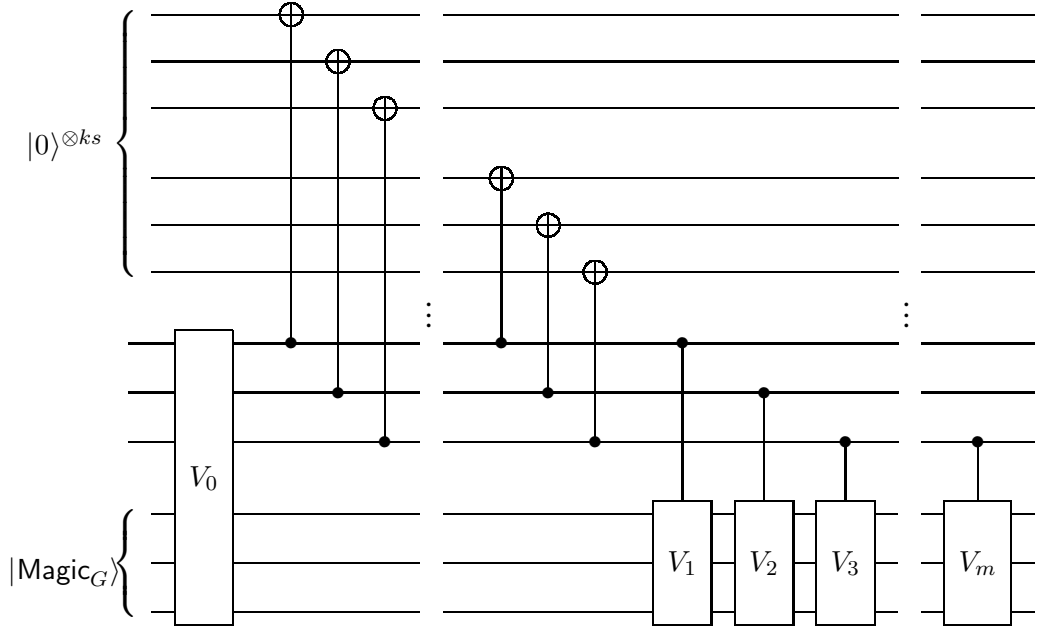

  For simplicity, we denote $U_0, \ldots,U_{m+ks}$ the gates of
  \Cref{fig:coherent_toffoli} such that 
  $U_0 = V_0$, $U_i$ denotes the $\Lambda(X)$ where the control is the $(i
  \pmod{k})$th input qubit and the target is the $i$th ancilla qubit for $1 \leq
  i \leq ks$; and $U_i$ denotes the controlled-$V_{i-ks}$ for $i > ks$.

By the correctness of the $G$-gadget of \Cref{fig:magic}, the following lemma
holds.

\begin{lemma}\label{L:uniform-measurement}
  Let $U_0, \ldots,U_{m+ks}$ be the quantum gates of the coherent non-transversal
  encoding of a $k$-qubit gate $G$. Let us also assume that 
  for any $k$-qubit state $\ket{\psi}$, there are states $\ket{\psi_x}$, $x \in \{0,1\}^k$,
            such that
            \begin{equation*} 
                U_0 \ket{\psi} \ket{\Magic_G}
                    = \sum_{x \in \{0,1\}^k} \frac{1}{\sqrt{2^k}} \ket{x}
                    \ket{\psi_x}.
            \end{equation*}
  It follows that
            \begin{equation*} 
              U_{ks}\cdots U_0\ket{\phi}    = \sum_{x \in \{0,1\}^k} \frac{1}{\sqrt{2^k}} \ket{x}^{\otimes s+1} \ket{\psi_x}
              \quad \text{ and }  \quad 
               U\ket{\phi}
                    = \sum_{x \in \{0,1\}^k} \frac{1}{\sqrt{2^k}}
                    \ket{x}^{\otimes s+1} G\ket{\psi}.
            \end{equation*}
\end{lemma}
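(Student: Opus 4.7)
Set $\ket{\phi} = \ket{0}^{\otimes ks} \otimes \ket{\psi} \otimes \ket{\Magic_G}$, viewed as living on three registers: the $ks$ coherent-measurement ancillas $\sA$, the $k$-qubit input register $\sI$, and the magic-state register $\sR$. The plan is to apply the three ``groups'' of gates in order and to track how each group acts on the decomposition provided by the hypothesis.

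First I would dispose of $U_0 = V_0$, which touches only $\sI\sR$. Since $\sA$ is untouched, the hypothesis immediately gives
\[
    U_0 \ket{\phi} \;=\; \ket{0}^{\otimes ks}_{\sA} \otimes \sum_{x \in \{0,1\}^k} \frac{1}{\sqrt{2^k}} \ket{x}_{\sI} \otimes \ket{\psi_x}_{\sR}.
\]
Next I would analyze the block $U_1, \ldots, U_{ks}$. By the definition in the lemma statement, $U_i$ is a $\Lambda(X)$ whose control is the $(i \bmod k)$-th qubit of $\sI$ and whose target is the $i$-th ancilla qubit of $\sA$ (which is in state $\ket{0}$). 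Since the input register is a superposition of computational basis states $\ket{x}$, applying all $ks$ CNOTs in each branch simply writes $\ket{x}$ into each of the $s$ length-$k$ ancilla blocks, yielding $\ket{x}^{\otimes s}_{\sA}$. By linearity,
\[
    U_{ks} \cdots U_1 U_0 \ket{\phi} \;=\; \sum_{x \in \{0,1\}^k} \frac{1}{\sqrt{2^k}} \ket{x}^{\otimes s}_{\sA} \otimes \ket{x}_{\sI} \otimes \ket{\psi_x}_{\sR} \;=\; \sum_{x} \frac{1}{\sqrt{2^k}} \ket{x}^{\otimes s+1} \ket{\psi_x},
\]
which proves the first claimed identity.

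For the second identity, the remaining gates $U_{ks+1}, \ldots, U_{ks+m}$ are the classically-controlled corrections $V_1, \ldots, V_m$ of the $G$-gadget of Figure~\ref{fig:magic}, but applied coherently: each $V_j$ controlled on classical outcome bit is replaced by a quantum $\Lambda(V_j)$ controlled on the corresponding qubit of $\sI$ (or equivalently of any of the now-identical copies in $\sA\cup\sI$). Because these corrections only touch $\sR$ and use the control qubits of $\sI$ in a read-only (CNOT-like diagonal) fashion, they preserve the computational-basis structure on $\sA\sI$. Hence in each branch $\ket{x}^{\otimes s+1}$, the $V_j$'s act on $\ket{\psi_x}$ exactly as the classical post-measurement corrections $V_j^{x}$ would act after obtaining outcome $x$ in the original gadget. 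By correctness of the $G$-gadget of Figure~\ref{fig:magic}, the composed correction maps $\ket{\psi_x}$ to $G\ket{\psi}$ for every $x$. Summing over branches yields $U \ket{\phi} = \sum_{x} \frac{1}{\sqrt{2^k}} \ket{x}^{\otimes s+1} G\ket{\psi}$, as desired.

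The one subtlety (and the main place to be careful) is the final step: one must verify that the coherent replacement of classical controls by quantum controls genuinely reproduces, branch by branch, the classical correction sequence of the $G$-gadget. This is where the assumed decomposition $U_0\ket{\psi}\ket{\Magic_G} = \sum_x 2^{-k/2}\ket{x}\ket{\psi_x}$ is crucial: because $\ket{x}$ is a computational-basis state on $\sI$, the quantum controls commute with every $\ketbra{x}{x}_{\sI}$, so the standard ``principle of deferred measurement'' applies and the coherent circuit is unitarily equivalent, on each branch, to the measured circuit of Figure~\ref{fig:magic}. Everything else in the proof is bookkeeping of tensor factors.
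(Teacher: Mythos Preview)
Your proposal is correct and follows exactly the approach the paper intends. The paper itself does not give a detailed proof of this lemma at all; it simply asserts that the statement holds ``by the correctness of the $G$-gadget of Figure~\ref{fig:magic}.'' Your write-up spells out precisely the implicit argument: apply $U_0$ and invoke the hypothesis, track the CNOT fan-out to obtain the $\ket{x}^{\otimes s+1}$ copies, and then use the deferred-measurement principle branch by branch to reduce the coherent corrections to the classical gadget corrections, whose correctness gives $\ket{\psi_x}\mapsto G\ket{\psi}$. There is no substantive difference between your route and the paper's; you have simply supplied the details the paper omits.
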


We are now ready to prove that the encoding of the gadget depicted in
\Cref{fig:coherent_toffoli} is simulatable.

\begin{proposition}\label{P:toffolisimulation}
  Let $U_0, \ldots,U_{m+ks}$ be the quantum gates of the coherent non-transversal
  encoding of a $k$-qubit gate $G$.
  Let also $\mcS$ be a $[[n,k,d]]$ stabilizer code that is order-consistent with
  $U_i$, $0 \leq i \leq m+ks$.
  If
  for any $k$-qubit state $\ket{\psi}$, there are states $\ket{\psi_x}$, $x \in \{0,1\}^k$,
            such that
            \begin{equation}\label{eq:random-output} 
                U_0 \ket{\psi} \ket{\Magic_G}
                    = \sum_{x \in \{0,1\}^k} \frac{1}{\sqrt{2^k}} \ket{x}
                    \ket{\psi_x},
            \end{equation}
  then the coherent non-transversal encoding of $G$ is $s$-simulatable
    for any $s < d-k$.
\end{proposition}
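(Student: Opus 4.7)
The plan is to verify Definition \ref{D:simulatable} for every intermediate time $t$ in the sequence $U_0,U_1,\ldots,U_{m+ks}$ and every subset $S$ of at most $s$ physical qubits, by decomposing the circuit of \Cref{fig:coherent_toffoli} into three phases and handling each separately: (i) the initial transversal application of $U_0 = V_0$; (ii) the coherent copy gates $U_1,\ldots,U_{ks}$, which are bitwise $\Lambda(X)$'s between corresponding physical qubits of distinct encoded blocks; and (iii) the non-transversal encoded controlled-$V_i$ gates $U_{ks+1},\ldots,U_{ks+m}$. The key structural resource we exploit is the presence of $s+1$ encoded copies of the $k$-qubit ``classical'' register — one copy produced by $U_0$ and $s$ further copies manufactured in phase (ii) — which by pigeonhole guarantees that for any $|S|\le s$ at least one full encoded copy lies entirely in $\overline{S}$.

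For $t=0$ the state is $\Enc(\rho)\otimes\Enc(\ket{\Magic_G})\otimes\Enc(\ket{0})^{\otimes ks}$, a tensor product of codewords in product codes $\mcS^{\otimes (\cdot)}$; by \Cref{L:paulitrace} it is enough to compute traces against Pauli operators $\Delta_{\cdot}^S(w)$ with $|S|\le s<d-k<d$, and by \Cref{L:productnormalizer,L:stabilizerinnerproduct} these are $0$ or $1$ depending only on whether the operator lies in the stabilizer, independently of $\rho$. For phases (i)--(ii), at any intermediate time the sequence of applied gates consists of Clifford gates that act on disjoint sets of physical qubits and contain at most one physical qubit from each encoded logical qubit, so \Cref{L:bitwisecliffords} (via \Cref{P:transversalsim}) furnishes the required polynomial-time simulator directly.

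For phase (iii), fix $t>ks$ and a subset $S$ with $|S|\le s$. Partition the physical qubits into $s+1$ encoded copies $C_0,C_1,\ldots,C_s$ of the $k$-qubit classical register together with the target registers. By pigeonhole pick $j^*$ with $C_{j^*}\cap S=\emptyset$. Using hypothesis \eqref{eq:random-output}, order-consistency, and \Cref{L:uniform-measurement}, the state at time $t$ has the form
\[
\ket{\phi_t} \;=\; \sum_{x\in\{0,1\}^k}\frac{1}{\sqrt{2^k}}\,\Enc(\ket{x})_{C_{j^*}}\otimes \ket{\chi^{(t)}_x},
\]
where $\ket{\chi^{(t)}_x}$ describes the remaining registers on branch $x$. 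Tracing out $C_{j^*}$ and using $\ip{\Enc(x)}{\Enc(y)}=\delta_{xy}$ yields the classical mixture $\tr_{C_{j^*}}(\kb{\phi_t}) = \sum_x 2^{-k}\,\kb{\chi^{(t)}_x}$. For each fixed $x$, the state $\ket{\chi^{(t)}_x}$ is obtained by applying, on top of encoded codewords on $\bigotimes_{i\neq j^*}C_i$ and the target register, a partial non-transversal encoding of a single classically-controlled unitary $\Lambda(V_i)$ with fixed classical control value. This is exactly the object handled by \Cref{L:controlled-transversal}, whose hypothesis is satisfied because $|S|\le s<d-k$. Computing the partial trace branch-by-branch and averaging with weights $2^{-k}$ then assembles $\tr_{\overline{S}}(\kb{\phi_t})$ in polynomial time.

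The main obstacle is the bookkeeping in phase (iii): we must argue that the register $C_{j^*}$ is completely ``inert'' under gates that occur after it has been written — i.e., that all subsequent controlled-$V_i$ gates in the coherent encoding use control qubits from some single other copy $C_{j}$, $j\ne j^*$ — so that the factorization $\Enc(\ket{x})_{C_{j^*}}\otimes\ket{\chi^{(t)}_x}$ really persists through the remainder of the circuit. Once this is verified against the explicit structure of \Cref{fig:coherent_toffoli}, the reduction of a coherent non-transversal encoding to an iterated classically-controlled one via tracing out a redundant copy is the heart of the proof, and the slack $s<d-k$ (rather than $s<d$) is precisely what purchases the room needed to invoke \Cref{L:controlled-transversal}.
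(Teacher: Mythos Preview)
Your proposal is correct and follows essentially the same three-phase decomposition as the paper: phases (i)--(ii) are handled via the transversal Clifford simulation of \Cref{P:transversalsim}, and phase (iii) via the pigeonhole on the $s+1$ encoded copies followed by \Cref{L:controlled-transversal} applied branch-by-branch over $x$. Your identification of the slack $s<d-k$ as the price of invoking \Cref{L:controlled-transversal} is exactly right.

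The one difference is stylistic: you trace out the untouched copy $C_{j^*}$ to obtain the classical mixture over $x$, whereas the paper retains all $s{+}1$ copies in the state $\Enc\bigl((\kb{x})^{\otimes s+1}\otimes\rho_x\bigr)$ and uses the untouched copy only to argue that the off-diagonal terms $x\neq x'$ vanish under $\tr_{\overline{S}}$. The paper's formulation sidesteps your ``main obstacle'' more directly: since $C_0$ is never discarded, the control block for the partial non-transversal encoding of $\Lambda(V_i)$ is always explicitly present as $\Enc(\kb{x_c})$ with a known bit $x_c$, and \Cref{L:controlled-transversal} applies without any need to check whether the pigeonholed copy $C_{j^*}$ coincides with the control copy $C_0$. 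In your formulation that check is genuinely needed (if $|S|=s$ hits every ancilla copy then $j^*=0$ is forced, and during a partial encoding $C_0$ is entangled with the target), so the paper's ``keep all copies'' bookkeeping is the cleaner way to close the gap you flagged.
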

\begin{proof}
    Let $\rho$ be a the logical $k$-qubit state on whose encoding we want to apply the
    logical gate $G$ and let  $\sigma = 
\ket{0}\bra{0})^{\otimes ks} \otimes \rho \otimes
            \ket{\Magic_G}\bra{\Magic_G}$
    Suppose that each $U_i$ above is encoded by operations $O_{r_i},\ldots,O_{r_{i+1}-1}$,
    where $1 = r_0 < r_1 < \ldots < r_{ks+m}=\ell+1$.  In order to show that the
    code is simulatable, we need to show how to compute
    \begin{align}
      &\tr_S( O_{t} \cdots O_1 \Enc\left(
        \sigma
            \right) (O_{t} \cdots O_1)^\dagger )  \nonumber \\ 
            &=
            \tr_S( (O_{t} \cdots O_{r_i+1}) \Enc\left( U_{i-1} \cdots U_0 
            \sigma
                (U_{i-1} \cdots U_0)^\dagger \right) (O_{t} \cdots
                O_{r_i+1})^\dagger),
\label{Eq:partialtoffoli}
    \end{align}
    for $r_i \leq t \leq
    r_{i+1}$ and $0 \leq i \leq ks + m$.

    By Proposition \ref{P:transversalsim}, if $0 \leq i \leq ks$, we can compute the partial trace in
    \Cref{Eq:partialtoffoli}
    for any $|S| < d$ without knowledge of $\rho$ in polynomial time in
    $|S|$ and $(a_1,...,a_k)$, since $U_i$ can be
    applied transversally  .

    It remains to compute the partial trace in Equation
    \eqref{Eq:partialtoffoli} when $ks < i \leq ks + m$. 
    From \Cref{eq:random-output} and \Cref{L:uniform-measurement}, we have
    \begin{align*}
        \tr_{\overline{S}}& \left(O_t \cdots O_1 \Enc(\sigma) (O_t \cdots O_1)^{\dagger} \right) \\
            & = \sum_{x \in \{0,1\}^k} \frac{1}{2^k} \tr_{\overline{S}} \left( O_t \cdots O_{r_{ks}} 
                    \Enc\left((\ket{x}\bra{x})^{\otimes s+1} \otimes \rho_x \right) 
                    (O_t \cdots O_{r_{ks}})^\dagger \right).
    \end{align*}
    for some collection of states $\rho_x$ and we implicitly used the fact that
    since  $|S| \leq s$ and there are $s+1$ registers 
    containing the encoding of $x$, then there must be a register with an encoding of
    $x$ that does not overlap with $S$.

    For a fixed $x$, 
    we can continue to simplify, writing 
    \begin{align*}
      &\tr_{\overline{S}} \left( O_t \cdots O_{r_{ks}} 
                    \Enc\left((\ket{x}\bra{x})^{\otimes s+1} \otimes \rho_x \right) 
                    (O_t \cdots O_{r_{ks}})^\dagger \right) \\
                & = \tr_{\overline{S}} \left( O_t \cdots O_{r_{i}} 
                    \Enc\left(U_{r_{i-1}} \cdots U_{r_{ks}}(\ket{x}\bra{x})^{\otimes s+1} 
                    \otimes \rho_x (U_{r_{i-1}} \cdots U_{r_{ks}})^\dagger\right) 
                    (O_t \cdots O_{r_{i}})^\dagger \right) \\
                & = \tr_{\overline{S}} \left( O_t \cdots O_{r_{i}} 
                    \Enc((\ket{x}\bra{x})^{\otimes s+1} 
                    \otimes \rho'_x) (O_t \cdots O_{r_{i}})^\dagger \right)
    \end{align*}
    for some state $\rho'_x$, where the equalities follow because $\mcS$ is
    order-consistent with all $V_i$ and  $U_{r_{ks}}, \ldots, U_{r_{i-1}}$ are
    only control gates on the first $k(s+1)$ logical qubits.
    
    Thus we only need to compute
    \begin{equation*}
        \tr_{\overline{S}} \left(O_t \cdots O_{r_i}) \Enc((\ket{x}
        \bra{x})^{\otimes s+1} \otimes \rho'_x)(O_t \cdots O_{r_i})^{\dagger}
            \right)
    \end{equation*}
    for every (known) $x \in\{0,1\}^k$ and (unknown) states $\rho'_x$, and this
    can be done efficiently from \Cref{L:controlled-transversal} if $|S| \leq s
    < d - k$. In this case, 
    \begin{equation*}
        \tr_{\overline{S}} \left(O_t \cdots O_1 \Enc((\ket{0}\bra{0})^{\otimes ks} \otimes \rho
            \otimes \ket{\Magic_G}\bra{\Magic_G}) (O_t \cdots O_1)^{\dagger} \right) 
    \end{equation*}
    is a linear combination of a constant number of matrices which can be computed without knowledge
    of $\rho$, in time polynomial in $S$ and $2^k$. Hence the encoding
    is $s$-simulatable. 
\end{proof}

\subsection{Explicit construction}
\label{sec:explicit-simulatable}

In this section, we show that the family of concatenated
    Steane codes is simulatable. We start by defining the
    (concatenated) Steane code in \Cref{sec:concatenated-steane} and then in
    \Cref{sec:steane_simulatable} we argue that it is simulatable, proving
    \Cref{T:simulatable}.

\subsubsection{Steane code}
\label{sec:concatenated-steane}

The Steane code is the $[[7,1,3]]$ code stabilized by the following generators:

\begin{table}[H]
\centering
\begin{tabular}{|c c c c c c c|}
\hline
I &  I &  I & X &  X &  X &  X \\ \hline
I &  X &  X & I &  I &  X &  X \\ \hline
X &  I &  X & I &  X &  I &  X \\ \hline
I &  I &  I & Z &  Z &  Z &  Z \\ \hline
I &  Z &  Z & I &  I &  Z &  Z \\ \hline
Z &  I &  Z & I &  Z &  I &  Z \\ \hline
\end{tabular}
  \label{tab:stabilizer-seven-encoding}
  \caption{Generators of the Steane code}
\end{table}

In particular, the encoding of the basis states are then
\begin{multline}\label{Eq:codeword0}
  \Enc(\ket{0}) = \frac{1}{2\sqrt{2}}(
     \ket{0000000} + \ket{1010101} + \ket{0110011} + \ket{1100110} \\
    +\ket{0001111}+ \ket{1011010} + \ket{0111100} + \ket{1101001}),
\end{multline}
and
\begin{multline}\label{Eq:codeword1}
  \Enc(\ket{1}) =  \frac{1}{2\sqrt{2}}(
     \ket{1111111} + \ket{0101010} + \ket{1001100} + \ket{0011001} \\
    +\ket{1110000}+ \ket{0100101} + \ket{1000011} + \ket{0010110}),
\end{multline}
and the encoding of an arbitrary one-qubit pure state $\ket{\psi}$ and
one-qubit mixed state $\rho$ are defined by linearity.

We define the concatenated Steane codes by letting $\Steane_1 = \Steane$ denote
the $[[7,1,3]]$ Steane code, and
setting $\Steane_K = \Steane_{K-1} \circ \Steane$, a $[[7^K,1,3^K]]$ code.
We notice that the concatenated Steane code can also be expressed in the stabilizer formalism.
Suppose we have constructed a stabilizer code $\mcS_j$ for $\Steane_{j}$ for all $1
\leq j \leq K-1$. Let $g_1,\ldots,g_6 \in \mcS_1$ be the stabilizers for $\Steane$, and let $h_1,\ldots,h_{m-1}$ be a
minimal generating set for $\mcS_{K-1}$, where $m=7^{K-1}$. To encode the $m$ physical
qubits of $\Steane_{K-1}$ using $\Steane$, we form the stabilizer $\mcS_1^{\otimes m}$.
Define a homomorphism $\overline{\Delta} : \mcP_{m} \arr \mcP_{7m}$ by sending
$\Delta_{1,m}^i(W) \mapsto \Delta_{7,7m}^i(WWWWWWW)$ for each $W \in \mcP_1$
(in other words, we replace $W$ in the $i$th position with $WWWWWWW$, so $Z$
gets replaced with $\overline{Z}$, etc.).
The operator $\overline{\Delta}(w)$ gives an encoding of $w$ in $\Steane$, so
we can express $\Steane_K$ in the stabilizer formalism by taking the
stabilizer code $\mcS_K$ with minimal generating set
\begin{equation*}
    \{\Delta_{7,7m}^i(g_j), 1 \leq i \leq m, 1 \leq i \leq 6\} \cup \{\overline{\Delta}(h_j) :
        1 \leq j \leq m-1\}.
\end{equation*}

$\Steane_K$
contains stabilizers of weight $4$ for all $K \geq 1$, so while $\Steane$ is
nondegenerate, $\Steane_K$ is degenerate for $K \geq 2$.

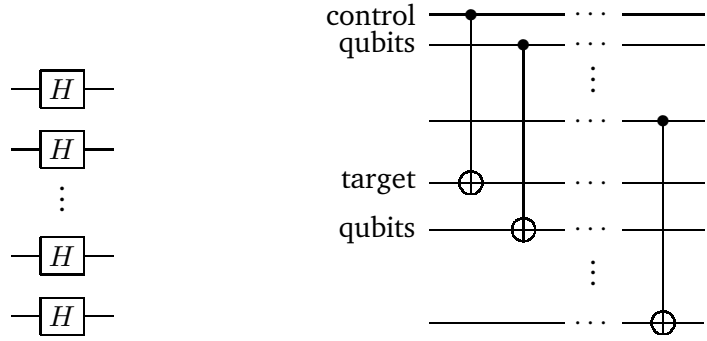
\begin{figure}[t]
    \centering
    \begin{subfigure}[b]{0.4\textwidth}
        \begin{equation*}
        \Qcircuit @C=1em @R=.8em {
            & \gate{H} & \qw & \\
            & \gate{H} & \qw & \\
            & \vdots & & \\
            & & & \\
            & \gate{H} & \qw & \\
            & \gate{H} & \qw & \\
        }
        \end{equation*}
        \caption{$H$ on encoded single-qubit state}
    \end{subfigure}
    \begin{subfigure}[b]{0.4\textwidth}
        \begin{equation*}
        \Qcircuit @C=1em @R=.8em {
            \lstick{\text{control}} & \ctrl{6} & \qw      & \qw & & \lstick{\cdots} & \qw      & \qw & \\ 
            \lstick{\text{qubits}}  & \qw      & \ctrl{6} & \qw & & \lstick{\cdots} & \qw      & \qw & \\
                                    &          &          &     & \vdots &          &          &     & \\
            && &&&&&& \\
                                    & \qw      & \qw      & \qw & & \lstick{\cdots} & \ctrl{6} & \qw & \\ 
            && &&&&&& \\
            \lstick{\text{target}}  &   \targ  & \qw      & \qw & & \lstick{\cdots} & \qw      & \qw & \\
            \lstick{\text{qubits}}  & \qw      & \targ    & \qw & & \lstick{\cdots} & \qw      & \qw & \\
                                    &          &          &     & \vdots &          &          &     & \\
            && &&&&&& \\
                                    &  \qw     & \qw      & \qw & & \lstick{\cdots} & \targ    & \qw & \\ 
        }
        \end{equation*}
        \caption{$\Lambda(X)$ on encoded two-qubit state}
    \end{subfigure}
    \caption{$\Steane_K$ encodings of $H$ and $\Lambda(X)$}
    \label{F:transversalencoding}
\end{figure}
Both $H$ and $\Lambda(X)$ have transversal encodings in $\Steane_K$, that is,
encodings where $H$ (resp. $\Lambda(X)$) is applied to each physical qubit
(resp. each pair of physical qubits). More formally, $H(a)$ (the Hadamard gate
acting on the $a$th logical qubit) is encoded by a sequence of gates
$O_1(a),\ldots,O_\ell(a)$, where $\ell = 7^K$, and $O_i(a)$ is the Hadamard
gate acting on the $i$th physical qubit of the $a$th logical qubit.  The gate
$\Lambda(X)(\underline{a})$ is encoded as a sequence of gates
$O_1(\underline{a}),\ldots,O_{\ell}(\underline{a})$, where $\ell = 7^K$, and
$O_i(a_1,a_2)$ is the $\Lambda(X)$ gate with the control on the $i$th physical
qubit of the $a_1$th logical qubit, and the $X$ gate on the $i$th physical
qubit of the $a_2$th logical qubit. In both cases, no ancilla states are necessary.
These encodings are shown in Figure \ref{F:transversalencoding}.

In order to complete a universal gateset, we show now how to apply Toffoli gates
using Clifford operations and Toffoli magic states.
The Toffoli magic state is the three-qubit
state prepared by the following circuit:

$$
    \Qcircuit @C=1em @R=.7em {
      \lstick{\ket{0}}  & \gate{H}   & \ctrl{1} & \qw\\
      \lstick{\ket{0}}  & \gate{H} & \ctrl{1} & \qw \\
      \lstick{\ket{0}} & \qw         & \targ & \qw
      }
$$
Explicitly, the state is \[\ket{\Toffoli} = \frac{1}{2} \sum_{d,e}
\ket{d,e,de}.\]

\begin{figure}[H]
\centering
$
\Qcircuit @C=0.8em @R=0.8em {
& \qw & \targ & \qw & \qw & \qw & \measureD{Z} & \cw & \cw & \cw & \cw & \cw & \cw & \cctrl{4} & \cctrl{5}\\
& \qw & \qw & \targ & \qw & \qw & \measureD{Z} & \cw & \cw & \cw &\cw &  \cctrl{3} &\cctrl{4} \\
& \qw & \qw & \qw & \ctrl{4} & \qw & \measureD{X} & \cw & \cctrl{2} & \cctrl{4} \\
& & & & & & & & & & & & & & & & &\\
& & \ctrl{-4} & \qw & \qw & \qw & \qw & \qw & \ctrl{1} & \qw & \qw & \ctrl{2} & \qw & \gate{X} &\qw & \qw \\
&  & \qw & \ctrl{-4} & \qw & \qw & \qw &\qw & \gate{Z} & \qw &\qw &  \qw & \gate{X} & \qw & \ctrl{1} & \qw \\
&  & \qw & \qw & \targ & \qw & \qw & \qw & \qw & \gate{Z} & \qw & \targ & \qw &\qw  &\targ & \qw
\inputgroupv{5}{7}{.2em}{2.0em}{\ket{\Toffoli} \quad }
}
$
\caption{The Toffoli gadget}
\label{fig:toffoli}
\end{figure}
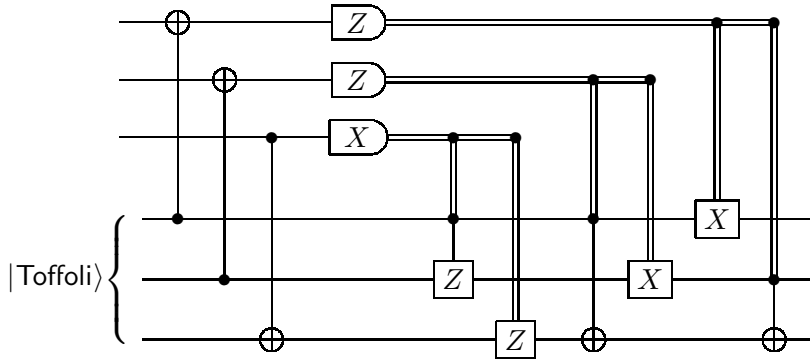

The Toffoli gate can be implemented using the Toffoli magic state, measurements
of $Z$ and $X$ observables, and classically controlled Clifford gates, using
the circuit shown in Figure \ref{fig:toffoli} (the double wires in this circuit
are classical bits).

\subsubsection{Concatenated Steane code is simulatable}
\label{sec:steane_simulatable}

We are now ready to prove \Cref{T:simulatable}.

\begin{proof}[Proof of Theorem \ref{T:simulatable}]
  Let $K > \log(s + 3)$, and let us consider the $\Steane_K$ code. We have that
  it is is a $[[7^K, 1, 3^K]]$ stabilizer code, and since $s$ is constant, so is
  $K$ and $m = 7^K$. 

As discussed previously, the gates $H$ and $\Lambda(X)$ can be implemented transversally in $\Steane_K$, and the Toffoli
gates can be applied using $\ket{\Toffoli}$ magic-state and the circuit in
\Cref{fig:toffoli}. We remark that in such a gadget,  we need to apply
controlled $H$ and $\Lambda(X)$ gates, and $\Steane_K$ is order-consistent with
such gates. Finally, it is not hard to calculate that the measurement result of
\Cref{fig:toffoli} are uniformly random bits for all input states $\rho$.  In
this case, all of the assumptions of
\Cref{P:transversalsim,P:toffolisimulation} are attained, and therefore
  $\Steane_K$ is a $s$-simulatable code.
\end{proof}

 \appendix
\section{Differences between~\cite{FitzsimonsJVY18} and the zero knowledge protocol $V_{ZK}$}
\label{sec:differences}

The structure and format of the protocol $V_{ZK}$ is essentially the same as the protocols that arise from protocol compression in~\cite{FitzsimonsJVY18}. However, we list the few differences and provide explanations for why the soundness of the protocol is unaffected by these changes.

\begin{enumerate}
\item The outer code for $V_{ZK}$ is a $4$-qubit error detecting code, instead of the $7$-qubit Steane code. As mentioned in Section~\ref{sec:prelim}, the soundness analysis of~\cite{FitzsimonsJVY18} only requires two properties from the outer code, and those properties are satisfied by the $4$-qubit error detecting code. This is why we are able to have fewer additional provers than in the protocol compression result of~\cite{FitzsimonsJVY18}.

\item The questions to the verifier players in $V_{ZK}$ are six tuples of commuting two-qubit Pauli observables, whereas in~\cite{FitzsimonsJVY18} they are triples. This is because the verifier players are in charge of measuring the clock qubits as well as the snapshot qubits, whereas in~\cite{FitzsimonsJVY18} the clock measurements were delegated to a different set of players. However the clock measurements are also just Pauli measurements, so we can simply merge the snapshot and clock measurements together.

\item In the protocol $V_{ZK}$, the referee may ask questions $QF_i$ or $AF_i$ to prover player $PP_i$, which does not occur in the compression protocol of~\cite{FitzsimonsJVY18}. The referee will ask these questions when it decides to check the propagation of the gate at time $t_\star(i) - 1$ (in which case it will send question $QF_i$ to $PP_i$), or at time $t_\star(i) + 1$ (in which case it will send question $AF_i$ to $PP_i$). The check performed by the referee is the identical to that when it tests the propagation of the prover gate $\star_i$. 

For completeness, in the honest strategy the prover player $PP_i$ measures a $\sigma_X$ on a designated ``question flag'' register (when asked question $QF_i$), or measures a $\sigma_X$ on a designated ``answer flag'' register (when asked question $AF_i$). 

Soundness is unaffected. If there was no valid history state before the addition of the $QF$ and $AF$ questions, then there is no valid history state with them.

\end{enumerate}

\bibliographystyle{plain}
\bibliography{references}

\end{document}